\def\UseBibLatex{1}
\providecommand{\SoCGVer}[1]{}%
\providecommand{\NotSoCGVer}[1]{#1}%
\def\input@path{{lipics/}{../lipics/}}
\providecommand{\SoCGVer}[1]{#1}%
\providecommand{\NotSoCGVer}[1]{}%
\providecommand{\BibLatexMode}[1]{}
\providecommand{\BibTexMode}[1]{#1}
  \renewcommand{\BibLatexMode}[1]{}
  \renewcommand{\BibTexMode}[1]{#1}
  \renewcommand{\BibLatexMode}[1]{#1}
  \renewcommand{\BibTexMode}[1]{}
\newcommand{\SarielComp}[1]{}
\newcommand{\NotSarielComp}[1]{#1}%
\newcommand{\SarielComp}[1]{#1}%
\newcommand{\NotSarielComp}[1]{}%
\newcommand{\IfPrinterVer}[2]{#2}%
\DeclareFontFamily{U}{BOONDOX-calo}{\skewchar\font=45 }
\DeclareFontShape{U}{BOONDOX-calo}{m}{n}{
  <-> s*[1.05] BOONDOX-r-calo}{}
\DeclareFontShape{U}{BOONDOX-calo}{b}{n}{
  <-> s*[1.05] BOONDOX-b-calo}{}
\DeclareMathAlphabet{\mathcalb}{U}{BOONDOX-calo}{m}{n}
\SetMathAlphabet{\mathcalb}{bold}{U}{BOONDOX-calo}{b}{n}
\DeclareMathAlphabet{\mathbcalb}{U}{BOONDOX-calo}{b}{n}
\numberwithin{figure}{section}%
\numberwithin{table}{section}%
\numberwithin{equation}{section}%
\theoremstyle{plain}%
\newtheorem{theorem}{Theorem}[section]
\newtheorem{lemma}[theorem]{Lemma}
\newtheorem{fact}[theorem]{Fact}
\theoremstyle{plain}%
\newtheorem*{remark:unnumbered}[theorem]{Remark}%
\newtheorem{remark}[theorem]{Remark}%
\newtheorem{defn}[theorem]{Definition}
\newcommand{\myqedsymbol}{\rule{2mm}{2mm}}
\theoremstyle{nonumberplain}%
\newtheorem{proof}{Proof:}%
\definecolor{blue25emph}{rgb}{0, 0, 11}
\providecommand{\emphic}[2]{%
   \textcolor{blue25emph}{%
      \textbf{\emph{#1}}}%
   \index{#2}}
\providecommand{\emphi}[1]{\emphic{#1}{#1}}
\definecolor{almostblack}{rgb}{0, 0, 0.3}
\providecommand{\emphw}[1]{{\textcolor{almostblack}{\emph{#1}}}}%
\newcommand{\atgen}{\symbol{'100}}
\newcommand{\SarielThanks}[1]{\thanks{Department of Computer Science;
      University of Illinois; 201 N. Goodwin Avenue; Urbana, IL,
      61801, USA; {\tt sariel\atgen{}illinois.edu}; {\tt
         \url{http://sarielhp.org/}.} #1}}
\newcommand{\StavThanks}[1]{%
   \thanks{Department of Computer Science;
      University of Illinois; 201 N. Goodwin Avenue; Urbana, IL,
      61801, USA; {\tt stava2\atgen{}illinois.edu}; {\tt
         \url{https://publish.illinois.edu/stav-ashur}.} #1}}
\newcommand{\HLink}[2]{\hyperref[#2]{#1~\ref*{#2}}}
\newcommand{\HLinkSuffix}[3]{\hyperref[#2]{#1\ref*{#2}{#3}}}
\newcommand{\figlab}[1]{\label{fig:#1}}
\newcommand{\figref}[1]{\HLink{Figure}{fig:#1}}
\newcommand{\thmlab}[1]{{\label{theo:#1}}}
\newcommand{\thmref}[1]{\HLink{Theorem}{theo:#1}}
\providecommand{\deflab}[1]{\label{def:#1}}
\newcommand{\remlab}[1]{\label{rem:#1}}
\newcommand{\remref}[1]{\HLink{Remark}{rem:#1}}%
\newcommand{\itemlab}[1]{\label{item:#1}}
\newcommand{\itemref}[1]{\HLinkSuffix{}{item:#1}{}}
\newcommand{\lemlab}[1]{\label{lemma:#1}}
\newcommand{\lemref}[1]{\HLink{Lemma}{lemma:#1}}%
\newcommand{\seclab}[1]{\label{sec:#1}}
\newcommand{\secref}[1]{\HLink{Section}{sec:#1}}
\providecommand{\eqlab}[1]{}%
\renewcommand{\eqlab}[1]{\label{equation:#1}}
\providecommand{\remove}[1]{}%
\newcommand{\Set}[2]{\left\{ #1 \;\middle\vert\; #2 \right\}}
\newcommand{\pth}[2][\!]{\mleft({#2}\mright)}%
\newcommand{\ceil}[1]{\left\lceil {#1} \right\rceil}
\newcommand{\cardin}[1]{\left| {#1} \right|}%
\renewcommand{\th}{th\xspace}
\renewcommand{\Re}{\mathbb{R}}%
\newlist{compactenumA}{enumerate}{5}%
\setlist[compactenumA]{topsep=0pt,itemsep=-1ex,partopsep=1ex,parsep=1ex,%
   label=(\Alph*)}%
\newlist{compactenuma}{enumerate}{5}%
\setlist[compactenuma]{topsep=0pt,itemsep=-1ex,partopsep=1ex,parsep=1ex,%
   label=(\alph*)}%
\newlist{compactenumI}{enumerate}{5}%
\setlist[compactenumI]{topsep=0pt,itemsep=-1ex,partopsep=1ex,parsep=1ex,%
   label=(\Roman*)}%
\newlist{compactenumi}{enumerate}{5}%
\setlist[compactenumi]{topsep=0pt,itemsep=-1ex,partopsep=1ex,parsep=1ex,%
   label=(\roman*)}%
\newlist{compactitem}{itemize}{5}%
\setlist[compactitem]{topsep=0pt,itemsep=-1ex,partopsep=1ex,parsep=1ex,%
   label=\bullet}%
\newcommand{\etal}{\textit{et~al.}\xspace}
\newcommand{\Arr}{\Mh{\mathop{\mathrm{\EuScript{A}}}}}%
\newcommand{\Term}[1]{\textsf{#1}}
\newcommand{\BSP}{\Term{BSP}\xspace}%
\newcommand{\VC}{\Term{VC}\xspace}%
\newcommand{\LP}{\Term{LP}\xspace}%
\newcommand{\ULP}{\Term{ULP}\xspace}%
\newcommand{\eps}{\varepsilon}
\providecommand{\Mh}[1]{#1}%
\newcommand{\Undecided}{Undecided\xspace}
\newcommand{\pp}{\Mh{p}}%
\newcommand{\pq}{\Mh{q}}%
\newcommand{\pd}{\Mh{\mathcalb{d}}}%
\newcommand{\pc}{\Mh{c}}%
\newcommand{\PS}{\Mh{P}}%
\newcommand{\PSA}{\Mh{T}}%
\newcommand{\LS}{\Mh{L}}%
\newcommand{\QS}{\Mh{Q}}%
\newcommand{\BSet}{\Mh{B}}%
\newcommand{\hplane}{\Mh{h}}%
\newcommand{\hsp}{\Mh{\alpha}}%
\newcommand{\hspA}{\Mh{\beta}}%
\newcommand{\HSet}{\Mh{H}}%
\newcommand{\RangeSpace} {\Mh{\mathsf{S}}}
\newcommand{\RangeSet}{{\Mh{\mathcal{R}}}}
\newcommand{\SetC}{\Mh{N}}%
\newcommand{\PgSet}{\Mh{\mathcal{D}}}%
\newcommand{\Polygon}{\Mh{D}}%
\newcommand{\EdgesX}[1]{\Mh{E}\pth{#1}}%
\newcommand{\ArrX}[1]{\Arr\pth{#1}}%
\newcommand{\VVX}[1]{\Mh{\mathcal{V}}\pth{#1}}%
\newcommand{\VV}{\Mh{\mathcal{V}}}%
\newcommand{\interiorX}[1]{\mathop{\mathrm{int}}\pth{#1}}%
\newcommand{\Body}{\Mh{\mathcal{C}}}%
\newcommand{\PT}{\Mh{\mathcal{K}}}%
\newcommand{\PTX}[1]{\Mh{\EuScript{I}}\pth{#1}}%
\newcommand{\tldO}{\scalerel*{\widetilde{O}}{j^2}}%
\newcommand{\CHX}[1]{\Mh{\mathcal{CH}}\pth{#1}}
\newcommand{\CSet}{\Mh{\mathcal{C}}}%
\newcommand{\Cutting}{\Mh{\Xi}}%
\newcommand{\Simplex}{\Mh{\nabla}}
\newcommand{\Line}{\Mh{\ell}}
\newcommand{\seg}{\Mh{s}}%
\newcommand{\kopt}{\Mh{\ensuremath{\mathcalb{k}}}\xspace}%
\newcommand{\NN}{\textsf{NN}\xspace}%
\newcommand{\FN}{\textsf{FN}\xspace}%
\newcommand{\polylog}{\mathrm{polylog}}%
\newcommand{\diskY}[2]{\Mh{\mathrm{disk}}\pth{#1, #2}}%
\newcommand{\mDiskY}[1]{\Mh{\mathrm{\vartheta}}\pth{#1}}%
\newcommand{\mToPlaneX}[1]{\Mh{\mathrm{\varphi}}\pth{#1}}%
\newcommand{\ball}{\Mh{\mathcalb{b}}}
\newcommand{\qball}{\Mh{\mathcalb{q}}}
\newcommand{\GroundSet}{\Mh{\textsf{X}}}%
\newcommand{\FGroundSet}{\Mh{\mathsf{x}}}%
\newcommand{\MeasureChar}{\overline{m}}
\newcommand{\range}{\Mh{\mathbf{r}}}
\newcommand{\MeasureX}[1]{\MeasureChar\pth{#1}}
\newcommand{\sMeasureX}[1]{\overline{s}\pth{#1}}
\newcommand{\sMeasureY}[2]{\overline{s}^{}_{#2}\pth{#1}}
\newcommand{\I}{\Mh{\mathcal{I}}}%
\newcommand{\Dim}{\Mh{\delta}}%
\newcommand{\BadProb}{\Mh{\varphi}}%
\providecommand{\si}[1]{#1}
\newcommand{\Family}{\Mh{\mathcal{F}}}%
\newcommand{\FamilyX}[1]{\Mh{\mathcal{F}}\pth{#1}}%
\newcommand{\Sample}{\Mh{R}}%
\newcommand{\sSize}{\Mh{\nu}}%
\newcommand{\Turan}{Tur\'{a}n\xspace}%
\newcommand{\naively}{na\"{i}vely\xspace}
\newcommand{\permut}[1]{\left\langle {#1} \right\rangle}
\newcommand{\avgX}[1]{\mathrm{avg}\pth{#1}}%
\providecommand{\IntRange}[1]{\mleft\llbracket #1 \mright\rrbracket}
\newcommand{\IRX}[1]{\IntRange{#1}}%
\newcommand{\epsH}{\Mh{\tau}}%
\newcommand{\prjX}[1]{#1_{\downarrow}}
\newcommand{\TriSet}{\Mh{\Xi}}
\title{On Undecided LP, Clustering and Active Learning}
   \author{%
      Stav Ashur%
      \StavThanks{}%
      \and%
      Sariel Har-Peled%
      \SarielThanks{Work on this paper was partially supported by a
         NSF AF award CCF-1907400.  }%
   }%
   \author{Stav Ashur}%
   {Department of Computer Science, University of Illinois, 201
      N. Goodwin Avenue, Urbana, IL 61801, USA}%
   {stava2@illinois.edu}%
   {https://orcid.org/0000-0003-0533-8978}%
   {}%
   \author{Sariel Har-Peled}%
   {Department of Computer Science, University of Illinois, 201
      N. Goodwin Avenue, Urbana, IL 61801, USA}%
   {sariel@illinois.edu}%
   {https://orcid.org/0000-0003-2638-9635}%
   {Work on this paper was partially supported by a NSF AF award
      CCF-1907400.}%
   \authorrunning{S. Ashur and S. Har-Peled} %
   \keywords{Linear Programming, Active learning, Classification} }
\begin{document}

\maketitle

\begin{abstract}
    We study colored coverage and clustering problems. Here, we are
    given a colored point set where the points are covered by
    (unknown) $\kopt$ clusters, which are monochromatic (i.e., all the
    points covered by the same cluster, have the same color). The
    access the colors of the points (or even the points themselves) is
    provided indirectly via various queries (such as nearest neighbor,
    or separation queries). We show that if the number of clusters is
    a constant, one can correctly deduce the color of all the points
    (i.e., compute a monochromatic clustering of the points) using a
    polylogarithmic number of queries.

    We investigate several variants of this problem, including
    \emph{Undecided Linear Programming}, covering of points by $\kopt$
    monochromatic balls, covering by $\kopt$ triangles/simplices, and
    terrain simplification. For the later problem, we present the
    first near linear time approximation algorithm. While our
    approximation is slightly worse than previous work, this is the
    first algorithm to have subquadratic complexity if the terrain has
    ``small'' complexity.
\end{abstract}

\section{Introduction}

Given a set of points $\PS$ in $\Re^d$ that are labeled (say, colored
as red and blue), the problem of learning a classifier that labels the
points correctly is a standard machine learning question. In the
active learning settings, querying/exposing the label of an input is
an expensive endeavor, and one tries to minimize such queries while
performing the learning task.

We are interested in a somewhat related question: If the input point
set has a ``simple'' structure, but we are given access to the input
via oracles that performs more ``interesting'' queries than just
exposing the label of a point, can one classify correctly all the
input points using relatively few oracle queries?

\paragraph*{Implicit input model.}

Consider the situation where instead of the algorithm reading the
input as in the classical setting, the access to the input is via
\emph{input primitives}, or \emph{oracles}. Such indirect access to
the data rises naturally if the data is already stored in a
preexisting database or data-structure. This approach is of relevance
nowadays as large amount of data makes even the basic task of reading
the whole input infeasible or prohibitively expensive.

This gives rise to the main motivation for this work -- what input
primitives/oracles one needs, so that one can derive efficient
algorithms. Here, of special interest are algorithms with running
times that are sublinear in the input size.

\paragraph*{Problem I: \Undecided linear programs.}
An instance of linear programming is a set of $n$ linear inequalities
on $d$ variables, where one needs to find an assignment of real values
to the variables, such that all the inequalities hold.  We consider a
new variant of \LP, first studied by Maass and \Turan
\cite{mt-clcmq-90}, where the $n$ linear constraints are given, but we
do not know a priori whether the inequality is $\leq$ or $\geq$ for
each one of them. Geometrically, this corresponds to being given $n$
hyperplanes in $\Re^d$, each having two closed halfspaces associated
with it. Which of the two halfspaces is the one used in the \LP can be
revealed by querying an oracle. For example, the ``standard''
\emph{separation oracle}, which returns for a given query point $\pp$,
a violated constraint of the \LP, or alternatively returns that $\pp$
is feasible. An example of \ULP (i.e., undecided \LP) is depicted in
\figref{undecided_LP_example}.

\begin{figure}[ht]
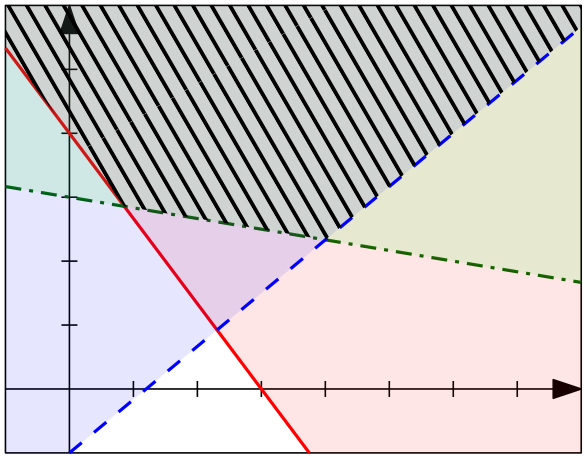
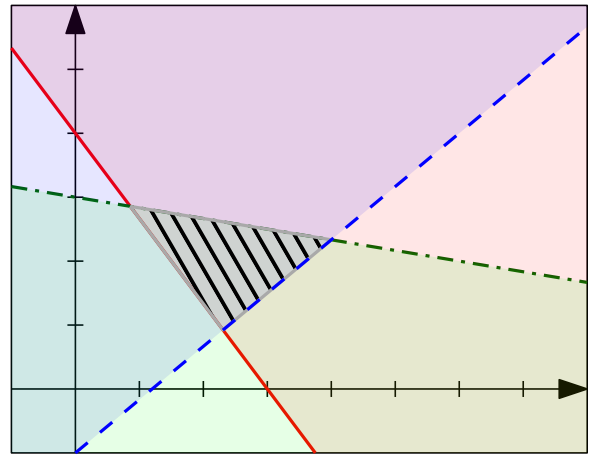

    \begin{tabular}{cc}
      \begin{minipage}{0.46\linewidth}
          \smallskip%
          \centering \includegraphics[width=0.9\textwidth, page=3]%
          {figs/undecided_l_p}%

          \smallskip%
      \end{minipage}
      &
        \begin{minipage}{0.46\linewidth}
            \centering \includegraphics[width=0.6\textwidth, page=4]%
            {figs/undecided_l_p}%

        \end{minipage}
      \\
      (i) A set of ``undecided'' constraints in 2d...
      &%
        (ii) ... is  a set of lines in the plane.
      \\[0.3cm]
      \begin{minipage}{0.46\linewidth}
          \centering \includegraphics[width=0.9\textwidth, page=1]%
          {figs/undecided_l_p}%
      \end{minipage}
      &
        \begin{minipage}{0.46\linewidth}
            \centering \includegraphics[width=0.9\textwidth, page=5]%
            {figs/undecided_l_p}%
        \end{minipage}
      \\
      \begin{minipage}{0.46\linewidth}
          \medskip
          
          \textbf{(iii)} A possible commitment of the constraints, and
          the feasible polygon induced.  \smallskip
      \end{minipage}
      &
        \begin{minipage}{0.46\linewidth}
            \medskip
            
            \textbf{(iv)} An alternative commitment of the underlying
            constraints, and the induced polygon.
        \end{minipage}
    \end{tabular}
    
    \caption{An instance of 3 undecided constraints (bottom) with two
       possible sets of underlying decided constraints (bottom left and
       right).}
    \figlab{undecided_LP_example}
\end{figure}

\paragraph*{Problem II: Separating red and blue points, with a %
   counterexample oracle.}
The above problem, in the dual, is the following: The input is a set
of unlabeled points, and the task is to compute a hyperplane
separating the blue points from the red points. The ``separation''
oracle here, is given an oriented hyperplane that is supposed to
separate the points, and the oracle returns a misclassified point.  A
natural question is how many queries of this type one has to perform
until classifying all the points correctly.

\paragraph*{The learning model.}  Our model seems to be Angluin's
equivalence query model for active learning \cite{a-qcl-87}. In
particular, Maass and \Turan \cite{mt-hwtgl-94, mt-albolgc-94} studied
the above two problems.  See \remref{m:t} for more details.

\paragraph*{Problem III: Covering/clustering points with a ball %
   using proximity oracle.}

Consider the situation where the input is a set of colored points in
$\Re^d$, where all the (say) red points are inside a ball, and all the
points outside the ball are blue.  Here, our access to the color of
the points is via an oracle that can answer colored nearest-neighbor
(\NN) or furthest-neighbor (\FN) queries (that is, the oracle can
return the closest red [or blue] point to the query point). The task
at hand is to label (i.e., color) the points correctly using a minimal
number of oracle queries.

\paragraph*{The challenge.}
To appreciate the difficulty in solving the above problem, consider
the natural naive algorithm -- pick a random sample, expose the colors
of the points in the sample (in this case the colored \NN queries can
do that), compute a ball separating the red points and blue points in
the sample, and feed it into the ``counterexample'' oracle (which
returns a colored point that is on the wrong side of the ball -- this
oracle can be implemented using the \NN/\FN queries). The algorithm
adds this counterexample to the current set of points whose color is
known. Now the algorithm repeats the process finding an updated
separating circle for the points whose colors are known.  This
algorithm does arrive to the right answer, but it is easy to come up
with examples where it has to do a linear number of iterations. As
such, the challenge is to get a sublinear number of iterations.

\paragraph*{Problem IV: Covering points by monochromatic balls.}

Consider the situation where the input is a set of points that can be
covered by $\kopt$ balls (i.e., clusters) such that all the points
covered by the same ball have the same label/color (i.e., red or
blue). In this setting, given a query point, the oracle returns the
closest point of a prespecified color.  See \figref{k_disks_example}
for an example.

\begin{figure}[ht]
    \centering
    \begin{minipage}{.48\textwidth}
        \centerline{%
           \includegraphics[width=\textwidth, page=1]{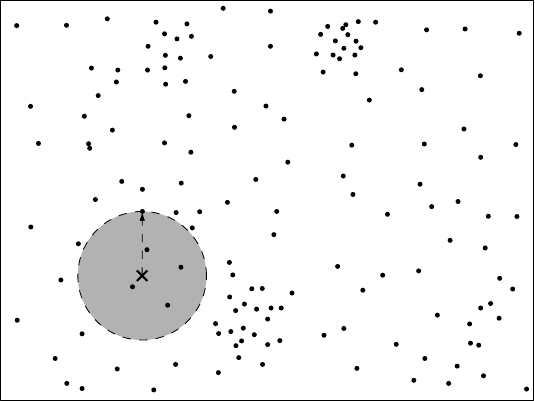}%
        }%

        \smallskip%
        (A) \NN(blue) oracle query marked by an ``x'' reveals that the
        points in the interior of the disk are red, and the returned
        point is blue.
    \end{minipage}%
    \hfill
    \begin{minipage}{.48\textwidth}
        \centerline{%
           \includegraphics[width=\textwidth,page=6]{figs/k_disks}%
        }

        \smallskip%
        (B) {The same query with the underlying colors of the points,
           and an optimal solution with $\kopt=12$ monochromatic
           disks.}
    \end{minipage}
    \caption{An instance of Problem IV.}
    \figlab{k_disks_example}
\end{figure}

\paragraph*{Problem V: Covering points with monochromatic triangles.}

The input is a set of unlabeled points that can be covered by $k$
triangles. All the points covered by the same triangle, have the same
label/color (i.e., red or blue). The oracle, given a triangle and a
color, returns either that (i) all the points covered by the triangle
have the same color, or (ii) a point covered by the triangle is of the
opposite color.

\paragraph*{Problem VI: Terrain simplification.}

Here, the input is a set of points in three dimensions sampled from a
terrain. For each sample point we have an associated interval in the
$z$-axis of length 2$\eps$ with the point in its center, that
represents the tolerated approximation error.  The problem is then to
compute a terrain made out of a minimal number of triangles, that
passes through all the vertical error tolerance intervals of the
points. This problem can be interpreted as a set cover problem with
planar triangles, where a triangle is allowable only if it can be
lifted to three dimensions so that it approximates the points lying in
it correctly.

\begin{figure}
    \centering%
    \begin{tabular}{|c|c|c|c|}
      \hline
      dim
      & ref
      & RT
      & \# queries
      \\
      \hline\hline                             
      $d=2$
      &
        \thmref{ulp:2:fast}%
      &
        $O(n )$ & $O( \log n) \Bigr.$ %
      \\
      \hline
      \hline
      $d=3$
      &
        \lemref{ulp:3:d:3/2}%
      &
        $\tldO(n^{3/2})$ & $O( \log n) \Bigr.$%
      \\
      \hline
      &
        \thmref{ulp:3:d}%
      &
        $\tldO(n^{1+\delta})$ & $O( \delta^{-1} \log n) \Bigr.$   \\
      \hline
      \hline
      $d$
      &
        \begin{minipage}{0.18\linewidth}
            \centering \smallskip

            \cite{mt-hwtgl-94, mt-albolgc-94}
            
            \lemref{ulp:h:d}%

            \smallskip
        \end{minipage}
      &
        $n^{O(d^2)}$ & $O(d^2 \log n)\Bigr.$ %
      \\
      \hline
      &
        \lemref{ulp:h:d}%
      & $\tldO\pth{n^{d}} $ & $O(d^3 \log n)\Bigr.$ \\
      \hline
      &
        \lemref{cutting}%
      & $O_d(n)$ & $O_d( \log^d n)\Bigr.$ \\
      \hline
      &
        \thmref{u:d:lp:cutting}%
      & $O_d(n )$ & $O_d( \log^{d-1} n)\Bigr.$ \\
      \hline
      \hline
      $d > 3$
      &
        \remref{ulp:d:b:3}%
      & $\tldO\pth{n^{1+\delta}} $ & $O_d( \log^{d-2} n)\Bigr.$
      \\
      \hline
      \hline
    \end{tabular}
    \caption{A summary of the results on \ULP. Here $\delta$ can be
       chosen to be any constant in $(0,1)$. The $O_d$ hides constants
       that depends (probably exponentially or worse) on the
       dimension.}
\end{figure}

\paragraph*{Related work.}

Linear programming has a long history, see the survey by
\cite{dmw-lp-04}.

In \emphw{active learning}, also known as query learning or
experimental design, the purpose of the algorithm is learning a
concept by querying specific input entries for their label, and the
main criteria for efficiency is minimizing the number of queries.  The
basic premise is that asking a specialist to label a specific example
is an expensive operation. See \cite{s-alls-09} for a survey on the
topic of active learning.

Closer to our settings, Har-Peled \etal \cite{hjr-alcbl-20}, studied
algorithms for actively learning a convex body using a separation
oracle. Such an oracle either confirms that that the query point is
within the convex body, or alternatively, returns a hyperplane
separating the point and the convex region.

Kane \etal \cite{klmz-accq-17} studied half plane classifiers using
comparison queries, and showed an exponential query complexity
improvement over learning with only membership queries. Their model is
somewhat similar to the model of Problem II above (of separating red
and blue points), except that their model assumes that the oracle can
return the distance of a query point to the optimal separating
hyperplane, while our model only assumes that the oracle can identify
a misclassified point.

In general, computational models that involve various oracles as
algorithmic building blocks have been studied in computational
geometry, as they represent algorithms in which the input is given
implicitly, and access to any information provided by the input is
done by oracle queries. See Har-Peled \etal \cite{hkmr-seps-16} and
references therein for such examples.

As mentioned above, Maass and \Turan \cite{mt-hwtgl-94, mt-albolgc-94}
studied Problems I and II -- our results are better, but only in
constant dimensions, see \remref{m:t} for details.

\begin{figure}
    \centering
    \begin{tabular}{*{3}{|c}|l|}
      \hline
      Cover size
      &
        Terrain size
      & Running time
      & ref\\
      \hline\hline
      $O(\kopt \log \kopt)$
      &
        $O(\kopt \log \kopt)$
      &
        $O(n^8) \Bigr.$
      &
        Agarwal and Suri \cite{as-sagp-98}     \\       %
      \hline
      $O(\kopt \log \kopt)$
      &
        $O(\kopt^2 \log^2 \kopt)$
      &
        $O(n^{2+\eps} + \kopt^3 \polylog )\Bigr.$
      &
        Agarwal and Desikan \cite{ad-eats-97}\\
      \hline
      $O(\kopt \log n)$
      &
        $O(\kopt^2 \log^2 n)$
      &
        $O\pth{n  \cdot \kopt^{14} \log^{13} \kopt \log n}\Bigr.$
      &
        \thmref{t:simp}\\
      \hline
    \end{tabular}
    \caption{Results on terrain simplification.}
    \figlab{terrains}
\end{figure}

\paragraph{Previous work on terrain simplification.}
The input is a set $\PS$ of $n$ points in three dimensions, and a
vertical error parameter $\epsH>0$. The purpose is to return a terrain
of minimum complexity, such that each point of $\PS$ is in vertical
distance at most $\epsH$ from the terrain. In the following, we assume
that there is such a terrain made out of $\kopt$ triangles.

As mentioned above, one can interpret this problem as a hitting set or
a set cover problem. Here, the task is to compute a collection of
triangles, such that for any point of $\PS$, the vertical line through
it intersects at least one triangle, and all the triangles it
intersects are in vertical distance at most $\epsH$ from the point.

Agarwal and Suri \cite{as-sagp-98} presented an algorithm that
computes, in roughly $O(n^8)$ time, a terrain approximation with
$O( \kopt \log \kopt )$ triangles -- their basic scheme is based on
recovering via dynamic programming a \BSP partition of the optimal
vertical decomposition.  A faster algorithm was presented by Agarwal
and Desikan \cite{ad-eats-97}, with running time
$O(n^{2+\eps} + \kopt^3 \polylog )$, that outputs a set of
$O( \kopt \log \kopt )$ triangles that approximates the point set.
Converting this set of triangles into terrain, results in a simplified
terrain of size $O( \kopt^2 \log^2 \kopt )$.  See \figref{terrains}
for summary of these results.

\subsection{Our results}

\begin{compactenumA}
    \item \textsc{Undecided \LP.}  We present several algorithms for
    solving undecided \LP{}s. In \secref{naive:u:l:p}, we revisit the
    algorithm of Maass and \Turan{}, showing \ULP{}s can be solved
    using $O(d^2 \log n)$ oracle queries. The main tool is repeatedly
    computing a centerpoint and feeding it to the oracle to further
    truncate the search space. This bound is polynomial in $d$, but
    the runtime of algorithm itself is doubly exponential in the
    dimension $d$. The running time can be improved to (roughly)
    $O(n^d)$, with the number of queries deteriorating to
    $O(d^3 \log n)$.

    In \secref{u:l:p:cutting} we present a linear time algorithm for
    constant dimension that uses cutting, but the number of separation
    oracle queries is now $O( \log^d n)$.  In \secref{u:l:p:plane} we
    show that in the plane, in near linear time, one can reduce the
    number of queries to $O( \log n)$.

    \medskip
    \item \textsc{Optimal algorithm for Undecided LP in the plane}.
    In \secref{u:l:p:2:d:linear:time}, we present the ultimate
    algorithm in the plane, that runs in linear time, and performs
    only logarithmic number of queries. This requires a careful
    combination of the previous algorithm, together with random
    sampling of the constraints, and an iterative refinement
    algorithm. This result is one of the highlights of this paper.

    \medskip%
    
    \item \textsc{Covering (red) points by a single ball.}
    In \secref{single:ball} we study Problem III, and present an
    algorithm that uses $O( \log^2 n )$ colored \NN/\FN queries, and
    computes the single ball that covers (say) all the red points, and
    avoids all the blue points. The algorithm works by lifting the
    input point set to three dimensions, and then using the algorithm
    for undecided \LP.

    \smallskip%
    \item \textsc{Covering points by $\kopt$ monochromatic balls.}  In
    \secref{multi:ball} we address Problem IV above, where the input
    is covered by \kopt monochromatic balls, and we have access to a
    colored \NN oracle. Inspired by the one ball case, we show a
    greedy algorithm that finds a ball that covers $O(1/\kopt)$
    faction of the uncovered points. This leads to an algorithm that
    performs $O(\kopt^{d+2} \log^{d+2} n )$ queries (see
    \thmref{main}) and correctly classifies all the points.

    \smallskip%
    \item \textsc{Covering points by $\kopt$ monochromatic triangles.}
    We are given (i) a colored set $\PS$ of $n$ points in the plane,
    (ii) an oracle that can report, given a triangle $\triangle$,
    whether or not the color of all points inside $\triangle \cap \PS$
    are the same (and if so, report the color), and (iii) an oracle,
    such that given a partial cover by triangles, randomly returns an
    uncovered point.  We present an algorithm that computes a cover of
    $\PS$ by $O(\kopt \log n)$ monochromatic triangles, under the
    assumption that such a cover exists, in $k^{O(1)} \log n $ time,
    see \thmref{cover:by:triangles} for the exact bounds. This
    algorithm is significantly sublinear if $k \ll n$. If we are not
    given the oracles, but instead are provided with the colors of the
    points, the oracles can be implemented, so that the resulting
    algorithm works in $O(n k^{O(1)} \log n )$ time.
    
    \smallskip%
    \item \textsc{Terrain simplification.}  The above algorithm
    applies verbatim for the problem of terrain simplification,
    implying a $O(n \kopt^{O(1)} \log n )$ time algorithm for a cover
    of the given point set (up to the error parameter provided) by
    $O(\kopt \log n)$ triangles. This can be converted into a
    simplified terrain of complexity $O(\kopt^2 \log^2 n)$.  All
    previous work \cite{as-sagp-98, ad-eats-97} had quadratic or
    (much) worse running time.  See \secref{terrain} for details.
\end{compactenumA}

\paragraph{Technical contribution.}

For the undecided \LP our algorithms require ways to sample vertices
of an arrangement of lines/planes/hyperplanes that lies inside a
polytope (that has relatively small complexity). In two dimensions we
do so by using known techniques. To get a linear time algorithm
requires a nontrivial combination of random sampling together with
gradation type refinement approach.  In three dimensions we use a
rather interesting over-sampling idea that has the potential to be
useful in other settings. Even in three dimensions the problem of how
to sample such a vertex efficiently (i.e., linear time) remains an
interesting open problem.

Another contribution of this work is a refinement of the notion of
canonical ranges.

\begin{figure}
    \noindent{}
    \phantom{}%
    \hfill
    \includegraphics[page=3]{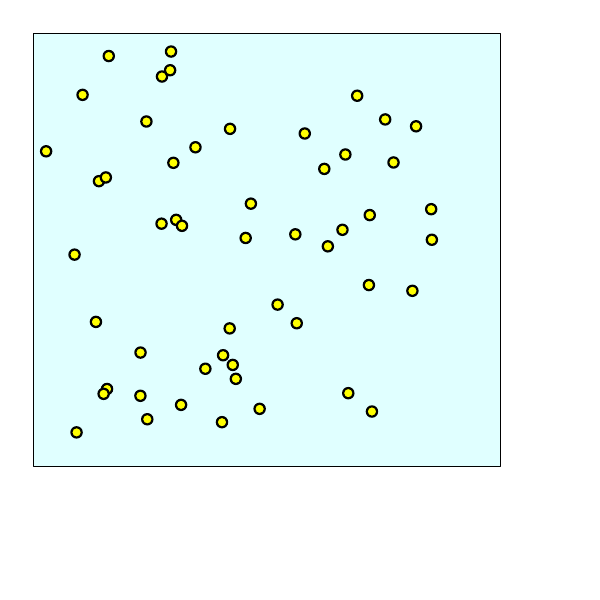}
    \hfill
    \includegraphics[page=4]{figs/canonical_sets}
    \hfill%
    \phantom{}%
    \caption{A triangle covering only red points. The natural
       canonical triangle, that is induced by the subset of the points
       inside the original triangle, is not contained inside the
       original triangle (see yellow regions in the right pane).}
    \figlab{canonical}
\end{figure}

\paragraph{Beyond canonical sets.}

Underlying this work is the dichotomy between ranges, the set of input
points they contain (i.e., the projection of the ranges to the set of
points), and the ranges that are defined by the original set of
points.  As a concrete example, consider a set $\PS$ of points in the
plane. A triangle $\triangle$ contains the subset of the points
$\PS \cap \triangle$. Our purpose is to recover $\triangle$, or at
least a triangle that contains the same subset of points of $\PS$.
For example, assume that all the points of $\PS \cap \triangle$ are
red, and the points of $\PS \setminus \triangle$ are blue, and we have
access to the points and their colors via some oracle.

The standard approach is via \emph{canonical triangles} -- the idea
being that one takes the original triangle, and morph it continuously
into a new triangle, so that it contains the same subset of points
(except maybe for a few points on the boundary). The new triangle
(i.e., the canonical triangle corresponding to the original triangle)
is now fully constrained/defined by the points on the boundary. Under
general position assumption, such a triangle would have six points on
its boundary, which implies that the number of different subsets of
$\PS$ that can be realized by triangles is $O(n^6)$.

The problem, for our purposes, is that the resulting canonical
triangle is not contained inside the original triangle.  This issue is
illustrated in \figref{canonical}.  This issue becomes critical for
our purposes, as we can not afford to enumerate all canonical sets
directly, because their number is prohibitive. Instead, we are
interested in the triangles induced by a (small) subset/sample of the
original points. To have any hope to get a query efficient algorithm,
we need a way for the sample to induce triangles that are contained
inside the original (unknown!) triangle, while containing a large
fraction of the sample points.

To this end, we define a new process of defining alternative triangles
-- which we refer to as \emph{self-defined} triangles.  The newly
defined triangle have the desired property that they are strictly
contained inside the unknown original triangle, while containing a
significant fraction of the points contained inside it.  See
\lemref{2k-gon} and \lemref{1:10} for details.

This issue might seem as tedium, but it is the technical insight
needed to ``unlock'' random sampling and to be able to use it for
terrain approximation. This results in the first near linear time
algorithm for terrain simplification, as all previous work required
quadratic (or worse) running time. We believe our approach should be
useful for other problems.

The idea of covering canonical sets by a few sets, that belong to a
special family of sets (which is much smaller than the original family
of canonical sets) is by now standard -- see for example the work of
Aronov \etal \cite{aes-ssena-10}. Despite the superficial similarly,
this is distinct from the self-defined sets/triangles we introduce.

\section{Algorithms for solving undecided linear %
   programs}
\seclab{undecided:L:P}

\begin{remark}
    All the algorithms described below for undecided \LP work in the
    same fashion -- they generate a sequence of separation oracle
    queries. Specifically, if any of the query points are feasible,
    the algorithm immediately stops and outputs that the \ULP is
    feasible with the queried point as a proof. For the clarity of
    description in the following, we always assume a separation oracle
    returns a separating hyperplane.
\end{remark}

\subsection{Centerpoint based algorithm for solving %
   \ULP{}s}
\seclab{naive:u:l:p}

We review the algorithm of Maass and \Turan \cite{mt-hwtgl-94,
   mt-albolgc-94} for solving undecided \LP (they did not provide
running time bounds for their algorithm, which we do).

Observe that for a set of $d$-dimensional (closed) hyperplanes
$\HSet$, if there exists a feasible point, then, under general
position assumption, there exists a vertex of the arrangement
$\Arr(H)$ that is feasible.

Let $\PS$ be a set of $n$ points in $\Re^d$.  For a parameter
$\alpha \in (0,1)$, a point $\pc \in \Re^d$ is an
\emphi{$\alpha$-centerpoint} if all halfspaces containing $\pc$ also
contain at least $\alpha n$ points of $\PS$.  A classical implication
of Helly's theorem, is that for any set $\PS$ of $n$ points in
$\Re^d$, there is a $1/(d+1)$-centerpoint. Such a point is simply a
\emphi{centerpoint} of $\PS$, and can be computed in $O(n^{d-1})$ time
\cite{jm-ccfps-94,c-oramt-04}.  A $2/d^2$-centerpoint can be computed
in near linear time \cite{hj-jcps-19} (here, the running time is
polynomial in $d$).

\paragraph*{The algorithm.}  %
The input is a set $\HSet$ of $n$ hyperplanes in $d$ dimensions.  The
first step of the algorithm is to compute the set $\PS$ of vertices of
the arrangement $\Arr(H)$. The number of such vertices is
$\leq \binom{n}{d}$. As long as $\PS$ has more than $d^2\log n$
points, the algorithm computes a centerpoint $\pc$ of $\PS$. The
algorithm then queries the separation oracle on $\pc$ in order to
decide whether $\pc$ is feasible. If it is, then the algorithm is
done, as it computed a feasible point. Otherwise, the oracle returned
a violated constraint of the given \LP (this also provides the
algorithm with the direction of the constraint for the associated
input hyperplane). The algorithm removes all the points of $\PS$ that
violate this constraint, and repeats until $|\PS| = O(d^2 \log
n)$. Once $\PS$ is that small, the algorithm simply checks the
feasibility of each of the remaining vertices by querying it with the
separation oracle.

\begin{lemma}
    \lemlab{ulp:h:d}%
    The above algorithm computes a feasible point of $\HSet$ using
    $O( d^2 \log n)$ separation oracle queries. The running time of
    this algorithm, ignoring the oracle calls, is $O( n^{d(d-1)})$
    time.

    Alternatively, the algorithm can be modified so that it computes a
    feasible point using $O( d^3 \log n)$ separation oracle
    queries. The running time of this algorithm, ignoring the oracle
    calls, is $\tldO( n^{d})$ time, where $\tldO$ hides
    polylogarithmic terms.

\end{lemma}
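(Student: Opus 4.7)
The plan is to establish both correctness and the claimed complexity bounds via the centerpoint-shrinking argument that the algorithm uses. First I would verify \textbf{correctness}: under a general position assumption, if the undecided \LP is feasible then the feasible region is a nonempty polyhedron whose vertices are themselves vertices of $\Arr(H)$. Whenever the algorithm queries a centerpoint $\pc$ that is infeasible, the oracle returns a violated constraint; this reveals the true orientation of the associated input hyperplane, so every feasible point of the \LP must lie in the closed halfspace opposite to $\pc$. It is therefore safe to discard all surviving arrangement vertices on $\pc$'s side, and no feasible vertex of $\Arr(H)$ is ever lost. Once $\cardin{\PS}$ shrinks to $O(d^2 \log n)$, querying each remaining vertex in turn either finds a feasible point or certifies infeasibility.

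Next I would \textbf{bound the number of oracle queries}. Starting from $\cardin{\PS_0} \le \binom{n}{d} \le n^d$, if each iteration prunes an $\alpha$-fraction of $\PS$ (by the centerpoint property: every halfspace containing $\pc$ covers at least $\alpha \cardin{\PS}$ points, and these are precisely the vertices that get discarded when the oracle declares $\pc$ infeasible), then after $k$ iterations $\cardin{\PS_k} \le n^d (1-\alpha)^k$. Solving $n^d (1-\alpha)^k \le d^2 \log n$ yields $k = O((d \log n)/\alpha)$. With the classical $\alpha = 1/(d+1)$ centerpoint this is $O(d^2 \log n)$ iterations, and the final brute-force sweep over the $O(d^2 \log n)$ surviving vertices adds another $O(d^2 \log n)$ queries, establishing the first bound. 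Replacing the centerpoint by the near-linear-time $2/d^2$-centerpoint of \cite{hj-jcps-19} gives $\alpha = 2/d^2$ and hence $O(d^3 \log n)$ queries, establishing the query count of the second variant.

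For the \textbf{running time}, enumerating the vertices of $\Arr(H)$ takes $O(n^d)$ time. In the first variant, computing a centerpoint of $N$ points in $\Re^d$ costs $O(N^{d-1})$ by \cite{jm-ccfps-94,c-oramt-04}; summing $\cardin{\PS_i}^{d-1}$ over the iterations gives a geometric series with ratio $(1-1/(d+1))^{d-1}$ bounded away from $1$, so the total cost is dominated by the first iteration and is $O((n^d)^{d-1}) = O(n^{d(d-1)})$. In the second variant, each centerpoint takes near-linear time in $\cardin{\PS_i}$ with only polynomial-in-$d$ overhead, and the geometric sum acquires an extra factor $O(d^2)$ because the per-step shrinkage is only $1 - 2/d^2$; this yields $\tldO(n^d)$ time overall.

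The main technical care is the \textbf{accounting}: the iteration count must be tied to the initial size $\cardin{\PS_0} = O(n^d)$ rather than $O(n)$, and the running-time geometric series in the first variant must use the shrinkage factor raised to the $(d-1)$-th power, otherwise the bound will be off. A secondary subtlety is the general position assumption, in particular the possibility of an unbounded feasible region admitting no vertex; this is handled in the standard way by intersecting the problem with a large axis-parallel bounding box formed from $2d$ additional hyperplanes whose orientations are known a priori and which do not affect the asymptotic counts.
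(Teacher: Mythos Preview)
Your proposal is correct and follows essentially the same approach as the paper: compute the vertex set of $\Arr(H)$, repeatedly query a centerpoint, prune the $\alpha$-fraction of surviving vertices on the infeasible side, and bound the number of iterations by $O((d\log n)/\alpha)$; the running time is governed by a geometric series dominated by the first centerpoint computation, and the second variant swaps in the $O(1/d^2)$-centerpoint of \cite{hj-jcps-19}. Your write-up is in fact slightly more careful than the paper's (you make the geometric ratio $(1-1/(d+1))^{d-1}$ explicit and note the bounding-box fix for unbounded feasible regions), but there is no substantive difference in the argument.
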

\begin{proof}
    Let $m_0 = \binom{n}{d}$ be the number of vertices in $\Arr(H)$.
    Every centerpoint computation and the following oracle call
    reduces the number of vertices by a factor of $1-1/(d+1)$.  As
    such, after $O( (d+1) \log_2 \binom{n}{d}) = O( d^2 \log n )$
    iterations the algorithm is done.

    Computing all of the vertices of $\Arr(H)$ takes $O(n^d)$
    time. Computing a centerpoint takes $O(n^{d(d-1)})$ time
    \cite{jm-ccfps-94,c-oramt-04}. Repeating this $d+1$ times, reduces
    the number of active points by a constant factor, and the overall
    running time is thus bounded by
    $\sum_{i=0}^{\lg n} O(d (n/2^i)^{d(d-1)}) = O(d n^{d(d-1)})$.

    As for the modified algorithm, let $\PS$ be the set of vertices of
    the arrangement $\Arr(H)$. Let $m_0 = \cardin{\PS}$.  In the
    $i$\th iteration, we use the algorithm of Har-Peled and Jones
    \cite{hj-jcps-19} that computes a $O(1/d^2)$-centerpoint of $\PS$
    in $d^{O(1)} \log^6 n$ time, with high probability. We now use
    this centerpoint in the above algorithm instead of the exact
    $1/(d+1)$-centerpoint. The bounds stated readily follow, as it
    takes $O(d^2)$ iterations to halve the number of active points,
    and the bottleneck in the running time of the algorithm is
    computing $\PS$ initially.
\end{proof}

\subsection{Cutting based algorithm}
\seclab{u:l:p:cutting}

Here, we present the new algorithm to solve undecided \LP using
cuttings.

\begin{defn}
    For a set of $n$ $d$-dimensional hyperplanes $\HSet$, a
    \emphi{${1}/{r}$-cutting} of $\HSet$ is a partition $\Cutting$ of
    $\Re^d$ into $O(r^d)$ simplices, such that the interior of each
    simplex of the cutting intersects at most $n/r$ hyperplanes of
    $\HSet$.  The list of hyperplanes intersecting the interior of a
    simplex $\Simplex \in \Cutting$, is the \emphi{conflict list} of
    $\Simplex$.
\end{defn}

A $1/r$-cutting, and its conflict list can be computed in
$O( nr^{d-1})$ time \cite{h-gaa-11}.

\begin{remark}
    \remlab{oracle:all}%
    The algorithm we present next call itself recursively on subsets
    of the constraints, and on lower dimensional subspaces.  In
    particular, the oracle can be applied to any lower dimensional
    affine subspace $F$ by using the original oracle in the ambient
    space -- a returned constraint can be intersected with $F$ to get
    a constraint in $F$. We emphasize that the oracle always works on
    the whole original input set of constraints -- the recursive calls
    on subsets of the constraints are done for efficient bookkeeping,
    and do not effect how the oracle works.
\end{remark}

\paragraph*{The Algorithm.}

The algorithm computes a $1/r$-cutting of $\HSet$. Let $\Cutting$ be
this set of simplices, and let $\Cutting_{d-1}(\Cutting)$ be the set
of $O(r^d)$ $(d-1)$-dimensional simplices that form the faces of the
simplices of $\Cutting$. The algorithm now solves the problem
recursively on each of the $(d-1)$-dimensional simplices of
$\Cutting_{d-1}(\Cutting)$, and the hyperplanes of $\HSet$ that
intersects it.  Each recursive call is on a problem that is one
dimensional lower, and involves only $n/r$ constraints. The oracle
still applies to the whole set of constraints, see
\remref{oracle:all}.

If any of these recursive calls finds a feasible point, then we are
done. Otherwise, the recursive calls performed involved calls to the
oracle, and forced some of the constraints to expose themselves. Let
$\CSet$ be the set of these committed halfspaces (i.e., all the
halfspaces returned by the separation oracle). If the intersection of
all these constraints is empty (i.e., the associated \LP is
infeasible), then this can be discovered, in $O(|\CSet|)$ time, by
invoking a standard \LP solver on $\CSet$. If the \LP is feasible,
then it returns us a point $\pp$ inside the polytope
\begin{equation*}
    \PT = \bigcap_{h^+ \in \CSet} h^+.
\end{equation*}
Furthermore, this polytope must be fully contained in the interior of
one of the simplices of $\Cutting$ (otherwise, the algorithm would
have found a feasible point in one of the recursive calls). By
scanning $\Cutting$, we discover the simplex $\Simplex \in \Cutting$
that contains $\pp$. The algorithm now call recursively on $\Simplex$
and its conflict list (passing $\CSet$ as the current set of committed
constraints).

\begin{figure}[ht]
    \begin{tabular}{cc}
      \includegraphics[page=1]%
      {figs/cutting_algorithm}
      & \includegraphics[page=2]{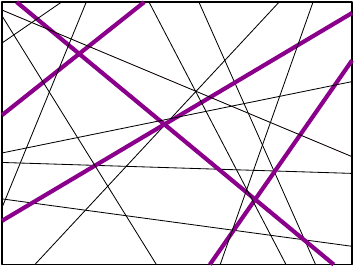}\\
      \begin{minipage}{0.44\linewidth}
          Cells of a cutting of an arrangement of undecided
          constraints.
      \end{minipage}
      &
        \quad
        \quad
        \begin{minipage}{0.44\linewidth}
            A single cell of the cutting might contain a possibly
            feasible region in its interior.
        \end{minipage}
    \end{tabular}
    \caption{Illustration of an iteration of the cutting-based
       algorithm. A cutting of an arrangement of undecided constraints
       is depicted on the left.  After solving the problem recursively
       on the cuttings of the edges, some of the constraints are now
       committed. The feasible region (the red polygon) induced by the
       constraints is a polygon that must be fully contained in one of
       the cells of the cutting.}
    \figlab{cuttingAlg}
\end{figure}

\paragraph*{Algorithm in one dimension.}

The 1-dimensional case is solved using a binary search. We have $n$
uncommitted rays on the real line, and our purpose is to find an
atomic interval that is feasible. To this end, the algorithm computes
a median among the points defined by the rays, and then asks the
oracle to commit the direction of the appropriate ray. This decreases
the number of potential atomic intervals that might be feasible by
half. At the end of the process, the algorithm is left with a single
atomic interval that might be feasible -- the algorithm asks the
oracle whether the middle of this interval is feasible or not.

As such, after $O( \log n)$ iterations, the algorithm is done, as each
iteration either finds a feasible point, or throws away half of the
rays.  The running time is $O(n)$, and the number of oracle queries
performed is $O( \log n)$.

\paragraph*{Analysis: Number of oracle queries.}  The query complexity
of this algorithm is
\begin{equation*}
    Q_d(n)%
    =%
    O(r^d)Q_{d-1}\pth{\frac{n}{r}} + Q_d\pth{\frac{n}{r}},
\end{equation*}
with $Q_1(n) = O( \log n)$. The solution of this recurrence is
$O( \log^{d} n)$, for $r$ chosen to be a sufficiently large
constant. Indeed, using induction, we have
\begin{math}
    Q_d(n)%
    =%
    O(r^d \log^{d-1} \frac{n}{r} ) + Q_d\pth{\frac{n}{r}},
\end{math}

\paragraph*{Running time.}  As for the running time, we have
\begin{equation*}
    T_d(n)%
    =%
    O(nr^{d-1}) + O(r^d) T_{d-1}(n/r) + T_d(n/r),
\end{equation*}
with $T_1(n) =O(n)$. Assuming $T_{d-1}(n) \leq c_{d-1} n$, and for two
constants $c_d'$ and $c_d''$, we have
\begin{equation*}
    T_d(n)%
    \leq%
    c_d' nr^{d-1} + c_d'' r^d c_{d-1} \frac{n}{r}
    + c_d\frac{n}{r}
    \leq%
    \Bigl(c_d' r^{d-1} + c_d'' r^{d-1} c_{d-1}  + \frac{c_d}{r}\Bigr) n
    \leq 
    c_d n,
\end{equation*}
which holds if
\begin{math}
    c_d \geq 2(c_d' r^{d-1} + c_d'' r^{d-1} c_{d-1} ).
\end{math}
This implies that $T_d(n) = O(n)$. We thus get the following result.

\begin{lemma}
    \lemlab{cutting}%
    Undecided \LP with $n$ constraints in $\Re^d$, can be solved in
    $O_d(n)$ time, using $O( \log^d n)$ separation oracle queries.
\end{lemma}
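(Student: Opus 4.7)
The plan is to prove correctness and then analyze the two recurrences (query count and running time) for the algorithm described above the lemma. The proof follows the structure laid out in the algorithm description.

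For correctness, I would argue by induction on the dimension $d$. The base case $d=1$ is the binary-search procedure already described: each oracle query either finds feasibility or eliminates half of the atomic intervals, and the algorithm terminates with $O(\log n)$ queries after inspecting a single candidate interval. For the inductive step, suppose the feasible region of the original \ULP is nonempty. After the recursive calls on the $(d-1)$-faces of the cutting $\Cutting$ fail to report feasibility, the committed halfspaces $\CSet$ returned by the oracle define a (possibly empty) polytope $\PT = \bigcap_{h^+ \in \CSet} h^+$. The key geometric observation is that $\PT$ cannot cross any face of $\Cutting$: if it did, then $\PT$ would intersect some $(d-1)$-face $F \in \Cutting_{d-1}(\Cutting)$, and that intersection, being feasible for the restriction of the \ULP to $F$, would have been detected by the recursive call on $F$. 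Hence $\PT$ (if nonempty) lies in the interior of exactly one simplex $\Simplex \in \Cutting$. The true feasible region is contained in $\PT$, so it too lies inside $\Simplex$, and thus recursing on $\Simplex$ together with its conflict list (of size at most $n/r$) preserves the invariant, with \emph{the same} separation oracle (\remref{oracle:all}).

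For the query complexity, I would plug the recurrence
\[
    Q_d(n) = O(r^d)\, Q_{d-1}(n/r) + Q_d(n/r),
    \qquad Q_1(n) = O(\log n),
\]
and verify by induction that $Q_d(n) = O(\log^d n)$ for $r$ a sufficiently large constant (depending on $d$). Indeed, the inductive hypothesis gives
\[
    Q_d(n) \leq c_d \cdot r^d \log^{d-1}(n/r) + Q_d(n/r),
\]
and unrolling the second term over $\log_r n$ levels yields a geometric-type sum dominated by $O(\log^d n)$, since the $\log^{d-1}(n/r^i)$ factor decreases polynomially while the number of levels is $O(\log n)$.

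For the running time, the cutting plus its conflict lists are constructed in $O(nr^{d-1})$ time, giving
\[
    T_d(n) = O(nr^{d-1}) + O(r^d)\, T_{d-1}(n/r) + T_d(n/r),
    \qquad T_1(n) = O(n).
\]
Assuming inductively $T_{d-1}(n) \leq c_{d-1} n$, a routine substitution as in the paragraph preceding the lemma shows that a constant $c_d$ can be chosen (depending on $d$, $r$, and $c_{d-1}$) so that $T_d(n) \leq c_d n$. The main obstacle, and the only nontrivial point, is the correctness argument above; once that is in place, the recurrences are standard and yield the claimed bounds of $O_d(n)$ time and $O(\log^d n)$ separation oracle queries.
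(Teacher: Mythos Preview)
Your proposal is correct and follows essentially the same approach as the paper: the correctness argument (that $\PT$ must lie inside a single simplex of the cutting, since otherwise a recursive call on some $(d-1)$-face would have succeeded), together with the two recurrences $Q_d(n)$ and $T_d(n)$ and their solutions, are exactly what the paper presents in the paragraphs immediately preceding the lemma. If anything, your correctness sketch is slightly more explicit than the paper's one-line justification; the rest is identical.
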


\begin{remark}[An implicit undecided \LP]
    \remlab{implicit}%
    In the following, we need a variant of the above problem -- the
    input is a set of undecided constraints, but the real undecided
    \LP instance $\I$ corresponds to an (unknown) subset of these
    constraints. Fortunately, the given separation oracle works on the
    ``real'' instance of constraints $\I$. It is easy to verify that
    the above algorithm of \lemref{cutting} works verbatim in this
    case.
\end{remark}

\begin{remark}
    \remlab{m:t}%
    The work of Maass and \Turan \cite{mt-hwtgl-94, mt-albolgc-94}
    also studied the dual settings of our problem (i.e., Problem II of
    separating red and blue points). They assume the input points are
    taken from a grid of bounded spread, and use the ellipsoid
    algorithm to get an efficient algorithm with a small number of
    queries.
\end{remark}

\subsection{Near linear time algorithm in two dimensions}
\seclab{u:l:p:plane}

\subsubsection{\VC dimension of polygons}

\begin{lemma}
    \lemlab{arr_poly_vc_dim}%
    Let $\LS$ be a set of $n$ lines in $\Re^2$, and let $\PgSet^k$ be
    the set of all convex polygons that have a boundary made out of at
    most $k$ edges or rays.  For a polygon $\Polygon \in \PgSet^k$,
    let
    \begin{math}
        \Polygon \sqcap \LS%
        =%
        \Set{ \hplane \in \LS}{\hplane \cap \interiorX{\Polygon} \neq
           \emptyset },
    \end{math}
    and consider the range space
    \begin{math}
        \RangeSpace^k%
        =%
        (\LS, \Set{\Polygon \sqcap \LS}{\Polygon \in \PgSet^k} ).
    \end{math}
    The \VC dimension of $\RangeSpace^k$ is $O( k \log k)$.
\end{lemma}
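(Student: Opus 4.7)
The plan is to dualize the lines of $\HSet$ to points in a dual plane, express each range of $\RangeSpace^k$ in the dual as a Boolean combination of $O(k)$ halfplane ranges, and then invoke the standard VC-dimension composition bound.

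First, after a small rotation so that no line of $\HSet$ is vertical, I would apply the classical point-line duality $\hplane\colon y = ax+b \mapsto \hplane^* = (a,-b)$, and let $\HSet^* = \{\hplane^* : \hplane \in \HSet\}$ be the resulting dual point set. Working in this dual plane, for a bounded $\Polygon \in \PgSet^k$ with vertices $v_1,\ldots,v_m$ ($m \leq k$), a line $\hplane$ meets $\interiorX{\Polygon}$ iff $\Polygon$ contains points strictly on both sides of $\hplane$, which in turn happens iff
\begin{equation*}
    \pth{\exists\, i \colon v_i \text{ is strictly above } \hplane}
    \;\wedge\;
    \pth{\exists\, j \colon v_j \text{ is strictly below } \hplane}.
\end{equation*}
Each atomic condition ``$v$ is strictly above $\hplane$'' dualizes to ``$\hplane^*$ lies strictly above the dual line $v^*$'', i.e., to an open halfplane condition on $\hplane^*$. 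Thus the range $\Polygon \cap \HSet$ corresponds in the dual to $\pth{\bigcup_i A_i} \cap \pth{\bigcup_j B_j}$ for halfplanes $A_i, B_j \subseteq \Re^2$—a Boolean combination of at most $2k$ halfplane ranges over $\HSet^*$.

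For the unbounded case, write $\Polygon = \bigcap_{j=1}^{k} h_j^+$. The existential ``$\Polygon$ has a point strictly above $\hplane$'' may now also be witnessed by an unbounded recession direction of $\Polygon$ rather than by a vertex; however, the recession cone of $\Polygon$ has at most $k$ extremal rays, each of which contributes a single atomic halfplane condition on $\hplane^*$ (namely, whether the outward normal of $\hplane$ has positive inner product with that recession direction). Consequently, even in the unbounded case the range remains a Boolean combination of $O(k)$ halfplane ranges in the dual plane; equivalently, one can encode ``feasibility of a strict point of $\Polygon$ on $\hplane^+$'' via LP duality applied to the $k$-halfplane representation and obtain the same $O(k)$-sized Boolean combination uniformly.

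Finally, I would invoke the standard VC composition bound: a Boolean combination of $m$ ranges drawn from a range space of VC dimension $d$ has VC dimension $O(dm\log m)$. With $d = 3$ (halfplanes in $\Re^2$) and $m = O(k)$, this gives $\mathrm{VC}\pth{\RangeSpace^k} = O(k\log k)$, as claimed. The main technical subtlety is the unbounded case: one must verify that handling the recession cone only adds $O(k)$ atomic halfplane conditions—rather than, say, $O(k^2)$ from a naive pairwise Helly-style argument—which is precisely what the recession-cone (or LP-duality) formulation delivers.
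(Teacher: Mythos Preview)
Your argument is correct, but the paper takes a different and somewhat cleaner route. Instead of dualizing and reasoning about vertices, the paper observes that a line $\hplane$ meets $\interiorX{\Polygon}$ if and only if $\hplane$ crosses one of the at most $k$ boundary segments/rays of $\Polygon$. Hence each range of $\RangeSpace^k$ is simply a \emph{union} of at most $k$ ranges from the range space $\RangeSpace_{\seg} = (\HSet,\{\seg\cap\HSet:\seg\text{ a segment}\})$, which is dual to ``points in double-wedges'' and has constant VC dimension; the composition bound then gives $O(k\log k)$ immediately. This edge-based decomposition sidesteps your entire unbounded-polygon analysis: rays on $\partial\Polygon$ are handled exactly like segments, so no separate recession-cone argument is needed. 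Your vertex-based formulation yields an intersection of two unions of halfplanes (and extra terms for recession directions), which is perfectly valid but carries more bookkeeping for the same final bound.
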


\begin{proof}
    The \VC dimension of
    \begin{math}
        \RangeSpace_{\seg}%
        =%
        (\HSet, \Set{\seg \sqcap \HSet}{\seg \text{ is a segment}})
    \end{math}
    is $O(1)$, being the dual of the range space of points, with the
    ranges being double-wedges. For a polygon $\Polygon \in \PgSet^k$,
    let $\EdgesX{\Polygon}$ be the set of segments and rays forming
    its boundary.  Observe that $\hplane \in \HSet$ intersects the
    interior of $\Polygon$ $\iff$ $\hplane$ intersects the interior of
    one of the segments/rays of $\EdgesX{\Polygon}$. Namely, a range
    of $\RangeSpace^k$ is the union of at most $k$ ranges of
    $\RangeSpace_{\seg}$. By standard bounds on the \VC dimension of
    combined range spaces \cite{h-gaa-11}, we have that the \VC
    dimension of $\RangeSpace^k$ is bounded by $O( k \log k)$.
\end{proof}

We need the following lemma, due to Har-Peled and Mitchell
\cite{hj-ospl-20}.

\begin{lemma}[\cite{hj-ospl-20}]
    \lemlab{count:and:sample}%
    Let $\Polygon$ be a fixed polygon with $k$ edges, and let $L$ be a
    set of $m$ lines that intersect the interior of $\Polygon$. After
    $O(m (\log k + \log m))$ preprocessing, one can compute the number
    of vertices of $\Arr(L)$ that lie inside $\Polygon$, or sample
    such a vertex in $O( \log m)$ time.
\end{lemma}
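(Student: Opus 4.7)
The plan is to reduce both counting and sampling to a combinatorial problem about chords of $\Polygon$. Assuming $\Polygon$ is convex (the only case needed in the application), for each line $\ell \in L$ I would compute the clipped segment $s_\ell = \ell \cap \Polygon$ in $O(\log k)$ time by binary searching over the edges of $\Polygon$; this gives all $m$ segments in $O(m \log k)$ time. Under general position, a vertex of $\Arr(L)$ lies in the interior of $\Polygon$ if and only if the two lines defining it produce segments $s_\ell, s_{\ell'}$ that cross strictly inside $\Polygon$. So the problem becomes counting and sampling crossing pairs among $m$ chords of $\Polygon$ whose endpoints lie on $\partial\Polygon$.

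Parametrize $\partial \Polygon$ cyclically by arc length, and cut it at an arbitrary point to get a linear order; each chord $s_\ell$ is then encoded by a pair $(a_\ell, b_\ell)$ with $a_\ell < b_\ell$. Two chords cross in the interior iff their parameter intervals interleave, that is $a_\ell < a_{\ell'} < b_\ell < b_{\ell'}$ (or symmetrically). To count such pairs, sort the chords by $a_\ell$ in $O(m \log m)$ time, and sweep left to right while maintaining a Fenwick tree on the $b$-coordinates: when processing $\ell$, first query the tree for the number of stored values in the open interval $(a_\ell, b_\ell)$, which is exactly the number of earlier-processed chords that cross $\ell$, and then insert $b_\ell$. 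Summing the query answers yields the total crossing count $C$ in time $O(m \log m)$.

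For sampling I would augment the preprocessing by building a two-level orthogonal range tree on the points $\{(a_\ell, b_\ell)\}$, with every canonical subset annotated by its size. Such a tree supports both counting and uniformly sampling a point inside any axis-aligned query rectangle in $O(\log m)$ time, by performing a weighted root-to-leaf walk down the primary tree and then a second weighted walk down the secondary tree. The crossing partners of a fixed chord $\ell$ form the union of two rectangles (for example $a_\ell < a < b_\ell < b$ and $a < a_\ell < b < b_\ell$), so I compute $c_\ell$, the number of chords crossing $\ell$, in $O(\log m)$ per chord, and store the prefix sums of $c_1, \ldots, c_m$. To draw a uniformly random interior vertex, I pick $\ell$ with probability $c_\ell / (2C)$ via binary search on the prefix sums in $O(\log m)$ time, and then sample a uniformly random crossing partner from the associated rectangles in $O(\log m)$ time; the factor of two corrects the double counting of each unordered pair.

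The step I expect to be the main obstacle is specifying the uniform-sampling range tree correctly, since the classical range tree only reports or counts. However, this is a routine augmentation: storing subset sizes at every canonical node reduces uniform sampling to two nested weighted random choices, each descending $O(\log m)$ levels. The total preprocessing cost is $O(m \log k)$ for the clipping plus $O(m \log m)$ for the sort and the annotated range tree, matching the claimed $O(m(\log k + \log m))$ bound, and each count/sample query takes $O(\log m)$.
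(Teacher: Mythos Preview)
The paper does not prove this lemma at all --- it is quoted verbatim from Har-Peled and Jones \cite{hj-ospl-20}, so there is no in-paper argument to compare against. Your reduction (clip each line to a chord of the convex polygon in $O(\log k)$ time, then count/sample interleaving pairs of endpoint intervals on $\partial\Polygon$) is exactly the standard way to establish this kind of result and is sound.

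One small discrepancy: a two-level orthogonal range tree does \emph{not} answer a rectangle counting or sampling query in $O(\log m)$ time by ``a weighted walk down the primary tree and then a second weighted walk down the secondary tree.'' A rectangle query touches $O(\log m)$ canonical nodes in the primary tree, and each requires a secondary-tree query, giving $O(\log^2 m)$ per sample with your description. To hit the stated $O(\log m)$ bound you should either (i) observe that the two ``crossing partner'' regions are 3-sided and use a priority-search-tree--style structure augmented with subtree sizes, or (ii) build a persistent balanced BST on the $b$-values inserted in increasing order of $a$; then for a fixed $\ell$ the partners with $a_{\ell'}<a_\ell$ and $b_{\ell'}\in(a_\ell,b_\ell)$ are a single 1D range in the version of the tree at time $a_\ell$, and a weighted root-to-leaf descent samples one in $O(\log m)$. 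Either fix keeps the preprocessing at $O(m\log m)$ and recovers the claimed per-query bound.
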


\subsubsection{Polygon and point set reduction}

\paragraph{Polygon reduction.}
\begin{lemma}
    \lemlab{polygon:reduction}%
    Consider an instance of undecided \LP in the plane, and a polygon
    $\Polygon$ with $t$ edges, such that the feasible region is
    contained in $\Polygon$. Using $O( \log t)$ separation queries,
    and $O(t)$ time, one can either compute a feasible point, or
    compute a polygon $\Polygon' \subseteq \Polygon$ with (say) at
    most $10$ edges, such that the feasible region must be contained
    in $\Polygon'$.
\end{lemma}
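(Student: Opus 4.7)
The plan is to shrink $\Polygon$ iteratively by repeatedly querying the separation oracle at a \emph{centerpoint} of the vertices of the current polygon, and intersecting the polygon with the returned halfplane. Since in $\Re^2$ the centerpoint theorem yields a $1/3$-centerpoint, each oracle call that does not already find a feasible point will remove at least $t/3$ vertices, while the new clipping adds at most two new vertices (the two points where the returned line crosses the current polygon's boundary). Thus the number of edges follows the recurrence $t_{i+1} \leq \tfrac{2}{3} t_i + 2$, whose fixed point is $6$, so after $O(\log t)$ iterations the polygon has at most $10$ edges, as required.

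In more detail, I maintain a convex polygon $\Polygon_i$ that contains the feasible region; initially $\Polygon_0 = \Polygon$ (taking the convex hull if needed, which does not increase the number of edges and still contains the convex feasible region). At step $i$, if $\Polygon_i$ has at most $10$ edges I am done. Otherwise, I compute a centerpoint $\pc_i$ of the vertex set of $\Polygon_i$ in $O(t_i)$ time using the linear-time planar centerpoint algorithm of Jadhav and Mukhopadhyay~\cite{jm-ccfps-94}, and query the separation oracle at $\pc_i$. If the oracle reports feasibility, the algorithm returns $\pc_i$ and halts. Otherwise, the oracle returns a halfplane $\hsp^+$ which contains the feasible region but not $\pc_i$; I set $\Polygon_{i+1} = \Polygon_i \cap \hsp^+$, which can be computed in $O(t_i)$ time by walking along the boundary of $\Polygon_i$.

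For the edge-count bound, note that since $\pc_i$ is a centerpoint of the vertex set of $\Polygon_i$, the closed halfplane complementary to $\hsp^+$ contains $\pc_i$ and hence at least $\lceil t_i/3 \rceil$ vertices of $\Polygon_i$; all of these are discarded by the intersection with $\hsp^+$, while the clipping introduces at most two new vertices on the bounding line of $\hsp^+$. Hence $t_{i+1} \leq \tfrac{2}{3}t_i + 2$, and one iteration of this recurrence starting at $t$ reaches value $\leq 10$ within $O(\log t)$ steps. Summing the per-iteration costs $O(t_i)$ along a geometrically decreasing sequence gives total time $O(t)$, and each iteration uses a single separation query, for a total of $O(\log t)$ queries.

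The main obstacle I anticipate is a minor one: ensuring that $\Polygon$ can be treated as convex and that the centerpoint lies inside it (so the oracle query is meaningful). This is resolved by replacing $\Polygon$ at the outset with its convex hull, which still contains the (convex) feasible region and still has at most $t$ edges; convexity of $\Polygon_i$ is then preserved under intersection with halfplanes, so the centerpoint of its vertices always lies inside $\Polygon_i$ and the recurrence above is valid throughout.
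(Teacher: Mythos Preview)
Your proposal is correct and follows essentially the same approach as the paper: iteratively compute a centerpoint of the current polygon's vertices, query the separation oracle there, and clip with the returned halfplane, yielding the recurrence $t_{i+1}\le \tfrac{2}{3}t_i+2$ and hence $O(\log t)$ queries and $O(t)$ total time. Your treatment is slightly more detailed (explicitly taking the convex hull up front and citing the linear-time planar centerpoint algorithm), but the argument is the same.
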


\begin{proof}
    Let $\Polygon_1 =\Polygon$ and $t_1 = t$.  As long as the current
    polygon $\Polygon_i$ has more than $10$ edges, compute a
    centerpoint $\pp_i$ to the vertices of $\Polygon_i$. Next, using a
    separation oracle query, decide if $\pp_i$ is feasible. If it is,
    the algorithm is done. Otherwise, the oracle returns a halfplane
    $\hplane^+$ that must contain the feasible solution, such that
    $\pp_i \notin \hplane^+$. This implies that
    $\Polygon_{i+1} \leftarrow \Polygon_i \cap \hplane^+$ has at most
    $t_{i+1} \leq (2/3)t_i + 2$ vertices. The algorithm continues to
    the next iteration.

    Clearly, after $O( \log t)$ iterations, the current polygon has
    less than $10$ vertices, and the algorithm returns it as the
    desired polygon $\Polygon'$. The running time follows as the
    centerpoint can be computed in the plane in linear time, and the
    number of vertices of the polygon reduces by a constant fraction
    at each iteration.
\end{proof}

\paragraph{Point set reduction via centerpoint.}
\begin{lemma}
    \lemlab{p:s:reduction}%
    Consider an instance of undecided \LP in $\Re^d$, and a set $\PS$
    of $n$ points. In $O(n)$ time, using $O(\log n)$ separation oracle
    queries, one can compute either: (i) a feasible point for the
    \ULP, or alternatively, (ii) a polytope $\Polygon$ with
    $O(d^2 \log n)$ faces, such that the feasible solution for the
    \ULP lies inside $\Polygon$, and $\Polygon$ contains no point of
    $\PS$.

    In two dimensions, one can modify the algorithm so that the number
    of edges of $\Polygon$ is at most $10$.
\end{lemma}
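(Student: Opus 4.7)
The plan is to iteratively carve down a polytope using centerpoint-based separations. Initialize $\Polygon_0 = \Re^d$ and $\PS_0 = \PS$. At iteration $i$, compute an approximate $\alpha$-centerpoint $\pc_i$ of $\PS_i$ with $\alpha = \Omega(1/d^2)$, using the near-linear time algorithm of Har-Peled and Jones \cite{hj-jcps-19}, and query the separation oracle at $\pc_i$. If the oracle declares $\pc_i$ feasible, return it for case (i); otherwise it returns a closed halfspace $\hplane_i^+$ that contains the entire feasible region but strictly excludes $\pc_i$. Update $\Polygon_{i+1} \leftarrow \Polygon_i \cap \hplane_i^+$ and $\PS_{i+1} \leftarrow \PS_i \cap \hplane_i^+$, and repeat until $\PS_{i+1} = \emptyset$.

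Correctness of case (ii) is immediate: $\Polygon = \Polygon_N$ contains the feasible region because each $\hplane_i^+$ does by the oracle's guarantee, and the stopping criterion ensures $\Polygon \cap \PS = \emptyset$. For the iteration count, since $\pc_i$ lies strictly outside $\hplane_i^+$, the centerpoint property implies that at least $\alpha |\PS_i|$ points of $\PS_i$ lie in the complementary closed halfspace and hence fail to lie in $\hplane_i^+$; under a general position assumption none of them lie on the bounding hyperplane of $\hplane_i^+$, so $|\PS_{i+1}| \leq (1-\alpha) |\PS_i|$. Hence after $N = O(\alpha^{-1} \log n) = O(d^2 \log n) = O(\log n)$ iterations the set is exhausted, exactly $N$ oracle queries have been issued, and the output polytope has $N = O(\log n)$ facets.

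For the time bound, each iteration spends $d^{O(1)} \operatorname{polylog}(n)$ time computing the approximate centerpoint and $O(|\PS_i|)$ time filtering the surviving points against the new halfspace. Since $|\PS_i|$ decays geometrically by a factor of $1-\alpha$, the filtering cost telescopes to $\sum_i O(|\PS_i|) = O(|\PS|/\alpha) = O_d(n)$, which dominates the overall running time. The one subtlety in the analysis is the general-position caveat: a priori the closed halfspace $\hplane_i^+$ returned by the oracle might carry a non-removable boundary mass of $\PS$-points, which could stall the geometric decay. This is resolved by an infinitesimal perturbation of the returned hyperplane toward the feasible side, which preserves the oracle's containment guarantee while sending any points on the original bounding hyperplane to the removed side; beyond this standard technicality no further obstacle arises.
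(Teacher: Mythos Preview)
Your proposal is correct and follows essentially the same approach as the paper: iterate, at each step computing a $\Theta(1/d^2)$-approximate centerpoint of the surviving points via Har-Peled--Jones, querying the separation oracle there, and intersecting the returned halfspace to shrink both the polytope and the point set geometrically. The only cosmetic difference is the endgame: the paper stops once $|\PS_i|$ is a constant and then issues one separation query per remaining point (writing $|\PS_{i+1}| \le (1-O(1/d^2))|\PS_i| + O(d)$ to account for the regime where the approximate centerpoint guarantee is vacuous), whereas you run the loop to exhaustion and absorb the boundary-mass issue into a perturbation remark---both give the same bounds.
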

\begin{proof}
    We apply the same centerpoint reduction idea used above. Let
    $\PS_1 = \PS$. In the beginning of the $i$\th iteration, the
    algorithm computes a $O(1/d^2)$-centerpoint $\pp_i$ for $\PS_i$ by
    using the algorithm of Har-Peled and Jones \cite{hj-jcps-19},
    which works in $d^{O(1)} \log^6 n$ time (and succeeds with high
    probability).
 
    Next, using a separation oracle query, the algorithm decides if
    $\pp_i$ is feasible. If so, the algorithm is done. Otherwise, the
    algorithm removes all the points of $\PS_i$ that are on the same
    side of the returned hyperplane (or line in 2d) as $\pp_i$, sets
    $\PS_{i+1}$ to be the resulting point set, and continues to the
    next iteration. Since
    $|\PS_{i+1}| \leq (1-O(1/d^2))|\PS_{i}|+O(d)$, the algorithm ends
    up with a set containing a constant number of points after
    $O( d^2 \log n)$ iterations. The algorithm uses the separation
    oracle on each of he remaining points to eliminate them. Finally,
    the returned polytope is the intersection of all the constraints
    returned by the separation oracle. Since the algorithm uses
    $O( 1 + \log n)$ separation queries, the bound on the complexity
    of the output polytope follows.

    To get the improved algorithm in the plane, the algorithm of
    \lemref{polygon:reduction} is executed on the resulting polygon to
    further reduce the number of its edges to $10$.
\end{proof}

\subsubsection{Near linear running time in two dimensions}

Let $\LS$ be the input set of $n$ lines (i.e., undecided constraints).
Let $\LS_0 = \LS$, and $\Polygon_0 = \Re^2$.

Let $V_i$ denote the set of vertices of $\ArrX{\LS_{i-1}}$ that lie in
$\Polygon_{i-1}$.  The algorithm computes $n_i = |V_i|$ using the
algorithm of \lemref{count:and:sample}. There are three possibilities:
\begin{compactenumA}
    \medskip%
    \item \itemlab{step:empty} If $n_i = 0$, then the algorithm picks
    any point $\pp_i$ in $\Polygon_{i-1}$, and calls the separation
    oracle on $\pp_i$. If $\pp_i$ is feasible, the algorithm is done,
    otherwise, the algorithm returns that the given instance is
    infeasible.

    \medskip%
    \item If $n_i = O( n \log n)$, the algorithm computes $V_i$ by
    clipping the lines of $\LS$ to $\Polygon_{i-1}$, and computing the
    arrangement of the resulting segments.  This takes
    $O( n \log n+ n_i)$ expected time, as this is the time required
    for computing the arrangement of $n$ segments with $n_i$
    intersections.

    The algorithm uses \lemref{p:s:reduction} on $V_i$ to either find
    a feasible point, or a polygon $\Polygon_1'$ that must contain the
    feasible region, has at most $10$ edges, and contains no point of
    $V_i$.

    \medskip%
    \item Otherwise, the algorithm samples $O( n )$ vertices from
    $V_i$, and sets $\Sample_i$ to be the resulting sample. This is
    done using the algorithm of \lemref{count:and:sample} in
    $O(n \log n)$ time. Next, the algorithm applies, as above,
    \lemref{p:s:reduction} and \lemref{polygon:reduction} to get a
    constant complexity polygon $\Polygon_i \subseteq \Polygon_{i-1}$
    that does not contain any of the points of $\Sample_i$.
\end{compactenumA}
\smallskip%
The algorithm now scans the lines of $\LS_{i-1}$, computes the set of
all lines $\LS_{i} \subseteq \LS_{i-1}$ that intersect $\Polygon_i$,
and continues to the next iteration.

\begin{lemma}
    \lemlab{ulp:2:d}%
    The above algorithm solves two dimensional undecided \LP in
    expected $O(n \log n)$ time, using $O( \log n)$ separation oracle
    queries.
\end{lemma}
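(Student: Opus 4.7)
The plan is to establish three claims: (i) the loop invariant that the feasible region is always contained in the current polygon $\Polygon_i$; (ii) the number of iterations is at most a small constant; and (iii) each iteration uses $O(\log n)$ separation oracle queries and runs in $O(n \log n)$ expected time. Combining these immediately yields the lemma.

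The invariant (i) is essentially by construction. At iteration $i$, \lemref{p:s:reduction} applied to $V_i$ (in case (B)) or to the sample $\Sample_i$ (in case (C)) either produces a feasible point, in which case the algorithm halts, or returns a polygon $\Polygon_i'$ of complexity $O(\log n)$ that still contains the feasible region. \lemref{polygon:reduction} then shrinks $\Polygon_i'$ to a constant-complexity polygon $\Polygon_i \subseteq \Polygon_i'$ that continues to contain the feasible region. Since $\LS_i$ is the set of lines of $\LS_{i-1}$ that actually intersect $\Polygon_i$, passing to the next iteration preserves the invariant. Case (A) is immediate because a polygon whose interior contains no vertex of the arrangement and is known to contain the feasible region can be tested with a single separation query.

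For (ii), the crucial observation is that after one execution of case (C) the vertex count drops sharply. The polygon $\Polygon_i$ has constant complexity and is disjoint from a random sample of $\Theta(n)$ vertices of $V_i$. The primal range space of constant-edge convex polygons over point sets has \VC dimension $O(1)$, by essentially the same reasoning as in \lemref{arr_poly_vc_dim}, since such a polygon is the intersection of a constant number of halfplanes. Applying the standard $\epsilon$-net theorem with $\epsilon = \Theta\bigl((n \log n)/n_i\bigr)$ shows that, with high probability, $\Polygon_i$ contains at most $O(n \log n)$ vertices of $\ArrX{\LS_{i-1}}$; hence $n_{i+1} = O(n \log n)$ and the next iteration falls into case (B). In case (B) we compute $V_{i+1}$ exactly, and $\Polygon_{i+1}$ is forced to miss \emph{all} of $V_{i+1}$, so it contains no vertex of $\ArrX{\LS_i}$ at all, forcing $V_{i+2} = \emptyset$ and triggering case (A) at the following iteration. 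Thus there are at most three iterations in total.

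For (iii), each iteration uses $O(\log n)$ separation queries, split between \lemref{p:s:reduction} (which contributes $O(\log n)$) and \lemref{polygon:reduction} (which contributes $O(\log \log n)$ on a polygon with $O(\log n)$ edges). Case (B) builds the arrangement of the clipped lines, having $n_i = O(n \log n)$ intersections, in $O(n \log n)$ expected time via randomized incremental construction. Case (C) invokes \lemref{count:and:sample} to count and draw $\Theta(n)$ samples in $O(n \log n)$ expected time; the subsequent reductions are linear in the current input. The main obstacle I expect is pinning down the $\epsilon$-net argument in case (C) rigorously, namely guaranteeing that the polygon produced by the reductions belongs to a range space of constant \VC dimension so that a linear-size sample is enough; once that is in hand, the constant bound on the number of iterations, and hence both the query and time bounds, follow directly.
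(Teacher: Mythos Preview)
Your proposal is correct and follows essentially the same approach as the paper: both rely on the $\eps$-net argument to show that a $\Theta(n)$-size random sample makes the constant-complexity polygon $\Polygon_i$ miss all but an $O((\log n)/n)$-fraction of $V_i$, so a constant number of iterations suffices, each costing $O(n\log n)$ time and $O(\log n)$ separation queries. Your case-by-case accounting (one pass through case (C), then (B), then (A)) is more explicit than the paper's terser uniform bound $n_{i+1}\le\eps\,n_i$, and you are a bit more careful in noting that the relevant \VC bound is for the primal range space of points in constant-edge polygons rather than the dual space of \lemref{arr_poly_vc_dim}, but the core argument is identical.
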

\begin{proof}
    All the edges of the current polygon $\Polygon_i$ are contained in
    lines of $\LS$. As such, when $\Polygon_i$ contains no vertex of
    $\ArrX{\LS}$, then it must be a face of the arrangement, and thus
    \itemref{step:empty} applies, and the result it reports is
    correct.

    Now, $\Sample_i$ is a sample of $O( n)$ vertices from
    $V_i$. Constant complexity convex polygons have constant \VC
    dimension by \lemref{arr_poly_vc_dim}, As such, $\Sample_i$ is an
    $\eps$-net for $V_i$, where $\eps = O( (\log n ) / n)$. Since
    $\Polygon_i$ has constant complexity and does not contain any
    point of $V_i$, it follows that it contains at most $\eps |V_i|$
    vertices of $\ArrX{\LS}$. We conclude that
    $n_{i+1} \leq \eps n_i $. Clearly, after three such iterations,
    the algorithm has no vertices left, and the algorithm is done.

    Finally, since each iteration takes $O(n \log n)$ expected time,
    the claim follows.
\end{proof}

\subsection{A linear time algorithm with logarithmic %
   number of queries in two dimensions}

\seclab{u:l:p:2:d:linear:time}

Let $\LS$ be a set of $n$ undecided constraints in the plane (i.e.,
lines) as before.

\paragraph*{The algorithm.}

The algorithm samples a set $\Sample$ of $\Theta(n / \log n)$ lines
from $\LS$, by picking each line with probability $p = c/\log n$ for
some appropriate constant $c$. The algorithm then computes a polygon
$\Polygon_0$ with at most $10$ edges that contains no vertex of
$\ArrX{\Sample}$ by using the algorithm of \lemref{ulp:2:d} on
$\Sample$. Note that the oracle queries are still done on the original
set of constraints $\LS$.

Let $\LS_0$ be the set of all the lines of $\LS$ that intersect the
interior of $\Polygon_0$.  Formally, we have
\begin{equation*}
    \LS_0%
    =%
    \LS \sqcap \Polygon_0%
    =%
    \Set{\Line \in
       \LS}{ \Line \cap \mathrm{int}(\Polygon_0) \neq \emptyset \bigr.}.
\end{equation*}
Similarly, let $\Sample_0 = \Sample \sqcap \Polygon_0$.

\begin{figure}[ht]
    \begin{tabular}{*{3}{c}}
      \includegraphics[page=1]{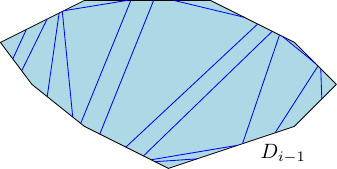}
      &
        \includegraphics[page=4]{figs/cut_up}%
      &%
        \includegraphics[page=5]{figs/cut_up}%
      \\
      (A) & (B) & (C)
    \end{tabular}
    \noindent%
    \centerline{%
       \begin{tabular}{cc}
         \includegraphics[page=6]{figs/cut_up}
         &
           \includegraphics[page=7]{figs/cut_up}
         \\
         (D) & (E)
       \end{tabular}%
    }
    \caption{An illustration of the steps composing an iteration of
       the algorithm described in \secref{u:l:p:2:d:linear:time}. (A)
       Compute a polygon with no arrangement vertices (B) find a
       separating cell $\Polygon_i'$ of the subpolygons created by
       $\Sample_{i-1}$ (C) compute the intersection points $\PS_i$ of
       the lines $\LS_{i-1}$ with $\Polygon_i'$ (D) compute
       $\Polygon_i''$ using \lemref{p:s:reduction} (E) get
       $\Polygon_i$ from the intersection of $\Polygon_i''$ and
       $\Polygon_{i-1}$.}
    \figlab{alg:steps}
\end{figure}

For $i > 0$, in the $i$\th iteration the algorithm computes the
segments formed by the intersections of the lines of $\Sample_{i-1}$
with $\Polygon_{i-1}$. These segments are interior disjoint, and they
partition $\Polygon_{i-1}$ into $m_i + 1$ subpolygons, where
$m_{i-1} =\cardin{\Sample_{i-1}}$, see \figref{alg:steps}. The dual of
the subpolygons is a tree, and there is a vertex which, when removed,
breaks the polygon into parts such that each part contains at most
$(2/3) (m_{i-1} + 1)$ original subpolygons -- this separating vertex
can be computed in linear time in the tree size. This vertex
corresponds to a subpolygon $\Polygon_i'$ that has $O(1)$ edges. The
algorithm computes all the intersections of the lines of $\LS_{i-1}$
with the boundary of $\Polygon_i'$, and sets $\PS_i$ to be the
resulting set of points. The algorithm then uses
\lemref{p:s:reduction} on $\PS_i$ in order to get a constant
complexity polygon $\Polygon_i''$ that contains the feasible region,
and does not contain any point of $\PS_i$. This requires
$O( \log |\PS_i|)$ queries. (If $\Polygon_i''$ does not intersect
$\Polygon_{i-1}$, then the given instance is infeasible, and the
algorithm can stop.)

The boundary of the polygon $\Polygon_i''$ does not intersect the
boundary of the polygon $\Polygon_i'$. As such, $\Polygon_i''$ either
\begin{compactenumi}[leftmargin=2cm]
    \smallskip%
    \item intersects only one of the parts of
    $\Polygon_{i-1} \setminus \Polygon_i'$, and the algorithm sets
    this part to be $\Polygon_i$, or \smallskip%
    \item $\Polygon_{i}'' \subset \Polygon_i'$, and the algorithm then
    sets $\Polygon_i = \Polygon_i'$.
\end{compactenumi}
\smallskip%
The algorithm next computes $\LS_i = \Polygon_i \sqcap \LS_{i-1}$ and
$\Sample_i = \Polygon_i \cap \Sample_{i-1}$ and continues to the next
iteration.

As soon as $\cardin{\LS_i} = O( n / \log n)$, the algorithm stop the
iterations, and calls the algorithm of \lemref{ulp:2:d} on $\LS_i$
(and $\Polygon_i$ as the region containing the feasible solution), in
order to find the feasible solution for the original instance, if it
exists.

\subsubsection{Analysis}

A polygon $\widehat{\Polygon}$ is \emphi{supported} by $\LS$, if
$\widehat{\Polygon}$ is the union of some faces of the arrangement
$\ArrX{\LS}$. All the polygons computed by the above algorithms are
supported by $\LS$.

\begin{lemma}
    \lemlab{at:most:20}%
    For all $i$, the polygons $\Polygon_i$ and $\Polygon_i'$ have at
    most $20$ vertices.
\end{lemma}
\begin{proof}
    Observe that $\Polygon_0$ has at most $10$ edges, as it was
    computed by the algorithm of \lemref{ulp:2:d}.  The polygon
    $\Polygon_i$ is the result of taking $\Polygon_0$, partitioning it
    along some of the lines of $\Sample$, and picking one of the
    pieces. By construction, the arrangement $\ArrX{ \Sample}$ has no
    vertex inside $\Polygon_0$.

    An edge of $\Polygon_i$ is either a subsegment of an original edge
    of $\Polygon_0$, or alternatively, it connects two points that lie
    on two original edges of $\Polygon_0$. As the edges of
    $\Polygon_i$ must alternate between original edges and new edges,
    it follows that $\Polygon_i$ has at most $20$ edges/vertices.

    The same argument applies verbatim to $\Polygon_i'$.
\end{proof}

\begin{lemma}
    The boundaries of the polygons $\Polygon_i'$ and $\Polygon_i''$ do
    not intersect.
\end{lemma}

\begin{proof}
    As noted above, both $\Polygon_i'$ and $\Polygon_i''$ are
    supported by $\LS$.  In particular, all their vertices are
    vertices of $\ArrX{\LS}$. As such, $\PS_i$ is the set of all
    vertices of $\ArrX{\LS}$ that lie on $\partial \Polygon_i'$. If a
    feasible point was found during the execution of the algorithm
    used to compute $\Polygon_i''$, then the algorithm would have
    stopped. As such, all the oracle queries returned (closed)
    halfspaces (defined by lines of $\LS$), and the intersection of
    these half-spaces (i.e., $\Polygon_i''$) does not contain any
    vertex of $\PS_i$.  If the boundaries of $\Polygon_i'$ and
    $\Polygon_i''$ intersect, then this intersection would contain a
    point of $\PS_i$, which is a contradiction.
\end{proof}

The following is an immediate consequence of Chernoff's inequality,
and we omit the easy proof.

\begin{lemma}
    We have $pn/2 \leq m_0 = \cardin{\Sample} \leq 2pn = O(n /\log n)$
    with high probability.
\end{lemma}

\begin{lemma}
    \lemlab{drop}%
    The polygon $\Polygon_i$ intersects at most
    $n_i = \cardin{\LS_i} = O( (2/3)^in )$ lines of $\LS$ with high
    probability, as long as $(2/3)^i \geq 1/n^{1/3}$.  In particular,
    the algorithm performs $O( \log \log n)$ iterations with high
    probability.
\end{lemma}

\begin{proof}
    We have that $m_0 = \cardin{\Sample}$, and
    \begin{math}
        m_i%
        =%
        \cardin{ \Sample \sqcap \Polygon_i}%
        =%
        \cardin{ \Sample_i \sqcap \Polygon_i}%
        \leq%
        (2/3)m_{i-1} + O(1).
    \end{math}
    As such, we have $m_i = O((2/3)^i n / \log n)$, as
    $m_0 = O(n /\log n)$.  Consider an $\eps$-approximation for $\LS$,
    where the ranges are polygons with at most $20$ vertices. By
    \lemref{arr_poly_vc_dim}, this range space has \VC dimension
    $O(1)$, and by setting
    \begin{equation*}
        \eps = \frac{c' \log n}{\sqrt{n}},
    \end{equation*}
    we get that the sample $\Sample$ is an $\eps$-approximation of
    size $O( \eps^{-2} \log \eps^{-1} ) = O( n/ \log n)$, and this
    holds with high probability.  As $n = \cardin{\LS}$, we have
    \begin{equation*}
        \cardin{ \frac{\cardin{\LS_i}}{\cardin{\LS}} -
           \frac{\cardin{\Sample_i}}{\cardin{\Sample}} }
        \leq%
        \eps
        \implies%
        \frac{\cardin{\LS_i}}{n} \leq        
        \frac{ \cardin{\Sample_i}}{{\cardin{\Sample}}} + \eps %
        \implies%
        \cardin{\LS_i} \leq        
        \frac{ n  \cardin{\Sample_i}}{\cardin{\Sample}}  + \eps n 
        \implies%
        \cardin{\LS_i} \leq        
        \frac{ n  \cardin{\Sample_i}}{p n /2} + \eps n 
        =%
        O( \cardin{\Sample_i} /  p ) + \eps n.
    \end{equation*}    
    This implies that
    \begin{equation*}
        n_i%
        =%
        \cardin{\LS_i}%
        \leq%
        \eps n + O( m_i /p)
        =%
        O( \sqrt{n} \log n + (2/3)^i n)
        =%
        O( (2/3)^i n).
    \end{equation*}    
    
    The second claim follows, as the algorithm stops as soon as
    \begin{math}
        n_i%
        =%
        O((2/3)^i n) =%
        O(n/ \log n),%
    \end{math}
    for $i = O( \log \log n)$.
\end{proof}

\begin{lemma}
    With high probability, for all $i=1, \ldots, h$, we have
    $\cardin{\PS_i} = O( \log^2 n)$. As such, the number of oracle
    queries performed in each iteration is $O( \log \log n)$.
\end{lemma}
\begin{proof}
    Consider a segment $\seg$ that does not intersect any line of
    $\Sample$. The sample $\Sample$ can be interpreted as an
    $\eps$-net for the ground set of lines $\LS$, where the ranges are
    segments, and $\eps =O\bigl( (\log^2 n) / n\bigr)$, as
    $O(\eps^{-1} \log \eps^{-1}) = O(n/ \log n)$.

    The interior of the polygon $\Polygon_i'$ intersects no line of
    $\Sample$ (i.e., $\Sample_{i-1}$). This polygon has at most $20$
    edges by \lemref{at:most:20}. The $\eps$-net property implies that
    every edge of $\Polygon_i'$ intersects at most $\eps n$ lines of
    $\LS$ (infinitesimally move the edge inward so that it no longer
    lies on a line of $\Sample$).  As such, the interior of
    $\Polygon_i'$, can intersect at most $O(20 \eps n) = O( \log^2 n)$
    lines of $\LS$. Furthermore, for all $i$, we have that
    $\cardin{\PS_i} = O( \log^2 n)$, as each segment that intersects
    the interior of $\Polygon_i'$ contributes two vertices to $\PS_i$.

    As for the second part, the algorithm of \lemref{p:s:reduction}
    performs $O( \log \cardin{\PS_i} ) = O( \log \log n)$ oracle
    queries on $\PS_i$, as claimed.
\end{proof}

\begin{lemma}
    The running time of the above algorithm is $O(n)$ with high
    probability, and performs $O( \log \log n)$ iterations.
\end{lemma}
\begin{proof}
    We have that $m_0 = O( n /\log n)$. As such, the two invocations
    of \lemref{ulp:2:d} takes $O(n )$ time.

    The $(i+1)$\th iteration of the algorithm takes
    $R_i = O( m_{i} \log m_{i} + n_{i} + \cardin{\PS_i} ) = O( m_i
    \log m_i + n_i )$ time since $\Polygon_i$ has constant complexity,
    and thus computing $\LS_i$ takes linear time in
    $\cardin{\LS_{i-1}}$.

    By \lemref{drop}, the algorithm needs to perform
    $h = O( \log \log n)$ iterations till $n_i = O(n /\log n)$.  Since
    $m_i \leq (2/3)m_{i-1} + O(1)$, and $n_i = O((2/3)^i n)$, it
    follows that
    $\sum_{i=1}^h O(R_{i}) = O( m_0 \log m_0 + n) = O(n)$, as claimed.
\end{proof}

\begin{theorem}
    \thmlab{ulp:2:fast}%
    Undecided \LP with $n$ constraints in $\Re^2$, can be solved in
    $O(n)$ expected time, using $O( \log n)$ separation oracle
    queries.  These guarantees hold with high probability.
\end{theorem}
\begin{proof}
    The only missing part is bounding the overall number of queries --
    the two invocations of the algorithm of \lemref{ulp:2:d} require
    $O( \log n)$ oracle queries. The $O( \log \log n)$ iterations
    require $O( \log \log n)$ oracle queries each. Putting the two
    together we get that the overall number of queries is
    $O( \log n + (\log \log n)^2 ) = O( \log n)$.~
\end{proof}

Combining the above algorithm with the cutting based algorithm of
\lemref{cutting}, results in the following improved theorem.

\begin{theorem}
    \thmlab{u:d:lp:cutting}%
    Undecided \LP with $n$ constraints in $\Re^d$, for $d\geq 2$, can
    be solved in $O_d(n)$ time, using $O( \log^{d-1} n)$ separation
    oracle queries.  These guarantees hold with high probability.
\end{theorem}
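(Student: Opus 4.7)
The plan is to modify the cutting-based algorithm of \lemref{cutting} by replacing its recursive base case. Recall that the cutting algorithm in $d$ dimensions reduces the problem to $O(r^d)$ subproblems on $(d-1)$-dimensional faces of a $1/r$-cutting, plus at most one recursive call on a subproblem in $d$ dimensions of size $n/r$ (inside the ``surviving'' simplex once some constraints have been committed). In \lemref{cutting} the recursion bottoms out at dimension $1$ with $Q_1(n)=O(\log n)$, yielding $Q_d(n)=O(\log^d n)$ separation queries. The idea here is to stop the recursion at dimension $2$ and solve each planar subproblem by the linear time labeling-based algorithm of \lemref{ulp:2:labeling}, which uses only $O(\log n)$ queries. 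Saving a logarithmic factor in the base case propagates through the recursion and saves one logarithmic factor overall.

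Concretely, I would run the cutting-based algorithm of \secref{u:l:p:cutting} unchanged for dimensions $d\geq 3$. Whenever a recursive call descends to a $2$-dimensional flat $F$ (a facet of a simplex of some cutting), I would invoke the planar algorithm of \lemref{ulp:2:labeling} on the lines obtained by intersecting the current set of constraints with $F$. The separation and labeling oracles in $F$ are obtained from the original ambient oracles exactly as in \remref{oracle:all}: a violated constraint returned in $\Re^d$ is intersected with $F$ to yield a violated constraint in $F$, and a label query on a point of $F$ is simply forwarded to the ambient labeling oracle.

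For the query complexity, the recurrence becomes
\begin{equation*}
    Q_d(n) = O(r^d)\, Q_{d-1}(n/r) + Q_d(n/r),
    \qquad Q_2(n) = O(\log n).
\end{equation*}
A straightforward induction on $d$ shows $Q_d(n) = O_d(\log^{d-1} n)$: assuming $Q_{d-1}(n)=O_d(\log^{d-2} n)$, unrolling the self-recurrence gives $Q_d(n) = \sum_{i} O(r^d)\cdot O(\log^{d-2}(n/r^i)) = O_d(\log^{d-1} n)$ for a constant $r$. The running time recurrence is unchanged from the proof of \lemref{cutting}, with the base case $T_2(n)=O(n)$ from \lemref{ulp:2:labeling}, and solves to $T_d(n)=O_d(n)$ by exactly the same calculation.

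The only mild subtlety, and the main thing to check, is that the planar subroutine of \lemref{ulp:2:labeling} is compatible with the recursive framework of \lemref{cutting}: its correctness relies on having access to oracles that answer with respect to the underlying instance, and indeed these are provided via \remref{oracle:all} and \remref{implicit} (the planar instance encountered in a recursive call is an implicit \ULP, in the sense that only a subset of the constraints genuinely meets the current flat, but the oracles always answer with respect to the full ambient instance). Once this is in place, the complexity bounds follow, and the theorem is proved.
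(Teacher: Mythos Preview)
Your proposal is correct and is exactly the approach the paper takes: the theorem is stated in the paper immediately after \lemref{ulp:2:labeling} with the one-line justification ``Combining the above algorithm with the cutting based algorithm of \lemref{cutting},'' which is precisely the substitution of the $d=2$ base case that you carry out in detail. Your handling of the recurrences and of the oracle compatibility via \remref{oracle:all} and \remref{implicit} fills in what the paper leaves implicit.
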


\section{A query efficient algorithm for \ULP in three %
   dimensions}

\subsection{Emulating the two dimensional algorithm}

In three dimensions, one can still get $O( \log n)$ queries, albeit
with running time $\tldO(n ^{3/2})$.

The input is a set $\HSet$ of $n$ planes in three dimensions. The
algorithm randomly samples a set $V_1$ of
\begin{equation*}
    T = \Theta(n^{3/2} \log^3 n \sqrt{ \log \log n})    
\end{equation*}
vertices of $\ArrX{\HSet}$. Each vertex is generated by randomly
choosing three constraints (planes) from $\HSet$, and computing their
intersection.  Using the algorithm of \lemref{p:s:reduction}, one
computes a convex polytope $\PT_1$ with $O( \log n)$ faces, which does
not contain any vertex of $V_1$ (since \lemref{p:s:reduction} returns
the $O(\log n)$ halfspaces whose intersection forms the desired
polytope, the polytope itself can be computed in
$O( \log n \log \log n)$ time from the returned planes).

Next, the algorithm computes all the vertices of $\ArrX{\HSet}$ inside
$\PT_1$ -- this can be done in an output sensitive fashion. To this
end, one ``walks'' around the arrangement of $\ArrX{\HSet}$, starting
(say) with the bottom vertex of $\PT_1$. Specifically, one walks on
edges of the arrangement (inside $\PT_1$) using the data-structure of
Chan~\cite{c-dgdss-19}, which provide dynamic maintenance of
convex-hull in three dimensions, and extreme point queries. Here the
data-structure is used in the dual settings, where it maintains
dynamically a set of planes and answers ray shooting queries.  Each
operation takes $O( \log^4 n)$ amortized time. The exact details of
this exploration are somewhat delicate, but straightforward, and we
omit them as they are similar in nature to the 2d algorithms (see
\cite{h-twpa-00} and references therein).  Let $V_2$ be the resulting
set of vertices.

Now, invoking (again) the algorithm of \lemref{p:s:reduction}, one can
compute a polytope $\PT_2$ that contains no vertex of $V_2$. The
algorithm computes the polytope $\PT_1 \cap \PT_2$ (in polylogarithmic
time). If the intersection is empty then the given instance is
infeasible. Otherwise, the algorithm query a point inside this
intersection using the separation oracle. If the point is feasible
then the algorithm is done, and otherwise the instance is infeasible.

\begin{lemma}
    \lemlab{ulp:3:d:3/2}%
    The above algorithm solves Undecided \LP with $n$ constraints in
    $\Re^3$, using $O( \log n)$ separation oracle queries, in
    \begin{math}
        O(T) = O\bigl( n^{3/2} \log^3 n \sqrt{\log \log n} \bigr)
    \end{math}
    time.
\end{lemma}

\begin{proof}
    Let
    \begin{math}
        \eps%
        =%
        \Omega\bigl( { \sqrt{\log \log n}} / (n^{3/2} \log n) \bigr).
    \end{math}
    The argument of \lemref{arr_poly_vc_dim}, implies that the \VC
    dimension of polytopes in $\Re^3$ with $k$ faces is
    $O( k \log k)$. In our case, $k= O( \log n)$, and as $V_1$ is a
    random sample of size $T$, which can be interpreted as an
    $\eps$-net, as
    \begin{equation*}
        \Theta\Bigl( \frac{k \log k}{\eps} \log \frac{1}{\eps} \Bigr)%
        =%
        \Theta\Bigl( \frac{\log n \log \log n}{\eps} \log n \Bigr)%
        =%
        \Theta\Bigl( n^{3/2} \log^3 n \sqrt{\log \log n} \Bigr)%
        =%
        \Theta( T) 
        . 
    \end{equation*}
    Thus, with high probability, by the $\eps$-net theorem, $\PT_1$
    (which avoids all the vertices of $V_1$), contains at most
    $\eps n^3$ vertices. That is, $|V_2| \leq \eps n^3$. Computing
    $V_2$ thus takes
    \begin{math}
        O(|V_2| \log^4 n )%
        =%
        O( \eps n^{3} \log^4 n)%
        =%
        O(T)
    \end{math}
    time.

    Since the algorithm invokes \lemref{p:s:reduction} twice, the
    number of oracle queries it performs is $O( \log n)$.
\end{proof}

\subsection{A faster algorithm}

In the following, we assume that the set of planes of $\HSet$ is in
general position -- no three planes pass through a common line, and no
four planes have a common intersection point.

\paragraph*{Idea.} A natural approach for getting a faster algorithm
is to maintain a polytope $\PT_{i-1}$, sample an $\eps$-net for the
vertices of the arrangement inside $\PT_{i-1}$ (for an $\eps$ to yet
be specified), and use the algorithm of \lemref{polygon:reduction} to
find a low complexity polytope that avoids all the points in this
$\eps$-net. Intersecting this polytope with the previous active
polytope, results in a shrunken feasible region $\PT_i$. Furthermore,
$\PT_i$ contains an $\eps$-fraction of the vertices of the arrangement
inside it compared to $\PT_{i-1}$.  The algorithm then continues to
the next iteration, till the polytope contains no vertices of the
arrangement, and then a single query in its interior settles the
feasibility of the given \ULP.

The challenge is that despite $\PT_i$ being simple (i.e., having few
faces), we do not know how to sample uniformly and efficiently from
$\VV \cap \PT_i$, where $\VV = \VVX{\HSet}$ is the set of vertices of
$\ArrX{\HSet}$.  Instead, we offer the following over-sampling
approach.

\begin{lemma}
    \lemlab{over:sample}%
    Let $\PT$ be a polytope in three dimensions with $k$ faces, and
    let $\HSet$ be a set of $n$ planes, where all the faces of $\PT$
    lie on planes of $\HSet$. One can sample a \emph{non-empty} set
    $X$ of at most $n-1$ vertices, such that (i)
    $X \subseteq \VV \cap \PT$, where $\VV=\VVX{\HSet}$, and the
    probability of any vertex of $\VV \cap \PT$ to be included in the
    sample is the same.

    The preprocessing time of the algorithm is $O(n k \log n )$, and a
    sampled set can be computed in $O(n)$ time.
\end{lemma}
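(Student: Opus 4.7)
The plan is to exploit the fact that, under the general position assumption, every vertex $v \in \VV \cap \PT$ is the intersection of exactly three planes of $\HSet$, and hence lies on exactly three distinct lines of the form $h \cap h'$ with $\{h, h'\} \subseteq \HSet$. The sampler picks an unordered pair $\{h, h'\}$ uniformly at random from $\binom{\HSet}{2}$, forms the line $\ell = h \cap h'$, computes the (possibly empty) segment $\sigma = \ell \cap \PT$, and returns the set $X$ of all points $\ell \cap h''$ with $h'' \in \HSet \setminus \{h, h'\}$ that land inside $\sigma$. If $X$ turns out to be empty, the pair is resampled.

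The uniformity claim then follows immediately: for any fixed vertex $v \in \VV \cap \PT$, the event $v \in X$ is exactly the event that the random pair $\{h,h'\}$ equals one of the three pairs of planes incident to $v$, so $\Prob{v \in X} = 3/\binom{n}{2}$, independent of $v$. Since all marginals are equal, conditioning on $X \neq \emptyset$ preserves uniformity. The size bound $|X| \leq n-2 \leq n-1$ holds because distinct elements of $X$ are witnessed by distinct third planes $h'' \in \HSet \setminus \{h, h'\}$, and non-emptiness is enforced by the retry loop (which terminates whenever $\VV \cap \PT$ is non-empty, since then some line through a pair of planes of $\HSet$ must pass through a vertex inside $\PT$).

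For the complexity, I would precompute $P_h = h \cap \PT$ for every plane $h \in \HSet$ in $O(k)$ time each, for a total of $O(nk)$, and then build the data structure of \lemref{count:and:sample} per plane on the at most $n-1$ lines $\{h \cap h' : h' \neq h\}$ clipped to $P_h$, yielding the claimed $O(nk\log n)$ preprocessing (only the lines that actually meet $\interiorX{P_h}$ need to be inserted). A single sampling attempt then runs in $O(n)$ time: $O(k)$ to clip $\ell$ against $\PT$, plus $O(n)$ to scan the remaining planes and test each intersection against $\sigma$.

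The main obstacle is guaranteeing that the retry loop does not dominate the runtime when $|\VV \cap \PT|$ is small. I would address this by replacing the uniform choice over pairs with a weighted choice at the plane level: precompute $N_h$, the number of arrangement vertices of $\VV$ lying inside $P_h$, using the \lemref{count:and:sample} structure on each plane, pick $h$ with probability $N_h / \sum_{h'} N_{h'}$, and then sample a single vertex on $h$ uniformly at random via the same structure. Since $\sum_h N_h = 3|\VV \cap \PT|$, a direct calculation shows every vertex is output with probability exactly $1/|\VV \cap \PT|$, so uniformity is preserved, the output (a singleton, trivially of size $\leq n-1$) is always non-empty, and no retry is needed. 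This biased variant inherits the same $O(nk \log n)$ preprocessing cost and admits an $O(n)$ per-sample bound after the prefix sums over the $N_h$'s are prepared.
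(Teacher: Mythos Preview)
Your line-based sampling idea matches the paper's: both pick a random line $h \cap h'$ of the arrangement and return all vertices on it inside $\PT$, and both argue uniformity via the fact that each vertex lies on exactly three such lines. The paper likewise observes that rejection sampling over all $\binom{n}{2}$ pairs is too slow and replaces it by a weighted draw.

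The gap is in your preprocessing bound for the weighted variant. You propose to build the data structure of \lemref{count:and:sample} on \emph{every} plane $h \in \HSet$, over the (up to) $n-1$ lines $\{h \cap h'\}_{h' \neq h}$ clipped to $P_h = h \cap \PT$. But \lemref{count:and:sample} costs $O(m(\log k + \log m))$ with $m$ the number of input lines, and in the worst case $\Theta(n)$ of those lines do meet $\interiorX{P_h}$; summed over all $n$ planes this is $\Theta(n^2 \log n)$, not $O(nk\log n)$. Your parenthetical about inserting only lines that meet $\interiorX{P_h}$ does not save you, because nothing prevents every plane from slicing through $\PT$.

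The paper sidesteps this by weighting over the $k$ \emph{faces of} $\PT$ rather than over the $n$ planes of $\HSet$. The observation is that any arrangement line intersecting $\PT$ is witnessed on $\partial\PT$: either it supports one of the $O(k)$ edges of $\PT$, or it hits the interior of (two) $2$-faces of $\PT$, where the hit points are themselves arrangement vertices on the face's supporting plane. Hence one builds \lemref{count:and:sample} only on the $k$ face planes, counts the interior arrangement vertices on each face (weighting them by $1/2$ since each transversal line contributes two such boundary vertices), adds the $O(k)$ edge lines explicitly, and samples a line proportionally. This gives $k$ structures at $O(n \log n)$ each, i.e., the claimed $O(nk \log n)$, after which a single draw plus the $O(n)$ scan along the chosen line yields $X$.
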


\newcommand{\face}{\Mh{f}}%

\begin{proof}
    A \emph{line} of the arrangement $\Arr = \ArrX{\HSet}$ is the
    intersection of two planes of $\HSet$. The idea is to randomly and
    uniformly pick a line of $\Arr$ that intersects $\PT$, and add all
    the vertices along this line that are in $\PT$, to the set $X$
    (observe that two intersections of this line with the boundary of
    polytope are in $X$). If there are $t$ lines of $\Arr$ that
    intersect $\PT$, then the probability of a vertex of
    $\VV \cap \PT$ to be picked is exactly $3/t$.

    Consider a 2d face $\face$ of $\partial\PT$. It forms a convex
    polygon in the plane $\hplane \in \HSet$ that supports it. Every
    line of $\Arr$ that intersects $\face$ and the interior of $\PT$
    does it in a vertex in the interior of $\face$ (i.e., formed by
    the intersection of the lines and the plane supporting
    $\face$). The idea is to pick such a vertex uniformly at random.

    To this end, the algorithm computes the number vertices of $\Arr$
    in the interior of $\face$ in $O( n \log n)$ time using
    \lemref{count:and:sample}.  As such, in $O(n k \log n )$ time, one
    can compute the number of vertices of $\Arr$ that lie in the
    interior of the faces of $\PT$. In addition, every face has the
    set of all planes that intersect its interior, which defines a set
    of lines, and can also be computed in $O(n k \log n)$ time
    overall, for all the faces of the polytope. This leaves the $O(k)$
    lines supporting the edges of the polytope, which can be computed
    as its own set explicitly. We thus have $O(k)$ disjoint sets of
    lines of the arrangement, that cover all the lines intersecting
    $\PT$ (some of these sets are implicit), and furthermore, we know
    the size of each set. We now randomly choose one of the sets by
    assigning each set probability proportional to its size, and then
    sample a line from the set uniformly. (Here, in the sets defining
    interior vertices to faces, each vertex has weight $1/2$ as two
    vertices define a single line.)
     
    The only non-trivial case is when the algorithm picks a vertex in
    uniform from the interior of a face. This can be done in
    $O( \log n)$ time using the precomputed data-structure of
    \lemref{count:and:sample} for this face. Once the vertex is
    chosen, we know the two planes that induce it, and thus the line
    that had been chosen.

    Once the line had been chosen, computing the vertices along it can
    be done in linear time by intersecting it with all the planes of
    $\PT$, and keeping only the vertices on the interval on the line
    that lies inside $\PT$.
\end{proof}

\begin{lemma}
    \lemlab{iteration}%
    Let $\delta \in (0,1)$ be a fixed constant, let $\PT$ be a
    polytope in three dimensions with $t = O( \log n)$ faces, and let
    $\HSet$ be a set of $n$ planes, where all the faces of $\PT$ lie
    on planes of $\HSet$. Let $\VV = \VVX{\HSet} \cap \PT$ be the set
    of vertices of $\ArrX{\HSet}$ that lie inside $\PT$. One can
    compute a polytope $\PT'$, that is the intersection of $\PT$ with
    $O( \log n)$ halfspaces, such that $\PT'$ contains at most
    $|\VV|/n^{\delta}$ vertices of $\ArrX{\HSet}$.  The algorithm runs
    in $O( n^{1+\delta} \log^3 n \log \log n )$ time, and uses
    $O( \log n)$ separation oracle queries.
\end{lemma}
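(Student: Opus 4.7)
The plan is to compute $\PT'$ by sampling a random subset $S$ of $\VV$ using the over-sampling procedure of \lemref{over:sample}, and then applying \lemref{p:s:reduction} to $S$ to carve a small-complexity polytope $\PT''$ out of $\PT$. Set $\eps = 1/n^{\delta}$. Once it is shown that $S$ is (with high probability) an $\eps$-net for the family of polytopes with $O(\log n)$ faces, taking $\PT' = \PT \cap \PT''$ and invoking the $\eps$-net theorem immediately yields $|\VV \cap \PT'| \leq \eps |\VV| = |\VV|/n^{\delta}$, since $\PT''$ avoids every point of $S$ and lies in the half-space intersection containing the feasible region.

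Concretely, the algorithm proceeds as follows. First, run the preprocessing of \lemref{over:sample} in $O(n t \log n) = O(n \log^2 n)$ time. Then invoke the sampling subroutine $m = \Theta(n^{\delta} \log^2 n \log \log n)$ times, producing $S \subseteq \VV$ with $|S| \leq m(n-1) = O(n^{1+\delta} \log^2 n \log \log n)$ in $O(mn)$ time overall. Feed $S$ to \lemref{p:s:reduction}, which in $O(|S|)$ additional time and $O(\log |S|) = O(\log n)$ separation-oracle queries either returns a feasible point (in which case we are done) or a polytope $\PT''$, defined by $O(\log n)$ oracle-returned half-spaces, that avoids every point of $S$. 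Output $\PT' = \PT \cap \PT''$, which is $\PT$ intersected with $O(\log n)$ half-spaces, as required. Summing the costs yields the stated $O(n^{1+\delta}\log^3 n \log \log n)$ running time and $O(\log n)$ separation-oracle queries.

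The main obstacle is verifying that $S$ is an $\eps$-net. By the argument of \lemref{arr_poly_vc_dim} applied in three dimensions --- the characteristic function of a point being inside a polytope with $k$ faces is an AND of $k$ half-space indicators --- polytopes with $O(\log n)$ faces have VC dimension $d = O(\log n \log \log n)$. The subtle point is that a single call of \lemref{over:sample} returns a correlated bundle of vertices --- all those of $\ArrX{\HSet} \cap \PT$ lying on one uniformly-sampled line --- rather than an i.i.d.\ sample. However, for any range $R \subseteq \VV$ with $|R| \geq \eps |\VV|$, each line of $\ArrX{\HSet}$ carries at most $n-1$ vertices inside $\PT$, so $R$ is covered by at least $|R|/(n-1)$ distinct lines, while the total number of lines meeting $\PT$ is at most $\binom{n}{2}$. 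Hence each call hits $R$ with probability $\Omega(\eps/n)$, and taking $m$ as above and union-bounding over the $|\VV|^{O(d)}$ combinatorially distinct ranges shows that $S$ is an $\eps$-net with high probability, completing the proof.
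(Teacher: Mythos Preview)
Your overall architecture---over-sample via \lemref{over:sample}, feed the result to \lemref{p:s:reduction}, intersect with $\PT$---is exactly the paper's, and you correctly flag that the correlated output of \lemref{over:sample} is the one non-routine point. However, the union-bound calculation you supply in the last paragraph does not close. You establish that a single call hits a fixed $\eps$-heavy range $R$ with probability $\Omega(\eps/n)$; with $m=\Theta(n^{\delta}\log^{2}n\log\log n)$ calls and $\eps=n^{-\delta}$, the probability that \emph{all} calls miss $R$ is at least
\[
\bigl(1-c\,\eps/n\bigr)^{m}\;\ge\;\exp\bigl(-2c\,m\eps/n\bigr)\;=\;\exp\!\Bigl(-\Theta\bigl((\log^{2}n\log\log n)/n\bigr)\Bigr)\;=\;1-o(1),
\]
so already for a \emph{single} heavy range the miss probability is close to $1$, and the union bound over $|\VV|^{O(\log n\log\log n)}$ ranges has no chance. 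To push this argument through you would need $m=\Omega(n^{1+\delta}\,\mathrm{polylog}\,n)$ calls, which makes $|S|=\Theta(n^{2+\delta})$ and destroys the stated running time.

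The paper, for its part, does not carry out this calculation at all: it simply asserts that the union of the $m$ over-samples ``is a super-set of an $\eps$-net'' and moves on. So your proposal is faithful to the paper's strategy; what fails is the specific probabilistic argument you add to justify the $\eps$-net claim. A correct justification would need something stronger than the $\Omega(\eps/n)$ per-call hit probability that uniform line sampling provides---for instance, sampling lines with probability proportional to $|\ell\cap\VV|$ would give hit probability $\Omega(\eps)$ (since $\sum_{\ell\cap R\neq\emptyset}|\ell\cap\VV|\ge 3|R|$ while $\sum_\ell|\ell\cap\VV|=3|\VV|$), and then your value of $m$ would suffice; but that is not what \lemref{over:sample} delivers.
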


\begin{proof}
    Let $\eps = 1/n^\delta$. An $\eps$-net of $\VV$ for polytopes with
    $O( \log^2 n)$ faces, requires a sample of size
    \begin{equation*}
        m%
        =%
        O \pth{ \frac{\log^2 n \log \log n}{\eps} \log n }
        =%
        O(n^\delta \log^3 n \log \log n)
    \end{equation*}
    (this sample works with high probability). We invoke the
    over-sampling algorithm of \lemref{over:sample}, $m$ times for
    $k=O(\log^2 n)$. This takes
    $O( n k \log n + mn ) = O( n^{1+\delta} \log^3 n \log \log n )$
    time, and results in a set $Y$ of $O( n m) \approx n^{1+\delta}$
    points, that is super-set of an $\eps$-net (i.e., with high
    probability the set contains an $\eps$-net) for polytopes with
    $O(\log^2 n )$ faces.
    
    Next, we invoke the algorithm of \lemref{p:s:reduction}, to
    compute a polytope with $O( \log n)$ faces that does not contain
    any member of $Y$. Let $\PT'$ be the intersection of $\PT$ with
    this polytope. As $\PT'$ has $O( \log^2 n)$ faces, it follows by
    the $\eps$-net theorem that $\PT'$ contains at most
    $\eps \cardin{\VV}$ vertices of $\ArrX{\HSet}$ in it with high
    probability, as desired.
\end{proof}

Starting with $\Re^3$ as the initial polytope, the algorithm
repeatedly uses \lemref{iteration} to reduce the number of vertices of
the arrangement inside the current polytope by a factor of
$1/n^{\delta}$. In the $i$\th iteration, the current polytope has
$O( i \log n)$ faces, and as such the final polytope has at most
$O( (3/\delta) \log n)$ faces, as the algorithm has no vertices in it
after $\ceil{3/\delta}$ iterations. We conclude the following.

\begin{theorem}
    \thmlab{ulp:3:d}%
    For any $\delta \in (0,1)$, an instance of undecided \LP in three
    dimensions with $n$ constraints, can be solved using
    $O( (\log n)/\delta)$ separation oracle queries, in
    $O( n^{1+\delta} \log^4 n \log \log n)$ time.
\end{theorem}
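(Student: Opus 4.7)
The plan is to iterate \lemref{iteration} until the surviving candidate polytope meets no vertex of the arrangement $\ArrX{\HSet}$. First I would initialize $\PT_0 = \Re^3$, which contains all $\binom{n}{3} = O(n^3)$ vertices of $\ArrX{\HSet}$. In the $i$\th iteration I would feed $\PT_{i-1}$ into \lemref{iteration} to obtain a polytope $\PT_i \subseteq \PT_{i-1}$, formed by intersecting $\PT_{i-1}$ with $O(\log n)$ oracle-returned halfspaces, such that $\PT_i$ contains at most a $1/n^\delta$ fraction of the arrangement vertices in $\PT_{i-1}$. After $\ceil{3/\delta}$ rounds the surviving count drops below $1$, so the resulting polytope $\PT^*$ contains no vertex of $\ArrX{\HSet}$.

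Next I would argue that once $\PT^*$ is vertex-free, one additional separation query resolves the problem. Every defining halfspace of $\PT^*$ was produced by the separation oracle and therefore corresponds to a plane of $\HSet$. If some plane $\hplane \in \HSet$ strictly crossed the interior of $\PT^*$, then by convexity $\hplane$ would have to cross an edge of $\PT^*$; that intersection would be the common point of three planes of $\HSet$, i.e., a vertex of $\ArrX{\HSet}$ lying in $\PT^*$, contradicting vertex-freeness. Hence the interior of $\PT^*$ lies inside a single open cell of $\ArrX{\HSet}$, and a single oracle query on any interior point of $\PT^*$ either returns that point as feasible or yields a violated constraint whose halfspace is disjoint from $\PT^*$, certifying infeasibility of the given \ULP.

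Finally I would tally the budget. Each invocation of \lemref{iteration} uses $O(\log n)$ separation queries and runs in $O( n^{1+\delta} \log^3 n \log \log n )$ time, so the $\ceil{3/\delta}$ iterations plus the closing query give the claimed $O((\log n)/\delta)$ query bound and, after absorbing the $O(1/\delta)$ factor into an additional $\log n$, the $O( n^{1+\delta} \log^4 n \log \log n )$ running time. The main subtlety I expect is the accumulation of face complexity: \lemref{iteration} assumes its input polytope has $O(\log n)$ faces, while $\PT_i$ inherits $O(i \log n)$ faces from previous rounds. For fixed $\delta$, however, $i \leq \ceil{3/\delta}$ is a constant, so the face count of $\PT_i$ stays $O(\log n)$ throughout, and the internal $\eps$-net argument of \lemref{iteration} (phrased for polytopes with up to $O(\log^2 n)$ faces) leaves ample slack to absorb this growth.
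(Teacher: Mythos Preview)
Your proposal is correct and follows essentially the same approach as the paper: iterate \lemref{iteration} $\ceil{3/\delta}$ times starting from $\Re^3$, track the accumulating face count as $O(i\log n)$, and finish with a single query once the polytope is vertex-free. Your additional justification for why a vertex-free $\PT^*$ lies in a single cell of $\ArrX{\HSet}$ is more explicit than the paper's (which simply asserts the conclusion), and your handling of the face-count subtlety matches the paper's reasoning.
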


\begin{remark}
    \remlab{ulp:d:b:3}%
    (A) Combining the algorithm of \lemref{cutting}, together with the
    algorithm of \thmref{ulp:3:d}, when the dimension is three,
    results in an algorithm that solves \ULP in $d>3$ dimensions, with
    $O( \delta^{-1} \log^{d-2} n)$ separation queries, and running
    time $\tldO(n^{1+\delta})$.

    (B) \thmref{ulp:3:d} can be further improved by reducing the
    complexity of the active polytope after every iteration. This only
    improves the running time by a polylogarithmic factor, and we omit
    the details for the sake of simplicity of exposition.
\end{remark}

\section{Covering points by monochromatic %
   balls using proximity queries}

\subsection{Learning a single monochromatic %
   ball using \NN/\FN queries}
\seclab{single:ball}

\paragraph*{Problem statement.}

The input is a set $\PS=\{\pp_1,...,\pp_n\}$ of $n$ points in
$\Re^d$. The points are either blue or red, but their color is not
initially provided.  We have access to a nearest-neighbor (\NN)
oracle, such that given a query point and a color, it returns the
closest point of this color to the query point. Similarly, we are
given access to a furthest-neighbor (\FN) oracle, that returns the
furthest point of this color in $\PS$.

The task at hand is to correctly classify all the given points as
either red or blue, minimizing the number of queries used.

\paragraph*{Single ball.}

Here, the assumption is that all the red points in $\PS$ are inside a
single ball, while all the blue points are outside. Our purpose here
is to develop an efficient algorithm that performs as few oracle
queries as possible, and decodes the color of all the points of $\PS$.

\begin{figure}
    \phantom{}%
    \hfill%
    \includegraphics{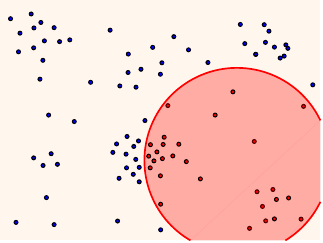}%
    \hfill%
    \includegraphics[page=2]{figs/one_disk}%
    \hfill%
    \phantom{}%
    \caption{A set of red and blue points in $\Re^2$ where a single
       disk suffices to separate the labels. A red furthest-neighbor
       (\FN) query and a blue nearest-neighbor (\NN) query at $q$
       confirms that a monochromatic disk centered at that point
       cannot contain every red point of the input.}
    \figlab{one_disk}
\end{figure}

\subsubsection{Lifting to three dimensions}
\seclab{lifting}

Let $\diskY{\pp}{r}$ denote the disk of radius $r$ centered at
$\pp$. Consider the mapping of such a disk to the point
\begin{equation*}
    \pd%
    = 
    \mDiskY{  \diskY{\pp}{r} } = \bigl(2\pp_x, 2\pp_y, r^2 - \pp_x^2 
    - \pp_y^2 \bigr).
\end{equation*}
This can be interpreted as a somewhat bizarre encoding of disks as
points in three dimensions.  We also map a point $\pq\in\Re^2$ in the
plane, to the plane
\begin{equation*}
    \mToPlaneX{ \pq} = \pth{ z = - \pq_xx - \pq_y{} y +
       \pq_x^2 + \pq_y^2}.      
\end{equation*}
Note that if $\pd$ is above $\hplane = \mToPlaneX{\pq}$ then
\begin{align*}
  &\pd_z \geq  -\pq_x\pd_x - \pq_y\pd_y + \pq_x^2 + \pq_y^2.      
    \quad\iff\quad%
    r^2 - \pp_x^2 -\pp_y^2  \geq - 2\pq_x\pp_x - 2\pq_y \pp_y +
    \pq_x^2 +
    \pq_y^2.      
  \\
  &\iff%
    r^2 \geq \bigl( \pp_x -  \pq_x\bigr)^2
    +\bigl( \pp_y -  \pq_y\bigr)^2
    \quad\iff\quad%
    \pq \in \diskY{\pp}{r}.
\end{align*}

\subsubsection{The algorithm}

The following algorithm is described in the plane, but also works in
higher dimensions with minor modifications.
The above lifting of disks to points (in three dimensions), and points
to planes, has the property that a point is above a plane $\iff$ the
original disk contains the original point. In particular, if a lifted
disk $\pd \in \Re^3$ is strictly below a plane $\hplane$, then the
original disk does not contain $\pq$ (i.e., the original point lifted
to $\hplane$).

In the lifted space, the input is a set of $n$ planes in three
dimensions. If a plane $\hplane$ is red, then the computed disk must
contain the original point, which means that the encoded disk $\pd$
must lie above $\hplane$. Namely, every red point, corresponds to a
commitment of the corresponding plane, to the halfspace lying
(vertically) above it. Similarly, an original blue point corresponds
to a commitment to the downward halfspace. We want to find a point in
this space which is feasible (after all the constraints have been
committed).

\paragraph*{The \ULP oracle queries.} %
A labeling oracle query on a plane, corresponds to providing the color
of the original point, which can be done using a single \NN colored
query. A feasibility oracle query, is a point $\pd$ in three
dimensions, which corresponds to a disk, which asks whether it
contains all the red points, and no blue points. The later can be
answered by performing a blue \NN query, and a red \FN query, and then
making a decision according to how the points interact with the query
disk.

\bigskip

As such, we can plug the lifted instance into the algorithm of
\thmref{u:d:lp:cutting}.  This algorithm also works verbatim in higher
dimensions. We thus get the following.
\begin{theorem}
    Let $\PS$ be a set of $n$ points in $\Re^d$.  Assume that there is
    an underlying coloring of the point set by (say) red and blue, and
    there is an oracle that can answer nearest-neighbor and
    furthest-neighbor colored queries on $\PS$.  The above algorithm
    computes a ball that contains all of the red points, and no blue
    points, if such a ball exists, using $O( \log^{d-1} n)$ \NN/\FN
    oracle queries. The running time of the algorithm is $O_d(n)$.
\end{theorem}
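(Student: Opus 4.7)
The strategy is exactly the reduction outlined in \secref{lifting}: lift each input point to a hyperplane in $\Re^{d+1}$, observe that the problem of deciding which input points are red and which are blue becomes an instance of implicit undecided \LP in dimension $d+1$, and then invoke \thmref{u:d:lp:cutting}. The only genuine work is (i) checking that the 2d lifting formulas extend cleanly to $\Re^d$, and (ii) implementing the labeling and separation oracles demanded by \thmref{u:d:lp:cutting} using only colored \NN and \FN queries on $\PS$.

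First I would generalize the lifting: a ball $\diskY{\pp}{r} \subseteq \Re^d$ maps to the point $\mDiskY{\diskY{\pp}{r}} = (2\pp_1,\ldots,2\pp_d, r^2 - \sum_i \pp_i^2) \in \Re^{d+1}$, and a point $\pq \in \Re^d$ maps to the hyperplane $\mToPlaneX{\pq} = \{ z = -\sum_i \pq_i x_i + \sum_i \pq_i^2\}$. The same algebraic manipulation as in the planar case shows that the lifted ball lies (vertically) above $\mToPlaneX{\pq}$ iff $\pq \in \diskY{\pp}{r}$. Consequently, a ball separating red points from blue points corresponds exactly to a point in $\Re^{d+1}$ that lies in the upper halfspace of every hyperplane coming from a red input point and in the lower halfspace of every hyperplane coming from a blue input point. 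Since the colors are unknown, this is precisely an undecided \LP instance, and a feasible point corresponds to a valid separating ball.

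Next I would implement the two oracles needed by \thmref{u:d:lp:cutting}. For the labeling oracle on a hyperplane $\mToPlaneX{\pq}$, it suffices to ask the \NN oracle for the red point closest to $\pq$ and compare to $\pq$; if $\pq$ itself is returned, the point is red, otherwise it is blue (a single colored \NN query suffices). For the separation oracle on a lifted point $\pd = \mDiskY{\diskY{\pp}{r}}$, I query the blue \NN of $\pp$ and the red \FN of $\pp$: if the nearest blue point lies at distance $\leq r$, then the corresponding hyperplane is violated in the upper direction and can be returned as the separating constraint; if the farthest red point lies at distance $\geq r$, the corresponding hyperplane is violated in the lower direction and is returned instead; otherwise the disk is valid and the algorithm terminates. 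Each oracle call of \thmref{u:d:lp:cutting} is thus emulated by $O(1)$ colored \NN/\FN queries on $\PS$, so the query complexity carries over up to a constant factor.

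Finally, applying \thmref{u:d:lp:cutting} to the (implicit) \ULP in $\Re^{d+1}$ yields the stated query bound and running time. The main (very minor) obstacle is formal: one should note that the lifted \ULP is an instance of the implicit variant discussed in \remref{implicit}, since the committed direction of each hyperplane is determined by the (unknown) underlying coloring, which the oracles above faithfully report; with that observation, the algorithm of \lemref{cutting} applies verbatim, and the theorem follows.
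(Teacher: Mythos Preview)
Your proposal is correct and follows essentially the same route as the paper: lift to $\Re^{d+1}$, realize the labeling oracle by a single colored \NN query and the separation oracle by a blue \NN plus a red \FN on the center of the candidate ball, then invoke \thmref{u:d:lp:cutting}. The only (harmless) quibble is that your appeal to \remref{implicit} is unnecessary---every input point contributes a constraint here, so this is the standard \ULP rather than the implicit variant, and \thmref{u:d:lp:cutting} applies directly.
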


\subsection{Learning a cover by $\kopt$ %
   monochromatic balls using \NN queries}
\seclab{multi:ball}

\paragraph*{Problem statement.}

The input is a set of $n$ colored (say, by two colors) points $\PS$,
and assume that the points of $\PS$ can be covered by $\kopt$ balls,
such that each ball covers only points of a single color. Here, the
access to the points is via a \NN oracle queries (i.e., no \FN
queries). The task at hand is to classify the points correctly (i.e.,
decide their color) using a small number of oracle queries, by an
efficient algorithm.

\subsubsection{The algorithm}

\begin{lemma}
    \lemlab{canonical}%
    Let $\QS$ be a set of $m$ points in $\Re^d$, and let $\RangeSet$
    be the (infinite) set of all balls in $\Re^d$. One can compute, in
    $O(m^{d+2})$ time, the family of $O(m^{d+1})$ canonical sets
    induced by $\RangeSet$ on $\QS$, that is
    \begin{math}
        \FamilyX{\QS} = \Set{ \ball \cap \QS}{\ball \in \RangeSet}.
    \end{math}
\end{lemma}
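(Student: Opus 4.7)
I plan to reduce the problem from balls in $\Re^d$ to halfspaces in $\Re^{d+1}$ via the standard paraboloid lifting, and then count and enumerate via the dual arrangement, exactly as in \secref{lifting}.

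First, map each point $\pq \in \QS$ to $\liftX{\pq} = (\pq, \|\pq\|^2) \in \Re^{d+1}$. As observed in \secref{lifting}, a ball $\ball = \diskY{c}{r}$ in $\Re^d$ corresponds to a halfspace $\liftX{\ball}$ in $\Re^{d+1}$ (bounded from above by the plane $\mToPlaneX{c}$, shifted according to $r$), with the property that $\pq \in \ball \iff \liftX{\pq}$ lies in $\liftX{\ball}$. Hence
\begin{equation*}
    \FamilyX{\QS}
    \;=\;
    \Set{\ball \cap \QS}{\ball \in \RangeSet}
    \;\cong\;
    \Set{\hsp \cap \liftX{\QS}}{\hsp \text{ a halfspace in } \Re^{d+1}},
\end{equation*}
so it suffices to bound and enumerate the subsets of $\liftX{\QS}$ induced by halfspaces.

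Second, dualize in $\Re^{d+1}$: each lifted point $\liftX{\pq}$ becomes a hyperplane $\pq^*$, and any halfspace in $\Re^{d+1}$ corresponds (via its bounding hyperplane) to a point in the dual space. The subset of $\liftX{\QS}$ captured by a halfspace depends only on the position of its dual point relative to each of the $m$ dual hyperplanes; thus canonical subsets are in bijection with the sign-vectors realized by the arrangement $\Arr(\{\pq^*\}_{\pq \in \QS})$, which has $O(m^{d+1})$ faces (over all dimensions). This immediately yields $|\FamilyX{\QS}| = O(m^{d+1})$.

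Third, for the running time I would construct this arrangement of $m$ hyperplanes in $\Re^{d+1}$ in $O(m^{d+1})$ time using the standard incremental construction, then traverse its face graph. Each face is labeled with its sign-vector, and adjacent faces differ in exactly one coordinate, so I can maintain the corresponding subset of $\QS$ while walking the arrangement. Writing out each subset explicitly costs $O(m)$ per subset, giving $O(m^{d+1}) \cdot O(m) = O(m^{d+2})$ total time. The only mild subtlety is handling degeneracies (points in non-general position after lifting, or dual hyperplanes not in general position), which can be dealt with by a standard symbolic perturbation; this does not affect the asymptotic bounds. There is no real obstacle here beyond bookkeeping — the result is a direct consequence of the lifting and the textbook complexity of hyperplane arrangements.
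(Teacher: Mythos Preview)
Your proposal is correct and is essentially the paper's own argument: reduce balls in $\Re^d$ to halfspaces/hyperplanes in $\Re^{d+1}$ via the lifting of \secref{lifting}, build the arrangement of the $m$ resulting hyperplanes in $O(m^{d+1})$ time, and read off one canonical set per face at cost $O(m)$ each. The only cosmetic difference is that the paper's lifting $\mToPlaneX{\cdot}$ sends points directly to hyperplanes (and balls to points), so your separate ``lift to the paraboloid, then dualize'' step is already folded into the map the paper uses; after that the two proofs coincide.
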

\begin{proof}
    Using the lifting from \secref{lifting}, we have a set of $m$
    hyperplanes in $\Re^{d+1}$ dimensions. Each face of the
    arrangement of these hyperplanes corresponds to a canonical
    set. This arrangement can be computed in $O(m^{d+1})$ time. This
    also computes for each face (of any dimension) the canonical set
    that it realizes. Since each such canonical set is of size $O(m)$,
    the explicit listing of these sets requires $\Theta(m^{d+2})$
    time/space.
\end{proof}

For $i>0$, let $\PS_i$ denote the set of unlabeled points at the
beginning of the $i$\th iteration and let $n_i=|\PS_i|$ (i.e.,
$\PS_1 =\PS$, and $n_1=n$). The algorithm computes a random sample
$\Sample_i\subseteq \PS_i$ of size $O\pth{\kopt \log \kopt}$, and
determines the color of the points in $\Sample_i$ using \NN queries.
The algorithm then computes the set of canonical sets
$\Family_i =\FamilyX{\Sample_i}$, using \lemref{canonical}. For every
range $\range \in \Family_i$, such that
\begin{equation*}
    \sMeasureY{\range}{\Sample_i}%
    =%
    \frac{\cardin{\range \cap \Sample_i}}{ \cardin{\Sample_i}}%
    \geq%
    \frac{1}{2\kopt},    
\end{equation*}
and such that all the points in $\range$ are of the same color, the
algorithm runs a subroutine, described below in \secref{mono:ball}, to
decide if there is a monochromatic ball that contains all the points
of $\range$. Formally, the subroutine decides if there is ball
$\ball$, such that all the points of $\PS \cap \ball_i$ are colored by
the same color, and $\range \subseteq \ball$.  If no such ball is
found, the algorithm repeats this iteration until success.

The algorithm sets $\ball_i$ to be the ball computed, such that
$\cardin{\ball_i \cap \PS_i}$ is maximized among all such balls.  The
algorithm then adds $\ball_i$ to the computed cover, assigns all the
points in $\ball_i \cap \PS_i$ their color, and sets
$\PS_{i+1} \leftarrow \PS_i \setminus \ball_i$.

The algorithm stops once all points have been assigned their correct
color (i.e., $\PS_i =\emptyset$).

\subsubsection{Searching for a monochromatic %
   ball containing a set $\range$} %
\seclab{mono:ball}

The subroutine ,given a set of points $\range$ that are all (say) red,
searches for a ball $\ball$ such that the points in $\PS \cap \ball$
are all red, and $\range \subseteq \ball$. The subroutine is similar
in spirit to the single ball case of \secref{single:ball}, but the
details are somewhat different.

Specifically, consider the set $\BSet$ of all blue points in $\PS$
(this set is not explicitly known, as there are points that their
color is yet unknown), and consider the problem of computing a ball
that contains all the points of $\range$ and none of the points of
$\BSet$.  This is an implicit undecided optimization problem, which
via the lifting of \secref{lifting}, reduces to implicit undecided
\LP. A separation oracle here, in the original settings, is a query
ball $\qball$. If $\qball$ contains a blue point, one can find it by
performing a colored (i.e., blue) nearest-neighbor query on the set
original set of points $\PS$. Similarly, one can verify that $\ball$
contains all the red points of $\range$ by (say) scanning. Thus, one
can use the implicit undecided \LP algorithm of \lemref{cutting} (see
\remref{implicit}).

\subsubsection{Analysis}

Informally, each successful iteration reveals the color of a
$\Omega(1/\kopt)$-fraction of the unlabeled points. Specifically, at
the $i$\th iteration, at least one of the $\kopt$ balls of the optimal
solution must contain at least $n_i/\kopt$ points of $\PS_i$, which
are all of the same color (and are yet unlabeled).  As such, after
$O( \kopt \log n)$ iterations, the algorithm correctly exposes the
colors of the points in $\PS$.

\paragraph*{Preliminaries.}

We need the following standard results and definitions
\cite{h-gaa-11}.

\begin{defn}
    \deflab{measure}%
    Let $\RangeSpace = (\GroundSet,\RangeSet)$ be a range space, and
    let $\FGroundSet$ be a finite (fixed) subset of $\GroundSet$.
    Consider a subset $\SetC \subseteq \FGroundSet$ (which might be a
    multi-set). For a range $\range \in \RangeSet$, its
    \emphi{measure} is denoted by $\MeasureX{\range}$, and its
    \emphi{estimate} is $\sMeasureX{\range}$, where
    \begin{equation*}
        \displaystyle%
        \MeasureX{\range}%
        =%
        \frac{\cardin{\range \cap \FGroundSet}}{\cardin{\FGroundSet}}
        \qquad\,\text{~and~}\,\qquad%
        \sMeasureX{\range}%
        =%
        \frac{\cardin{\range \cap \SetC}}{\cardin{\SetC}}.        
    \end{equation*}
\end{defn}

\begin{defn}
    Let $\RangeSpace = (\GroundSet, \RangeSet)$ be a range space, and
    let $\FGroundSet$ be a finite subset of $\GroundSet$. For
    $0 \leq \eps, p \leq 1$, a subset $\SetC \subseteq \FGroundSet$ is
    a (relative) \emphi{$(\eps,p)$-approximation} for $\FGroundSet$,
    if for any range $\range \in \RangeSet$, we have
    \begin{align*}
      \MeasureX{\range} \geq p
      &\implies
        (1-\eps) \MeasureX{\range}
        \leq %
        \sMeasureX{\range}%
        \leq 
        (1+\eps) \MeasureX{\range}.\\
      \MeasureX{\range} < p
      &\implies
        \MeasureX{\range} - \eps p
        \leq %
        \sMeasureX{\range}%
        \leq 
        \MeasureX{\range} + \eps p.
    \end{align*}    
\end{defn}

\begin{theorem}[\cite{lls-ibscl-01,h-gaa-11}]
    \thmlab{relative}%
    Let $\RangeSpace = (\GroundSet, \RangeSet)$ be a range space with
    \VC dimension $\Dim$, and let $\FGroundSet$ be a finite subset of
    $\GroundSet$.  A sample $\SetC$ of size
    \begin{math}
        \displaystyle O \Bigl( \frac{1}{\eps^2p}\bigl( \Dim \log
        p^{-1} + \log {\BadProb^{-1}} \bigr)\Bigr)
    \end{math}
    is a relative $(\eps,p)$-\si{approx\-\si{imation}} with
    probability $\geq 1-\BadProb$.
\end{theorem}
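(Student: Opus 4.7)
The plan is to prove this via the classical symmetrization (double-sample) argument combined with a union bound over a shattered family. First, I would introduce an independent ``ghost'' sample $\SetC'$ of the same size as $\SetC$, and reduce the event that some range $\range$ has a bad deviation between $\sMeasureX{\range}$ and $\MeasureX{\range}$ to the event that some range has a bad deviation between $\sMeasureX{\range}$ and $\sMeasureY{\range}{\SetC'}$, paying only a constant factor in the failure probability. This step is standard once $\cardin{\SetC}$ is large enough that the ghost sample concentrates around $\MeasureX{\range}$ for the relevant ranges.

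Next, conditioning on the multiset $\SetC \cup \SetC'$, I would invoke the Sauer--Shelah lemma to observe that only $O\pth{ (2\cardin{\SetC})^{\Dim} }$ distinct range projections arise on this union, so the union bound only needs to be taken over this many ranges. For any fixed range, the conditional deviation between the two empirical measures, given the union, is controlled by a Chernoff-type bound applied to a random partition of the union into two halves (equivalently, sampling without replacement), of the form $\exp\pth{-c \eps^2 p \cardin{\SetC}}$ in both the heavy and light regimes.

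Treating the two regimes separately is the crux. For a \emph{heavy} range with $\MeasureX{\range} \geq p$, the multiplicative Chernoff bound yields a $(1 \pm \eps)$-deviation with the above failure probability. For a \emph{light} range with $\MeasureX{\range} < p$, the additive form of the same Chernoff bound gives deviation at most $\eps p$, again with failure probability $\exp\pth{-c \eps^2 p \cardin{\SetC}}$, since the variance of the relevant indicator sum is bounded by $p$. Combining the two regimes with the union bound yields the requirement $\cardin{\SetC}^{\Dim} \exp\pth{-c \eps^2 p \cardin{\SetC}} \leq \BadProb$, and solving this with the standard fixed-point trick (turning $\log \cardin{\SetC}$ into $\log p^{-1}$) produces the sample size stated in the theorem.

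The main obstacle will be the careful handling of both regimes simultaneously within a single symmetrization step, specifically verifying that the ghost sample concentrates well enough that the symmetrization loss remains a constant factor even for light ranges where the absolute count is near zero. This is precisely the technical content that separates relative approximations from ordinary additive $\eps$-approxim\-ations, and it is carried out carefully in \cite{lls-ibscl-01} and in the textbook treatment of \cite{h-gaa-11}, which we invoke here as a black box.
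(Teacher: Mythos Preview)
The paper provides no proof of this theorem: it is stated with an explicit citation to \cite{lls-ibscl-01,h-gaa-11} and used as a black box. Your proposal therefore already goes well beyond what the paper does; the sketch you give (symmetrization with a ghost sample, Sauer--Shelah, then a multiplicative/additive Chernoff split for the heavy and light regimes) is exactly the standard argument carried out in those references, and indeed your final sentence says you are ultimately invoking them as a black box as well. So the two ``approaches'' agree: cite the result.
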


\begin{fact}
    Let $\GroundSet=\Re^d$, and let $\RangeSet$ be the set of balls in
    $\Re^d$.  The $\VC$ dimension of $(\GroundSet, \RangeSet)$ is
    $d+1$.
\end{fact}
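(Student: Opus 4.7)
For the lower bound $\VC \geq d+1$, I would exhibit $d+1$ affinely independent points $\pp_1, \ldots, \pp_{d+1} \in \Re^d$ (for example, the vertices of a regular simplex) that can be shattered. For any subset $S \subseteq \brc{1, \ldots, d+1}$, affine independence yields a hyperplane strictly separating $\brc{\pp_i}_{i \in S}$ from $\brc{\pp_j}_{j \notin S}$. I would then construct a realizing ball by placing its center on the normal to this hyperplane, far on the side opposite to $S$, with radius chosen so that the ball just contains the points of $S$. As the distance grows, the boundary sphere locally approximates the separating hyperplane arbitrarily well near the points, so for sufficiently large distance the ball contains exactly $\brc{\pp_i}_{i \in S}$; a small extra margin in the radius guarantees strict separation.

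For the upper bound $\VC \leq d+1$, the plan is to linearize the containment condition. Parametrize each ball $\ball(c, r)$ by the vector $(c, R) \in \Re^{d+1}$, where $R = r^2 - \|c\|^2$. A direct calculation yields
\begin{equation*}
    \pp \in \ball(c, r)
    \iff
    2 \langle \pp, c \rangle + R \geq \|\pp\|^2,
\end{equation*}
which is a linear inequality in the $(d+1)$-dimensional parameter $(c, R)$. Hence each input point $\pp$ determines a closed halfspace $H_{\pp} \subseteq \Re^{d+1}$ in parameter space, and shattering $n$ input points by balls is equivalent to the arrangement of the $n$ bounding hyperplanes $\partial H_{\pp_1}, \ldots, \partial H_{\pp_n}$ realizing all $2^n$ distinct sign patterns.

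The key combinatorial input is the classical cell-count formula: $n$ hyperplanes in $\Re^{k}$ produce at most $\sum_{i=0}^{k} \binom{n}{i}$ cells, each carrying a single sign pattern. Applied with $k = d+1$ and $n = d+2$, this gives at most $\sum_{i=0}^{d+1} \binom{d+2}{i} = 2^{d+2} - 1 < 2^{d+2}$ realizable sign patterns, so $d+2$ points cannot be shattered, completing the upper bound. The main steps to verify are the linearization (a short algebraic calculation) and the cell-count bound (standard); the lower-bound construction has only a minor subtlety about strict separation, which is handled by the radius-margin trick.
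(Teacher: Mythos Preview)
The paper states this as a \emph{Fact} without proof --- it is a standard result, and the paper simply cites it for later use. Your proposal supplies a correct proof, and in fact your linearization in the upper bound is precisely the lifting the paper itself employs in \secref{lifting} (points to hyperplanes, balls to points in $\Re^{d+1}$). The cell-count argument is clean and correct.

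One small slip in your lower-bound construction: the center of the approximating ball should be placed far on the \emph{same} side as $S$, not the opposite side. If the center lies on the $S^c$ side of the separating hyperplane, then every point of $S^c$ is closer to the center than every point of $S$, so any ball large enough to contain $S$ necessarily swallows $S^c$ as well. With the center far on the $S$ side, the points of $S$ are the near ones and the boundary sphere flattens toward the separating hyperplane as desired. Also, handle the trivial cases $S=\emptyset$ and $S=\{1,\ldots,d+1\}$ separately (tiny ball away from all points; huge ball containing all points), since those do not arise from a separating hyperplane.
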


\begin{lemma}
    Given a set $\range \subseteq \PS$ of (say) red points, such that
    there exists a ball $\ball$ such that all the points of
    $\ball \cap \PS$ are of the same color, and
    $\range \subseteq \ball$, the subroutine of \secref{mono:ball}
    returns a monochromatic ball that covers the points of $\range$.
\end{lemma}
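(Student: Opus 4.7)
The plan is to verify that the subroutine of \secref{mono:ball} correctly reduces the search for a monochromatic ball to an instance of implicit undecided \LP in $\Re^{d+1}$, and then invoke the guarantees of \lemref{cutting} (via \remref{implicit}).

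First, I would set up the lifting described in \secref{lifting}: each point of $\PS$ is mapped to a hyperplane in $\Re^{d+1}$, and a candidate ball is mapped to a point in $\Re^{d+1}$ whose vertical position relative to the hyperplane encodes containment. The known red points of $\range$ correspond to a committed collection of ``above'' halfspaces (the ball must contain them), and the points of $\PS$ whose color has not yet been exposed correspond to \emph{undecided} constraints -- each such point secretly prefers one of the two halfspaces depending on whether it turns out to be red (the ball must contain it) or blue (the ball must exclude it). Because the hypothesized ball $\ball$ exists and is monochromatic, this implicit undecided \LP is feasible, and any feasible point $\pd$ in $\Re^{d+1}$ decodes to a ball $\ball'$ that contains every red point of $\range$ and no blue point of $\PS$.

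Next, I would check that the separation oracle required by the algorithm of \lemref{cutting} is available using only colored \NN queries. Given a query point $\pd \in \Re^{d+1}$, decoded to a ball $\ball'$, the algorithm needs to either declare feasibility or return a violated constraint. Feasibility fails exactly when (i) some point of $\range$ lies outside $\ball'$ -- which is checked explicitly since $\range$ is known -- or (ii) some blue point of $\PS$ lies inside $\ball'$, which is detected by a blue colored \NN query with center equal to the center of $\ball'$. In either case the offending point yields the violated lifted hyperplane constraint, which is precisely what the separation oracle of the underlying implicit \ULP requires.

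Having verified both ingredients, I would apply \lemref{cutting} in the implicit setting of \remref{implicit}: since the true instance (with the unknown true colors substituted in) is feasible, the algorithm terminates with a feasible point $\pd$, whose decoded ball $\ball'$ satisfies $\range \subseteq \ball'$ and $\ball' \cap \PS$ is entirely red, i.e.\ monochromatic. The main delicate point is the implicitness: the subroutine never sees the colors of most points, yet must still guarantee that the oracle's answers are consistent with \emph{some} coloring compatible with a monochromatic cover. This is exactly the content of \remref{implicit}, so once the separation oracle is shown to answer correctly on the ``real'' instance, correctness of the subroutine follows immediately.
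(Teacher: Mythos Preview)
Your approach is essentially the paper's: argue that the lifted instance is feasible because the hypothesized ball $\ball$ witnesses it, check that colored \NN queries realize the separation oracle, and invoke \lemref{cutting} via \remref{implicit}. The paper's own proof is a two-sentence version of exactly this.

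There is one inaccuracy in your setup that you should fix. You write that an unlabeled point ``secretly prefers one of the two halfspaces depending on whether it turns out to be red (the ball must contain it) or blue (the ball must exclude it).'' This is not the instance the subroutine solves. Red points of $\PS$ that are \emph{not} in $\range$ impose no constraint whatsoever --- the ball is free to contain them or not. The real instance $\I$ in the sense of \remref{implicit} consists only of the committed ``above'' constraints from $\range$ together with the (implicit) ``below'' constraints from the blue set $\BSet$; all other points are absent from $\I$. Under your stated formulation the instance would generally be \emph{infeasible}, since a monochromatic ball containing $\range$ need not contain every red point of $\PS$. Fortunately, the separation oracle you describe --- check $\range \subseteq \ball'$ explicitly, then issue a blue \NN query --- is the oracle for the correct instance, not the one you described, so your later paragraphs are fine. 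Once you correct the description of which constraints belong to $\I$, the feasibility claim and the invocation of \lemref{cutting} go through exactly as you outlined.
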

\begin{proof}
    Clearly, the (implicit) undecided \LP instance being created by
    the subroutine is feasible, and as described, the \NN queries on
    the blue points provide a separation oracle for this instance. As
    such, the undecided \LP would return a feasible point, which
    corresponds to the desired ball.
\end{proof}

\begin{lemma}
    \lemlab{2:5}%
    Under the assumption that the input can be covered by $\kopt$
    monochromatic balls, an iteration of the above algorithm succeeds
    with probability close to one, and computes a monochromatic ball
    that covers at least $2/(5\kopt)$ fraction of the uncolored
    points.
\end{lemma}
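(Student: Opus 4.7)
The plan is to show that, with high probability, the random sample $\Sample_i$ is a relative $(\eps,p)$-approximation of $\PS_i$ (for the range space of balls) with the right parameters, and then use the approximation property in both directions: first to argue that the subroutine will be invoked on a ``good'' canonical range, and then to argue that the ball returned by that subroutine must be heavy on $\PS_i$.

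Concretely, I set $\eps = 1/10$ and $p = 1/\kopt$. Since the \VC dimension of balls in $\Re^d$ is $d+1$, \thmref{relative} guarantees that a sample of size $O\bigl(\kopt(d \log \kopt + \log \BadProb^{-1})\bigr)$ is a relative $(\eps,p)$-approximation of $\PS_i$ with probability $\geq 1-\BadProb$. For the stated sample size $O(\kopt\log \kopt)$ this holds with probability close to one (taking $\BadProb$ small enough). Condition on this event for the rest of the argument.

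Next I identify a ``heavy'' ball from the optimal cover. Since $\PS_i$ is covered by the $\kopt$ optimal monochromatic balls (restricted to $\PS_i$), pigeonhole yields at least one such ball $\ball^*$ with $\MeasureX{\ball^*} = |\ball^* \cap \PS_i|/n_i \geq 1/\kopt$, and all points of $\ball^* \cap \PS_i$ are the same color. The approximation property (first case, since $\MeasureX{\ball^*} \geq p$) gives
\begin{equation*}
   \sMeasureY{\ball^*}{\Sample_i} \;\geq\; (1-\eps)\MeasureX{\ball^*} \;\geq\; \frac{9}{10\kopt} \;>\; \frac{1}{2\kopt}.
\end{equation*}
In particular the range $\range^* = \ball^* \cap \Sample_i$ lies in $\Family_i$, is monochromatic, and meets the threshold, so the algorithm invokes the monochromatic-ball subroutine on $\range^*$. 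Because $\ball^*$ itself is a valid witness, the subroutine returns some monochromatic ball $\ball$ with $\range^* \subseteq \ball$ (this is where I use \remref{implicit} together with the subroutine described in \secref{mono:ball}).

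Finally I turn the approximation around. From $\range^* \subseteq \ball$ we get $\sMeasureY{\ball}{\Sample_i} \geq \sMeasureY{\ball^*}{\Sample_i} \geq 1/(2\kopt)$. If $\MeasureX{\ball} \geq 1/\kopt$ we are already done; otherwise $\MeasureX{\ball} < p$, and the second case of the $(\eps,p)$-approximation property gives
\begin{equation*}
   \MeasureX{\ball} \;\geq\; \sMeasureY{\ball}{\Sample_i} - \eps p \;\geq\; \frac{1}{2\kopt} - \frac{1}{10\kopt} \;=\; \frac{2}{5\kopt}.
\end{equation*}
Thus $|\ball \cap \PS_i| \geq (2/(5\kopt))n_i$, and since the algorithm picks $\ball_i$ maximizing $|\ball_i \cap \PS_i|$ over all balls produced by the subroutine, the same bound holds for $\ball_i$. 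The main technical point—and the place I expect to spend the most care—is the choice of $\eps$ and $p$: the threshold $1/(2\kopt)$ in the algorithm must sit strictly between $(1-\eps)/\kopt$ (so that $\ball^*$ is captured on the sample side) and a value that still forces the returned ball to have population measure at least $2/(5\kopt)$ (via the additive slack $\eps p$). Taking $\eps = 1/10$ and $p = 1/\kopt$ threads this needle exactly.
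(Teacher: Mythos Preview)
Your proof is correct and follows essentially the same approach as the paper: use the sample as a relative $(\eps,p)$-approximation for balls, apply it forward to guarantee that the canonical set of a heavy optimal ball passes the $1/(2\kopt)$ threshold, then apply it backward to lower-bound the true measure of the ball returned by the subroutine. The only cosmetic differences are your parameter choice ($\eps=1/10$, $p=1/\kopt$ versus the paper's $\eps=1/4$, $p=1/(4\kopt)$) and your explicit case split on whether $\MeasureX{\ball}\geq p$, whereas the paper directly writes $\MeasureX{\ball}\geq \sMeasureY{\range}{\Sample}/(1+\eps)$, leaving the verification that the multiplicative regime applies implicit.
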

\begin{proof}
    In the beginning of the iteration there is a set
    $\QS \subseteq \PS$ of $n_i$ points that are not colored yet. As
    such, by assumption, there is a ball $\ball'$ in the optimal
    cover, which is monochromatic, and covers at least
    $\ceil{n_i/\kopt}$ points of $\QS$. Since the range space of balls
    in $\Re^d$ is of \VC dimension $d+1$, it follows that the sample
    $\Sample \subseteq \QS$ is a relative $(\eps,p)$-approximation for
    balls of $\QS$, where $p = 1/4\kopt$ and $\eps = 1/4$.  In
    particular, for this sample, with probability close to one, we
    have that
    $\sMeasureX{\ball'} \geq ( 1 - 1/4)\MeasureX{\ball'} \geq (1- 1/4)
    /\kopt$. This implies that
    $\sMeasureY{\ball'}{\Sample} \geq 1/(2\kopt)$.  As such, if the
    sample is successful, the algorithm would find a canonical set
    $\range$ of the sample that contains at least $1/(2\kopt)$
    fraction of the sample, and this canonical set has a monochromatic
    ball that contains the points of $\range$.
    
    As such, the subroutine would return a monochromatic ball $\ball$
    that contains $\range$. But then, by the relative approximation
    property of the sample, we have that
    \begin{equation*}
        \cardin{\ball \cap \QS}%
        = %
        \MeasureX{\ball} \cardin{\QS}
        \geq%
        \frac{1}{1+\eps}\sMeasureY{\range}{\Sample}\cardin{\QS}
        = %
        \frac{4}{5} \cdot \frac{1}{2\kopt}\cardin{\QS}
        \geq%
        \frac{2}{5\kopt} \cardin{\QS}.%
        \SoCGVer{\qedhere}
    \end{equation*}
\end{proof}

\begin{theorem}
    \thmlab{main}%
    Let $\PS$ be a set of $n$ points in $\Re^d$, such that there are
    (unknown) $\kopt$ monochromatic balls that cover all the points of
    $\PS$. Furthermore, assume we are given an oracle that can answer
    \NN colored queries on $\PS$. Then, one can compute the color of
    all the points of $\PS$ using
    \begin{math}
        O\pth{\kopt^{d+2} \log^{2d+3} n}
    \end{math}
    queries (this bound holds in expectation). The expected running
    time of the algorithm is
    \begin{math}
        O\bigl(\kopt^{d+2}  n \log^{d+1} \kopt \bigr).
    \end{math}
\end{theorem}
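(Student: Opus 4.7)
The plan is to combine Lemma~\lemref{2:5} (each iteration, with constant probability, finds a monochromatic ball covering at least a $2/(5\kopt)$-fraction of the still uncolored points) with a per-iteration accounting of queries and time, and then multiply by the number of iterations.

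First I would argue that the algorithm terminates in $O(\kopt \log n)$ successful iterations. Indeed, if iteration $i$ succeeds, then $n_{i+1} \leq (1 - 2/(5\kopt))\, n_i$, so after $O(\kopt \log n)$ such iterations $\PS_i$ is empty. Since an iteration can fail (the sample may not be a relative approximation, or the picked canonical set may fail to induce a monochromatic ball), the algorithm repeats the sampling inside the iteration until success; by the bound on the failure probability in Lemma~\lemref{2:5}, the expected number of repetitions is $O(1)$, preserving all the expected bounds.

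Next I would bound the per-iteration cost. Coloring the sample $\Sample_i$ of size $m = O(\kopt \log \kopt)$ costs $m$ colored \NN queries. By Lemma~\lemref{canonical}, the family $\Family_i = \FamilyX{\Sample_i}$ has $O(m^{d+1})$ canonical sets, and is computed in $O(m^{d+2})$ time. For each monochromatic canonical range $\range$ with sample weight $\geq 1/(2\kopt)$, the subroutine of Section~\secref{mono:ball} invokes the implicit undecided \LP{} algorithm of Lemma~\lemref{cutting} (see Remark~\remref{implicit}) on the $n$ lifted planes in $\Re^{d+1}$ produced by the lifting of Section~\secref{lifting}; each such invocation uses $O(\log^{d+1} n)$ \NN queries and $O_d(n)$ time. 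Summing over all canonical sets, one iteration costs $O\bigl(m^{d+1} \log^{d+1} n\bigr)$ queries and $O\bigl(m^{d+1} n\bigr)$ time (the $O(m^{d+2})$ preprocessing being absorbed). Multiplying by $O(\kopt \log n)$ iterations and using $\log \kopt \le \log n$ yields the claimed bound $O(\kopt^{d+2} \log^{2d+3} n)$ on queries, and the stated expected running time.

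The delicate point, which I would highlight, is plugging the subroutine cleanly into Lemma~\lemref{cutting}: the separation oracle required by the implicit undecided \LP{} in $\Re^{d+1}$ must be realizable using colored \NN queries. This works because a candidate lifted point $\pd$ encodes a ball $\ball$, and feasibility (``$\ball$ covers $\range$ and contains no blue point of $\PS$'') is testable by a single blue \NN query at the center of $\ball$ together with a straightforward check against $\range$; if the oracle rejects, the colored point it returns identifies a violated lifted constraint. This makes the implicit variant of Lemma~\lemref{cutting} applicable and ensures the subroutine never needs to know the colors of the still-unclassified points. Once this oracle implementation is verified, the two assertions follow immediately from the iteration-count argument above.
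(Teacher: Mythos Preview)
Your query-count analysis is correct and matches the paper's: $O(m^{d+1}\log^{d+1} n)$ queries per iteration, $O(\kopt\log n)$ iterations, and the substitution $\log\kopt\le\log n$ give $O(\kopt^{d+2}\log^{2d+3} n)$.

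The running-time analysis, however, does not establish the stated bound. You take the per-iteration cost to be $O(m^{d+1} n)$ and multiply by $O(\kopt\log n)$ iterations; that product is
\[
O\bigl((\kopt\log\kopt)^{d+1}\cdot n\cdot \kopt\log n\bigr)
= O\bigl(\kopt^{d+2}\, n\, \log^{d+1}\kopt\cdot \log n\bigr),
\]
which is a $\log n$ factor larger than the claimed $O(\kopt^{d+2}\, n\, \log^{d+1}\kopt)$. The inequality $\log\kopt\le\log n$ goes in the wrong direction here and cannot absorb this factor.

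The paper removes this extra $\log n$ by a geometric-sum argument you are missing: the work in an iteration is linear in the number $n_i$ of \emph{currently unlabeled} points (the implicit \ULP of \secref{mono:ball} is set up over those points), and by \lemref{2:5} $n_i$ drops by a constant factor every $O(\kopt)$ successful iterations. Hence the total running time is dominated by the first $O(\kopt)$ iterations, namely $O\bigl((\sSize^{d+2}+\sSize^{d+1} n)\kopt\bigr)=O(\kopt^{d+2} n\log^{d+1}\kopt)$, and the remaining blocks of $O(\kopt)$ iterations contribute a convergent geometric series. To repair your argument you need to (i) argue that the subroutine's linear cost is in $n_i$ rather than in $n$, and (ii) replace the naive product by this geometric summation.
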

\begin{proof}
    By \lemref{2:5}, at each successful iteration the algorithm
    decreases the number of uncolored points by a factor of
    $1-2/(5\kopt)$. As such, after $m = 1 + \ceil{(5\kopt/2)\ln n}$
    iterations, we have that
    
    \begin{equation*}
        n_i%
        \leq%
        n(1-\frac{2}{5\kopt})^m%
        < %
        n\exp \Bigl( -\frac{2}{5\kopt} \cdot \frac{5\kopt}{2} \Bigr)^{\ln
           n}%
        =%
        1,
    \end{equation*}
    since $1-x \leq \exp(-x)$.

    Since the probability of a successful iteration is at least half,
    in expectation (and also with high probability), the overall
    number of iterations performed by the algorithm is
    $O( \kopt \log n)$.
       
    Every iteration, the algorithm takes a sample of size
    $\sSize = O( \kopt \log \kopt )$, and performs
    $O( \kopt \log \kopt)$ \NN queries to color these points. The
    algorithm spends $O(\sSize^{d+2})$ time generating all the
    canonical sets associated with the sample, and it potentially
    performs $O( \sSize^{d+1})$ calls to the subroutine of
    \secref{mono:ball}.  Each such call takes linear time, but uses
    $O( \log^{d+1} n)$ oracle \NN calls. As such, the total number of
    \NN queries performed by the algorithm is
    \begin{equation*}
        O\pth{ \bigl( \kopt \log \kopt + \sSize^{d+1}
           \log^{d+1} n\bigr)  \kopt \log n }
        =%
        O\pth{\kopt^{d+2} \log^{2d+3} n}. 
    \end{equation*}
    The running time for the first $O(\kopt)$ iterations is
    \begin{math}
        O\pth{ \bigl( \sSize^{d+2}+ \sSize^{d+1}
           n \bigr)  \kopt  }
        =%
        O\pth{\kopt^{d+2}  n \log^{d+1} \kopt}.         
    \end{math}
    Since the number of unlabeled points shrinks (in expectation) by a
    constant factor every $O(\kopt)$ iterations, we get that the
    overall running time is proportional to the above bound.
\end{proof}

\section{Covering by triangles, and terrain simplification}

\subsection{Learning a cover by $k$ monochromatic %
   triangles}
\seclab{multi:kgon}

\paragraph*{Problem statement.}

The input is a set $\PS$ of $n$ red and blue colored
points. Furthermore, assume that the points of $\PS$ can be covered by
$k$ monochromatic triangles (i.e., all points covered by a single
triangle have the same color). We assume access to the following two
oracles:
\begin{compactenumA}[leftmargin=0.8cm]
    \smallskip%
    \item \textsc{Colored triangle oracle}: Given a query triangle
    $\triangle$ and a color $c$, if the color of all the points of
    $\triangle \cap \PS$ is $c$, it return so.  Otherwise, the oracle
    returns a point in $\triangle \cap \PS$ with its color being
    different than $c$.

    \smallskip%
    \item \textsc{Sampling oracle}: Given a partial cover
    $\triangle_1, \ldots, \triangle_i$, the oracle return a point
    randomly sampled from $\PS \setminus \bigcup_i \triangle_i$. If
    all points are covered, the oracle reports this.
\end{compactenumA}

\paragraph{The algorithm.}  For $i>0$, let $\PS_i$ denote the set of
unlabeled points at the beginning of the $i$\th iteration and let
$n_i=|\PS_i|$ (i.e., $\PS_1 =\PS$, and $n_1=n$). The algorithm
computes a random sample $\Sample_i\subseteq \PS_i$ of size
$O\pth{k\log k}$, and determines the color of the points in
$\Sample_i$ using oracle queries (each query is a tiny triangle
containing a single point).

Let $\LS_i$ be the set of lines induced by pairs of points of
$\Sample_i$. Observe that
$\cardin{\LS_i} = O( \cardin{\Sample_i}^2) = O(k^2 \log^2 k)$.  The
arrangement of $\ArrX{\LS_i}$ has
$O(\cardin{\LS_i}^2) = O( k^4 \log^4 k)$ vertices.  The algorithm then
iterates over $3$-tuples of vertices of $\ArrX{\LS_i}$. There are
$O\pth{k^{12} \log^{12} k}$ triangles defined by such tuples. For each
such triangle $\triangle$, the algorithm first verifies that it is
sufficiently heavy in the sample (i.e.,
$\sMeasureY{\triangle}{\Sample_i} \geq 1/(10k)$), and that the points
it covers in the sample are monochromatic. Then, the algorithm issues
a query to the oracle to verify that all the points covered by the
triangle in $\PS$ are of the same color. If so, the algorithm adds
$\triangle$ to the cover, and sets
$\PS_{i+1} \leftarrow \PS_i \setminus \triangle$. The algorithm then
continues to the next iteration.

If no good triangle is found, the algorithm retries by setting
$\PS_{i+1} \leftarrow \PS_i$, and continuing to the next iteration.
The algorithm stops once all points have been assigned their correct
color (i.e., $\PS_i =\emptyset$).

\subsubsection{Analysis}

Consider a sequence $\PS = \permut{ \pp_1, \ldots, \pp_{dk}}$ in
$\Re^d$. Its \emph{average} is the point
$\pc = \avgX{\PS} = \sum_i \pp_i / dk$. For
$i \in \IRX{k} = \{ 1, \ldots, k\}$, let $\hplane^+_i $ be the
halfspace whose boundary passes through the points
$\pp_{(i-1)k +1}, \ldots, \pp_{ik}$, and it contains $\pc$ in its
interior. The \emphi{polytope} induced by $\PS$ is
$\PTX{\PS} = \bigcap_i \hplane^+_i$.

\begin{lemma}
    \lemlab{2k-gon}%
    Let $\PS \subseteq \Re^d$ be a set of points, and let $\PT$ be a
    closed polytope in $\Re^d$ formed by the intersection of $k$
    halfspaces. Furthermore, assume that $\PS$ is contained in the
    interior of $\PT$.  Then, there is a sequence $\PSA$ of
    $k(d^2 -d+1)$ points of $\PS$, such that
    $\PS \subseteq \PTX{\PSA} \subseteq \PT$.
\end{lemma}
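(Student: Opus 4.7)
The plan is to replace each of the $k$ bounding halfspaces of $\PT$ by the \emph{tangent cone} of $C := \CHX{\PS}$ at a single well-chosen vertex. Write $\PT = \bigcap_{i=1}^k \hplane^+_i$ with outward unit normals $\nu_i$ and offsets $c_i$. Since $\PS$ lies in the interior of $\PT$, so does $C$. After a generic perturbation (which preserves the set-theoretic content of the claim), assume $C$ is simplicial, so every vertex of $C$ is incident to exactly $d$ facets.

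For each $i$, let $\pp^*_i \in \PS$ be the vertex of $C$ maximizing $\langle\,\cdot\,,\nu_i\rangle$, let $F_{i,1},\ldots,F_{i,d}$ be the facets of $C$ incident to $\pp^*_i$, and let $H_{i,j}$ be the closed halfspace supporting $C$ along $F_{i,j}$. Each $F_{i,j}$ is a $(d-1)$-simplex spanned by $\pp^*_i$ and $d-1$ of its neighbors in $C$, all points of $\PS$. Define the tangent cone $K_i := \bigcap_{j=1}^{d} H_{i,j}$, with apex $\pp^*_i$. Two containments drive the proof: (a) $\PS \subseteq C \subseteq K_i$, since each $H_{i,j}$ supports $C$; and (b) $K_i \subseteq \hplane^+_i$. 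For (b), any $\pp \in K_i$ has the form $\pp = \pp^*_i + v$ with $v$ in the tangent cone at the origin, which is the polar of the outward normal cone of $C$ at $\pp^*_i$. Since $\pp^*_i$ maximizes direction $\nu_i$, this direction lies in the normal cone, so $\langle v,\nu_i\rangle \leq 0$; combining with $\langle \pp^*_i,\nu_i\rangle < c_i$ (strict because $\pp^*_i$ is interior to $\PT$) yields $\pp \in \hplane^+_i$.

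Taking $\PTX{\PSA} := \bigcap_{i=1}^{k} K_i$ then gives $\PS \subseteq \PTX{\PSA} \subseteq \bigcap_i \hplane^+_i = \PT$ by (a) and (b). For the sequence length: each cone $K_i$ is bounded by $d$ facets, each a hyperplane through $d$ points of $\PS$, and all $d$ facets share the apex $\pp^*_i$. Listing $\pp^*_i$ once and then the $d-1$ non-apex points of each facet in turn contributes $1 + d(d-1) = d^2-d+1$ positions per $i$, for a total of $k(d^2-d+1)$ positions in $\PSA$; a generic sequence ordering also ensures that the average $\pc$ of $\PSA$ lies in the interior of each chosen supporting halfspace, matching the orientation convention in the definition of $\PTX{\cdot}$.

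The main technical obstacle is claim (b), the tangent-cone containment, which reduces to the polar duality between tangent and normal cones at a vertex of $C$ together with the observation that $\nu_i$ lies in the outward normal cone at the maximizer $\pp^*_i$. The simpliciality assumption ensures that $\pp^*_i$ has exactly $d$ incident facets so that $K_i$ is defined by exactly $d$ halfspaces; if $C$ is non-simplicial, one can perturb or pick a $d$-subset of incident facets whose outward normals still generate a sub-cone containing $\nu_i$, and the remainder of the argument goes through unchanged.
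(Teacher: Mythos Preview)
Your proof is correct and follows essentially the same approach as the paper: for each bounding halfspace of $\PT$, take the extremal vertex of $\CHX{\PS}$ in that direction (the paper phrases this as ``the closest vertex of $\Body$ to $\hsp_i^-$'', which is the same point) and use the $d$ facets of the hull incident to it, yielding $1+d(d-1)$ points per halfspace. The only structural difference is that the paper first proves the weaker $kd^2$ bound via Helly's theorem applied to the supporting halfspaces of $\Body$ and only then sharpens to $k(d^2-d+1)$ via the vertex argument, whereas you go directly to the sharp bound and give a more explicit justification of the key containment $K_i \subseteq \hplane^+_i$ through tangent/normal cone duality.
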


\begin{proof}
    The proof is illustrated in \figref{self:defined} (A)--(F).  Let
    $\Body = \CHX{\PS}$ be the convex-hull of $\PS$. Every
    $(d-1)$-dimensional face of $\Body$ has a hyperplane $\hspA$ that
    supports the face. The intersection of these halfspaces is
    $\Body$. Let $\hspA^+_1, \ldots \hspA^+_t$ be these (closed)
    halfspaces -- that is $\Body = \bigcap_i \hspA^+_i$.

    \begin{figure}[h]
        \begin{tabular}{*{4}{c}}
          \includegraphics[page=1]{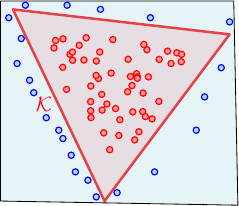}
          &%
            \includegraphics[page=2]{figs/self_defined}
          &%
            \includegraphics[page=3]{figs/self_defined}
          &%
            \includegraphics[page=4]{figs/self_defined}
          \\%
          (A) & (B) & (C) & (D) \\
          \includegraphics[page=5]{figs/self_defined}
          &%
            \includegraphics[page=6]{figs/self_defined}
          &%
            \includegraphics[page=7]{figs/self_defined}
          &%
            \includegraphics[page=8]{figs/self_defined}\\
          (E) & (F) & (G) & (H) 
        \end{tabular}
        \caption{Given a triangle $\PT$ that covers a set $\PS$ of $m$
           ``desired'' points, there is a triangle defined by nine
           points of $\PS$, such that this triangle covers at least
           $m/4$ of $\PS$, and is strictly contained inside the
           original triangle. }
        \figlab{self:defined}
    \end{figure}

    Let $\PT = \bigcap_{i=1}^k \hsp^+_i$, where $\hsp^+_i$ is a closed
    halfspace. Furthermore, let $\hsp_i^-$ be the complement (closed)
    halfspace to $\hsp^+_i$, for all $i$. Since
    $\hsp_i^- \cap \Body = \emptyset$, it follows by Helly's Theorem
    that there are $d$ halfspaces, say, $\hspA^+_1, \ldots, \hspA^+_d$
    such that $\hsp_i^- \cap \bigcap_{i=1}^d \hspA^+_i =
    \emptyset$. Let $\PS \cap \hspA_i$ be the set of $d$ points
    supporting this halfspace. We have that the sequence of $d^2$
    points
    $\PS \cap \hspA_1,\PS \cap \hspA_2, \ldots, \PS \cap \hspA_d$
    induces the cone $\bigcap_{i=1}^d \hspA^+_i$. In particular,
    compute the inducing sequences for each face of
    $\PT$. Concatenating these sequences, result in a sequence $\PSA$
    of $k d^2$ points, such that the induced polytope is an
    intersection of cones that avoids all the faces of $\PT$. That is,
    $\PS \subseteq \PTX{\PSA} \subseteq \PT$, as desired.

    The number of points in the sequence can be further reduced, by
    taking $v_i \in \PS$ to the closest vertex of $\Body$ to
    $\hspA_i^-$. Then there are $d$ halfspaces, whose boundary passes
    through $v_i$, such that their intersection avoids
    $\hspA_i^-$. Since they all share the common vertex $v_i$, the
    inducing sequence in this case is of length $d(d-1)+1$, or
    $k(d(d-1)+1)$ overall (the encoding of the inducing sequence is
    slightly different, but this is a minor issue which we ignore).
\end{proof}

\begin{lemma}
    \lemlab{1:10}%
    If the input can be covered by $k$ monochromatic triangles, then
    an iteration of the above algorithm succeeds with probability
    close to one, and computes a monochromatic triangle that covers at
    least $1/(10k)$ fraction of the unlabeled points.
\end{lemma}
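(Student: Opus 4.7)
The plan is to show that, with high probability, the sample $\Sample_i$ preserves the measure of triangles well enough that some explicitly constructed ``shrunk'' version of an optimal heavy triangle is among the candidates enumerated by the algorithm. First I observe that triangles in $\Re^2$ form a range space of constant \VC dimension, so by \thmref{relative} applied with an absolute constant $\eps$ and with $p = \Theta(1/k)$, a sample of size $O((1/\eps^2 p)(\log(1/p) + \log(1/\BadProb))) = O(k\log k)$ is a relative $(\eps,p)$-approximation for the measure of triangles on $\PS_i$ with probability close to one. Since the (unknown) optimal cover consists of $k$ monochromatic triangles, one of them, call it $\triangle^*$, satisfies $\MeasureX{\triangle^*} \geq 1/k$, and hence by relative approximation $\sMeasureY{\triangle^*}{\Sample_i} \geq (1-\eps)/k$.

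Next I promote $\triangle^*$ to a triangle $\triangle'$ enumerated by the algorithm, i.e., one whose three vertices are vertices of $\ArrX{\LS_i}$, while preserving $\Sample_i \cap \triangle^* \subseteq \triangle' \subseteq \triangle^*$. This is the inscribed-polytope idea of \lemref{2k-gon}, specialized to $k=3$ halfplanes in $\Re^2$. Concretely, let $C = \CHX{\Sample_i \cap \triangle^*}$, and for each of the three edges $e$ of $\triangle^*$ with outward normal $\vec{n}_e$, let $\ell_e$ be the line supporting the edge of $C$ that is extreme in direction $\vec{n}_e$. Each $\ell_e$ is determined by two points of $\Sample_i \cap \triangle^*$, hence $\ell_e \in \LS_i$. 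Let $\triangle' = \bigcap_e \hsp_e^+$ be the triangle formed by the three corresponding halfplanes chosen so that $C \subseteq \hsp_e^+$. Then $C \subseteq \triangle' \subseteq \triangle^*$ (the second inclusion since each $\ell_e$ lies in the interior-side halfplane of $e$), and the three vertices of $\triangle'$, being pairwise intersections of lines of $\LS_i$, are vertices of $\ArrX{\LS_i}$.

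The triangle $\triangle'$ now passes every test the algorithm imposes: heaviness holds because $\sMeasureY{\triangle'}{\Sample_i} \geq \sMeasureY{\triangle^*}{\Sample_i} \geq (1-\eps)/k \geq 1/(10k)$; monochromaticity in the sample and under the colored-triangle oracle hold because $\triangle' \subseteq \triangle^*$ and $\triangle^*$ is monochromatic with respect to $\PS$. Hence the iteration succeeds with probability close to one. Conversely, for soundness, every accepted triangle $\triangle$ has $\sMeasureY{\triangle}{\Sample_i} \geq 1/(10k)$; the relative approximation property rules out the case $\MeasureX{\triangle} < p$ (otherwise $\sMeasureY{\triangle}{\Sample_i} \leq \MeasureX{\triangle} + \eps p < 1/(10k)$ for $p$ and $\eps$ chosen small), so $\MeasureX{\triangle} \geq \sMeasureY{\triangle}{\Sample_i}/(1+\eps) \geq 1/(10k)$ after absorbing the constant factor into the heaviness threshold.

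The main obstacle is bookkeeping the constants: the heaviness threshold in the algorithm, the relative approximation parameters $(\eps,p)$, the sample size, and the final $1/(10k)$ fraction must all be tuned consistently; no individual piece is hard, but they must be chosen jointly. The combinatorial construction of $\triangle'$ is the conceptual heart of the argument and is clean in two dimensions because the convex hull of the sample points inside $\triangle^*$ directly supplies three lines of $\LS_i$ that ``hug'' $\triangle^*$ from inside.
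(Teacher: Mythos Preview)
Your overall plan --- relative approximation plus an inscribed candidate triangle --- is the right one, and matches the paper's structure.  The gap is in your construction of $\triangle'$.

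You define $\ell_e$ as ``the line supporting the edge of $C$ that is extreme in direction $\vec{n}_e$'' and then assert ``each $\ell_e$ is determined by two points of $\Sample_i \cap \triangle^*$''.  But the supporting line of a convex polygon $C$ with a \emph{prescribed} outward normal $\vec{n}_e$ generically touches $C$ at a single vertex, not along an edge; so $\ell_e$ passes through only one sample point and is not in $\LS_i$.  If you instead take $\ell_e$ to be the line through some actual edge of $C$ (so that $\ell_e \in \LS_i$), you lose control of its normal direction, and the containment $\triangle' = \bigcap_e \hsp_e^+ \subseteq \triangle^*$ can fail: three arbitrary edges of $C$ need not cut out a triangle that stays inside $\triangle^*$.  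Either reading breaks the argument.

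The paper handles this by not insisting on a triangle at this step.  Invoking \lemref{2k-gon} with $d=2$ and $k=3$, for each side $e$ of $\triangle^*$ one takes the extreme vertex $v$ of $C$ in direction $\vec{n}_e$ and the \emph{two} edges of $C$ incident to $v$; the intersection of the two corresponding halfplanes is a cone that contains $C$ and avoids the complement of the $e$-halfplane.  Doing this for all three sides yields a hexagon $\PTX{\PSA}$ with $\Sample_i \cap \triangle^* \subseteq \PTX{\PSA} \subseteq \triangle^*$, all of whose sides lie on lines of $\LS_i$.  A fan triangulation splits this hexagon into four triangles with vertices in $\ArrX{\LS_i}$; at least one of them carries a $1/(4k)$-fraction of the sample, and the rest of your argument (heaviness test, monochromaticity, relative-approximation lower bound on $\MeasureX{\triangle}$) then goes through exactly as you wrote it.
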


\begin{proof}	
    In the beginning of the $i$\th iteration there is a set
    $\PS_i \subseteq \PS$ of $n_i$ points that are uncolored. As such,
    by assumption, there is a triangle $\triangle_i$ in the optimal
    cover which is monochromatic, and covers at least
    $\ceil{n_i/\kopt}$ points of $\PS_i$.
	
    \lemref{2k-gon} implies that there is a sequence $\PSA_i$ of $9$
    points of $\Sample_i \cap \triangle_i$, such that the hexagon it
    induces, denoted by $\PTX{\PSA_i}$, has the property that
    $\Sample_i \cap \triangle_i \subseteq \PTX{\PSA_i} \subseteq
    \triangle_i$. This hexagon can be triangulated by a bottom vertex
    triangulation into 4 triangles, each of which is induced by a
    triplet of vertices of $\ArrX{\LS_i}$ and is thus examined by the
    algorithm. In particular, one of these 4 triangles must contain at
    least $1/4\kopt$-fraction of the points of $\PS_i$, and all the
    points of $\PS$ (and thus of $\PS_i$) that this triangle covers
    are of the same color.  See \figref{self:defined}.

    By \thmref{relative}, we have that for a suitable choice of
    constants, $\Sample_i$ is a
    $\pth{ \frac{1}{4}, \frac{1}{16k}}$-relative approximation of the
    range space of triangles over the set $\PS_i$.  Due to that, the
    heaviest of the 4 triangles contains enough sample points to
    trigger a colored triangle oracle query by the algorithm, and will
    thus be added to the solution, as all the points in $\PS$ it
    covers are of the same color.  Thus, as in \lemref{2:5}, the
    ``heavy'' monochromatic triangle returned by the algorithm in this
    iteration covers $\Omega( n_i / k )$ points of $\PS_i$.
\end{proof}

\begin{lemma}
    The above algorithm has the following performance guarantees with
    high probability:
    \begin{compactenumA}[leftmargin=1.2cm]
        \item It performs $O( k \log n)$ iterations.

        \item Its overall running time is
        $O\pth{k^{14} \log^{13} k \log n}$
        
        \item It performs $O( k^2 \log k \log n)$ sampling oracle
        queries.

        \item It performs $O( k^{13} \log^{12} k\log n)$ colored
        triangle oracle queries.
    \end{compactenumA}
\end{lemma}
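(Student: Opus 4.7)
The plan is to derive (A) from \lemref{1:10} via a Chernoff-style bound on the number of successful iterations, and then to obtain (B), (C), and (D) as bookkeeping built on top of (A). By \lemref{1:10}, each iteration succeeds with at least some constant probability $c$, and a successful iteration shrinks the number of uncolored points by a factor of $1 - 1/(10k)$. Thus $m = \Theta(k \log n)$ successful iterations suffice to drive $n_i$ below $1$, since $n(1-1/(10k))^m \leq n\exp(-m/(10k)) < 1$ for an appropriate choice of constants. A standard Chernoff bound on the indicators of successful iterations then shows that, among $\Theta(k \log n)$ total iterations, a constant fraction succeed with high probability, establishing (A).

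Given (A), parts (C) and (D) reduce to counting oracle calls per iteration and multiplying. For (C), each iteration makes $O(k \log k)$ sampling-oracle calls to draw $\Sample_i$, yielding $O(k^2 \log k \log n)$ calls overall. For (D), each iteration first uses $O(k \log k)$ singleton-triangle queries to expose the colors of $\Sample_i$, and then issues at most one colored-triangle query per candidate triangle, that is, per vertex triple of $\ArrX{\LS_i}$; since the arrangement $\ArrX{\LS_i}$ has $O(k^4 \log^4 k)$ vertices and hence $O(k^{12} \log^{12} k)$ vertex triples, the total number of colored-triangle queries is $O(k^{13} \log^{12} k \log n)$.

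For (B), the per-iteration running time is dominated by enumerating the $O(k^{12} \log^{12} k)$ candidate triangles and, for each, verifying the heaviness condition $\sMeasureY{\triangle}{\Sample_i} \geq 1/(10k)$ and that the sample points it covers are monochromatic; both checks take $O(k \log k)$ time by scanning $\Sample_i$. This gives $O(k^{13} \log^{13} k)$ per iteration and $O(k^{14} \log^{13} k \log n)$ overall. The cost of constructing $\LS_i$ and $\ArrX{\LS_i}$ is only $O(k^4 \log^4 k)$ and is absorbed. The only real subtlety I anticipate is promoting the ``close to one'' per-iteration success probability provided by \lemref{1:10} into a high-probability bound on the overall iteration count; this is precisely what the Chernoff argument in the first paragraph delivers, after which the remaining bounds are routine accounting.
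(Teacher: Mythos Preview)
Your argument is correct and follows essentially the same route as the paper: derive the $O(k\log n)$ iteration bound from \lemref{1:10}, then multiply the per-iteration counts (sample size $O(k\log k)$, candidate triangles $O(k^{12}\log^{12}k)$, per-triangle check $O(k\log k)$) by the number of iterations to obtain (B), (C), and (D). Your explicit Chernoff justification for the high-probability claim in (A) is a small elaboration beyond what the paper spells out, but the underlying accounting is identical.
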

\begin{proof}
    By \lemref{1:10} each iteration of the algorithm, with probability
    close to one, labels $\Omega(1/k)$ fraction of the remaining
    uncolored points. It follows that with high probability (and in
    expectation) the algorithm is done after $O( k\log n )$
    iterations.

    The number of triangles tested in each iteration is
    $\alpha_1 = O\pth{k^{12} \log^{12} k}$. For each such triangle,
    verifying that it is heavy and monochromatic, for the sample,
    takes $\alpha_2 = O( k \log k )$ time. The algorithm issues
    potentially two colored triangle queries for such triangle. As the
    algorithm performs $\alpha_3 = O( k \log n)$ iterations, we
    conclude that the algorithm performs
    $\alpha_3 \alpha_1 = O( k^{13} \log^{12} k\log n)$ colored
    triangle oracle queries, and
    $O(\cardin{\Sample_i} \alpha_3) = O( k^2 \log k \log n)$ sampling
    oracle queries. The running time of the algorithm is
    $O(\alpha_1 \alpha_2 \alpha_3) = O\pth{k^{14} \log^{13} k \log
       n}$.
\end{proof}

\begin{theorem}
    \thmlab{cover:by:triangles}%
    {(A)} The input is a set $\PS \subseteq \Re^2$ of $n$ unlabeled
    points that can be covered by $k$ monochromatic triangles, such
    that the access to $\PS$ is only via a colored triangle and
    sampling oracles.  One can compute a cover of $\PS$ by
    $O( k \log n)$ triangles, with the following bounds (which hold
    with high probability):
    \begin{compactenumI}[leftmargin=1.2cm]
        \item It performs $O( k \log n)$ iterations.

        \item Its overall running time is
        $O\pth{k^{14} \log^{13} k \log n}$
        
        \item It performs $O( k^2 \log k \log n)$ sampling oracle
        queries.

        \item It performs $O( k^{13} \log^{12} k\log n)$ colored
        triangle oracle queries.
    \end{compactenumI}
    \medskip%

    \noindent%
    {(B)} If the input is an explicitly colored set $\PS$ of
    $n$ points in the plane, that can be covered by $k$ monochromatic
    triangles, then one can compute a cover of it by $O( k \log n)$
    triangles in $O( n k^{13}\log^{13} k\log n)$ expected time.
\end{theorem}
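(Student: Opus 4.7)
\emph{Part (A)} follows immediately from the preceding lemma, which establishes all four performance bounds verbatim. Correctness --- that the $O(k \log n)$ triangles output form a valid monochromatic cover of $\PS$ --- is clear because a triangle is added to the cover only after the colored triangle oracle has confirmed it is monochromatic on $\PS$, and \lemref{1:10} guarantees termination by ensuring each successful iteration exposes the labels of a $\Omega(1/k)$ fraction of the remaining uncolored points.

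\emph{Part (B).} The plan is to implement the two oracles in (amortized) linear time when the colors of $\PS$ are known explicitly, and then multiply through. The triangle oracle is implemented by a single $O(n)$ scan over $\PS$: return either the common color of $\triangle \cap \PS$, or a witness pair of differently-colored points inside $\triangle$. For the sampling oracle, maintain the set $\PS_i$ of currently uncovered points explicitly as an array; whenever a new triangle is added to the cover, an $O(n)$ scan removes from $\PS_i$ the points it covers, aggregating to $O(nk \log n)$ over the $O(k \log n)$ iterations. A uniform sample from $\PS_i$ can then be drawn in $O(1)$ time.

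Plugging these implementations into the bounds of Part (A), the $O(k^{13} \log^{12} k \log n)$ colored triangle queries now cost $O(n k^{13} \log^{12} k \log n)$ in total; the $O(k^2 \log k \log n)$ sampling queries contribute a lower-order $O(n k^2 \log k \log n)$; the maintenance of $\PS_i$ contributes $O(n k \log n)$; and the oracle-free internal running time from Part (A) is $O(k^{14} \log^{13} k \log n)$. The triangle-oracle cost dominates the sum (up to a $\log k$ factor), yielding the claimed expected running time of $O(n k^{13} \log^{13} k \log n)$. There is no real obstacle here beyond bookkeeping: all the probabilistic and geometric content has been absorbed into the preceding lemmas, and Part (B) is essentially a matter of charging each oracle call its linear implementation cost and summing.
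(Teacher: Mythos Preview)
Your proposal is correct and matches the paper's approach essentially verbatim: Part~(A) is deferred to the preceding lemma, and Part~(B) is obtained by implementing each oracle na\"ively with an $O(n)$ scan and multiplying through. You are in fact more careful than the paper, which simply asserts the final bound without spelling out the arithmetic or the maintenance of $\PS_i$.
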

\begin{proof}
    Part {(A)} follows from the above.  As for part {(B)}, we
    implement the two oracles \naively, by scanning the colored input
    and carrying out the desired task. Since each oracle query can be
    done in $O(n )$ time, the above bound follows.
\end{proof}

\subsection{Terrain simplification}
\seclab{terrain}

\paragraph*{The problem.} %
Given a set $\PS$ of $n$ points in $\Re^3$ sampled from an unknown
function $f(x,y)$, find a piecewise-linear function $g(x,y)$, such
that for all $\pp=(x,y,f(x,y))\in \PS$, we have
$|f(x,y)-g(x,y)| \leq \epsH$, for some prespecified $\epsH>0$.

In the following, we assume that there exists a piecewise linear
function made out of $\kopt$ triangles that provides the desired
approximation, that is -- the point set has a
\emphi{$(\kopt,\epsH)$-terrain}.

\paragraph{Notations.} %
For a point $\pp \in \Re^3$, let $\prjX{\pp}$ be its projection to the
$xy$-plane. Similarly, for a set $X \subseteq \Re^3$, let
$\prjX{X} = \Set{ \prjX{\pp}}{\pp \in X}$.

\begin{defn}
    Let $\PS\subseteq \Re^3$ be a set of points.  A triangle
    $\triangle' \subseteq \Re^3$ is \emphi{$\epsH$-admissible} if all
    the points of $\PS$ that are vertically above/below it, are in
    vertical distance at most $\epsH$ from it.
\end{defn}

\begin{defn}
    A triangle
    $\triangle \subseteq \Re^2$ \emphi{covers} a set
    $S \subseteq \PS$, if
    \begin{compactenumi}[leftmargin=1cm]
        \item there exists a lifted three-dimensional triangle
        $\triangle' \subseteq \Re^3$, such that
        $\triangle = \prjX{\triangle'}$,

        \item $\prjX{S} \subseteq \triangle$,

        \item and $\triangle'$ is $\epsH$-admissible.
    \end{compactenumi}
\end{defn}

\begin{defn}
    A set $\TriSet$ of triangles in $\Re^2$ is a \emphi{$\epsH$-cover}
    of $\PS$, if every point of $\PS$ is covered by some triangle of
    $\TriSet$.
\end{defn}

Given a $\epsH$-cover of $\PS$ by $t$ triangles, it can be made into a
terrain by triangulating the arrangement created by the triangles, and
lifting the resulting triangles back to three dimensions. The
resulting terrain has $O(t^2)$ triangles.

\paragraph{Oracle access:} %
In the following we assume we have two oracles available:
\begin{compactenumI}
    \item \textsc{Validate triangle:} Check if a triangle is
    admissible.  Given a triangle $\triangle \subseteq\Re^2$ and
    $\epsH$, decide if the points of $\PS$ with their projection
    contained in $\triangle$ can be $\epsH$-approximated by a lifting
    of $\triangle$ to $\Re^3$. This oracle can be implemented via
    linear programming in linear time.

    \item \textsc{Sample uncovered point.} Given a collection of
    ($\epsH$ valid) triangles $\triangle_1, \ldots, \triangle_t$,
    return a random point of $\PS$ that is not $\epsH$-covered by
    these triangles.
\end{compactenumI}

\begin{theorem}
    \thmlab{t:simp}%
    Let $\PS$ be a set of $n$ points in $\Re^3$ that has a
    $(\kopt,\epsH)$-terrain (both $\kopt$ and $\epsH$ are provided),
    where the access to data is provided by the above two oracles.
    One can compute a $\epsH$-cover of $\PS$ with $O( \kopt \log n)$
    triangles. Alternatively, one can compute a
    $\bigl(O(\kopt^{2}\log^2 n), \epsH)$-terrain for $\PS$. The
    running time of the algorithm is
    $O\pth{\kopt^{14} \log^{13} \kopt \log n}$ (which also bounds the
    number of oracle queries performed).

    Without oracle access, the algorithm can be implemented to run in
    $O\pth{n \kopt^{14} \log^{13} k \log n}$ time.
\end{theorem}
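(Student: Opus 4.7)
The plan is to apply the algorithm underlying \thmref{cover:by:triangles} essentially verbatim, using \emph{validity} of a triangle in the terrain setting as the analog of \emph{monochromaticity} in the colored setting. Specifically, one projects $\PS$ to the $xy$-plane and works with planar triangles; a 2D triangle $\triangle$ is declared ``good'' if the validate-triangle oracle accepts it, i.e., there exists a lifted 3D triangle with projection $\triangle$ that $\epsH$-approximates every point of $\PS$ whose projection lies in $\triangle$. The crucial observation is that this property is \emph{hereditary}: any sub-triangle of a good triangle is good, since restricting the same lifted 3D triangle to a sub-region still $\epsH$-approximates the projected points that fall inside it. This exactly mirrors the property ``a subset of a monochromatic set is monochromatic'' that drove the earlier algorithm.

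First I would execute the cover-by-triangles algorithm of \thmref{cover:by:triangles} on $\prjX{\PS}$, with the validate-triangle oracle substituting for the colored-triangle oracle and the sampling oracle reused as is. The assumption that $\PS$ admits a $(k,\epsH)$-terrain guarantees that $\prjX{\PS}$ is covered by $k$ good triangles, which is the analog of the ``$k$ monochromatic triangles'' hypothesis of \thmref{cover:by:triangles}. The sampling step, the enumeration of candidate triangles over triples of arrangement vertices, the hexagon argument of \lemref{2k-gon}, and the relative approximation bound of \lemref{1:10} all go through unchanged, using heredity in place of the color-inheritance argument. The output is a collection of $O(k \log n)$ good 2D triangles covering $\prjX{\PS}$.

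To turn these $O(k \log n)$ good 2D triangles into a genuine terrain, compute their planar arrangement, which has $O(k^2\log^2 n)$ faces, triangulate each face, and lift each resulting small triangle vertically using the 3D lifting of some good 2D triangle that contains it. Each such small triangle inherits validity, so the construction yields a piecewise linear function that $\epsH$-approximates every point of $\PS$, i.e., an $\bigl(O(k^2\log^2 n),\epsH\bigr)$-terrain.

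The main obstacle is the reliance on heredity of validity inside \lemref{2k-gon}: when the inducing hexagon of nine sample points from an optimum good triangle is triangulated into four sub-triangles, we need each of them to remain good so that the validate-triangle oracle will accept the heavy one the algorithm enumerates. Heredity makes this immediate. With that in place, the complexity analysis is inherited from \thmref{cover:by:triangles}: assuming oracle access, the running time and total number of oracle queries are both $O\pth{k^{14}\log^{13} k\log n}$. Without oracle access, each oracle can be implemented in $O(n)$ time --- the validate-triangle oracle via a linear program on the at most $n$ relevant lifted points, and the sampling oracle by a single scan of $\PS$ --- producing the bound $O\pth{n\,k^{14}\log^{13} k\log n}$.
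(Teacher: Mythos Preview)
Your proposal is correct and follows essentially the same approach as the paper: apply the algorithm of \thmref{cover:by:triangles} with triangle validity in place of monochromaticity, then triangulate the planar arrangement of the resulting $O(k\log n)$ triangles and lift to obtain the terrain. Your explicit articulation of the heredity property---that any sub-triangle of a valid triangle is itself valid, since the same lifted plane works---is a detail the paper leaves implicit but is precisely what makes the hexagon argument of \lemref{2k-gon} and \lemref{1:10} carry over.
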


\begin{proof}
    Surprisingly, the same algorithm as \thmref{cover:by:triangles}
    applies readily. The check whether or not a triangle is admissible
    (i.e., in the previous settings, this was checking if all the
    points in the triangle are all red or are all blue) is now reduced
    to a single validate triangle oracle query.

    Once the coverage by $k$ triangles is computed, one can compute
    the associated terrain by triangulating the arrangement of the
    projected triangles, and lifting the triangles back to three
    dimensions.

    Since the two oracles can be implemented in $O(n k \log n)$ time,
    as the cover has at most $O(k \log n)$ triangles, the bound on the
    running time follows, as each oracle call can be implemented in
    linear time, as described above.
\end{proof}

\paragraph*{Acknowledgments}

The authors thank Pankaj Agarwal for useful discussions.  We thank Lev
Reyzin for pointing out the work by Maass and \Turan
\cite{mt-albolgc-94, mt-hwtgl-94}.

\BibTexMode{%
   \SoCGVer{%
      \bibliographystyle{plain}%
   }%
   \NotSoCGVer{%
      \bibliographystyle{alpha}%
   }%
   \bibliography{undecided}%
}

\BibLatexMode{\printbibliography}

\end{document}